\newcommand{\OptCCVal}{\mathsf{OPT}_{\mathsf{C}}}
\def\cAOPT{{\cal A}_{\mathsf{OPT}}}
\newcommand{\local}{${\mathsf{LOCAL}}$}
\newcommand{\congest}{${\mathsf{CONGEST}}$}
\newcommand{\dist}{\mbox{\rm dist}}
\newcommand{\OptimalCycleCover}{\mathsf{OptimalCycleCover}}
\newcommand{\OptimalEdgeCycleCover}{\mathsf{OptimalEdgeCycleCover}}
\newcommand{\dilation}{\mbox{\tt d}}
\newcommand{\congestion}{\mbox{\tt c}}
\newcommand{\densitythreshold}{\mbox{\tt b}}
\newcommand{\density}{\phi}
\newcommand{\NonTreeMinorClosed}{\mathsf{NonTreeMinorClosed}}
\newcommand{\supergraph}{\widetilde{G}}
\newcommand{\NonTreeCover}{\mathsf{NonTreeCover}}
\newcommand{\TreeCover}{\mathsf{TreeCover}}
\newcommand{\Partition}{\mathsf{Partition}}
\newcommand{\DistTreeCover}{\mathsf{DistTreeCover}}
\newcommand{\EdgeDisjointPath}{\mathsf{TreeEdgeDisjointPath}}
\newcommand{\GraphCover}{\mathsf{CycleCover}}
\newcommand{\SimplifyCycles}{\mathsf{SimplifyCycles}}
\newcommand{\swap}{\mathsf{Swap}}
\newcommand{\Diam}{\mathsf{Diam}}
\newcommand{\depth}{\mathsf{Depth}}
\def\cA{{\cal A}}
\def\cC{{\cal C}}
\def\cN{{\cal N}}
\renewcommand{\paragraph}[1]{\vspace{0.15cm}\noindent {\bf #1}}
\titlespacing*{\section}{0pt}{1.1\baselineskip}{\baselineskip}
\titlespacing*{\subsection}{-5pt}{\baselineskip}{\baselineskip}
\newtheorem{theorem}{Theorem}
\newtheorem{fact}{Fact}
\newtheorem{observation}{Observation}
\newtheorem{definition}{Definition}
\newtheorem{lemma}{Lemma}
\newtheorem{claim}{Claim}
\crefname{claim}{Claim}{Claims}
\crefname{observation}{Observation}{Observations}
\newcommand{\A}{{\mathcal A}}
\newcommand{\bit}{\{0,1\}}
\newcommand{\ie}  {i.e.,\ }
\newcommand{\eg}  {e.g.,\ }
\newcommand{\poly}{\mathsf{poly}}
\newcommand{\ith}[1]{{#1}\textsuperscript{th}}
\newcommand{\ignore}[1]{}
\newcommand{\OPT}{{\mathsf{OPT}}}
\title{Low Congestion Cycle Covers and Their Applications}
\author{Merav Parter\thanks{Department of Computer Science and
		Applied Mathematics, Weizmann Institute of Science, Israel.  Supported 
		in part by grants from the Israel Science Foundation (no.\ 2084/18).}
	\and Eylon Yogev\footnotemark[1]
}
\date{}
\begin{document}
\maketitle
\begin{abstract}
A cycle cover of a bridgeless graph $G$ is a collection of simple cycles in $G$ such that each edge $e$ appears on at least one cycle. The common objective in cycle cover computation is to minimize the total lengths of all cycles.
Motivated by applications to distributed computation, we introduce the notion of \emph{low-congestion} cycle covers, in which all cycles in the cycle collection are both \emph{short} and nearly \emph{edge-disjoint}. Formally, a $(\dilation,\congestion)$-cycle cover of a graph $G$ is a collection of cycles in $G$ in which each cycle is of length at most $\dilation$ and each edge participates in at least one cycle and at most $\congestion$ cycles. 

A-priori, it is not clear that cycle covers that enjoy both a small overlap and 
a short cycle length even exist, nor if it is possible to efficiently find them. 
Perhaps quite surprisingly, we 
prove the following: Every bridgeless graph of diameter $D$ admits a $(\dilation,\congestion)$-cycle cover where $\dilation = \tilde{O}(D)$ and $\congestion=\tilde{O}(1)$. That is, the edges of $G$ can be covered by cycles such that each cycle is of 
length at most $\widetilde{O}(D)$ and each edge participates in at most 
$\widetilde{O}(1)$ cycles. These parameters are existentially tight up to polylogarithmic terms.

Furthermore, we show how to extend our result to achieve universally optimal cycle covers. Let $C_e$ is the length of the shortest cycle that covers $e$, and let $\OPT(G)= \max_{e \in G} C_e$. We show that every bridgeless graph admits a $(\dilation,\congestion)$-cycle cover where $\dilation = \tilde{O}(\OPT(G))$ and $\congestion=\tilde{O}(1)$.

We demonstrate the usefulness of low congestion cycle covers in different settings of resilient computation. For instance, we consider a Byzantine fault model where in each round, the adversary chooses a single message and corrupt in an arbitrarily manner. 
We provide a compiler that turns any $r$-round distributed algorithm for a 
graph $G$ with diameter $D$, into an equivalent fault tolerant algorithm with 
$r\cdot \poly(D)$ rounds.
\end{abstract}

\thispagestyle{empty}
\newpage
\tableofcontents
\thispagestyle{empty}
\newpage

\setcounter{page}{1}
\section{Introduction}
A cycle cover of a graph $G$ is a collection of cycles such that each edge of $G$ appears in at least one of the cycles. Cycle covers were introduced by Itai and Rodeh \cite{itai1978covering} in 1978 with the objective to cover all edges of a bridgeless\footnote{A graph $G$ is bridgeless, if any single edge removal keeps the graph connected.} graph with cycles of \emph{total} minimum length. 
This objective finds applications in centralized routing, robot navigation and fault-tolerant optical networks \cite{hochbaum2001bounded}. For instance, in the related Chinese Postman Problem, introduced in 1962 by Guan \cite{guan1962graphic,edmonds1973matching}, the objective is to compute the shortest tour that covers each edge by a cycle.  Szekeresand \cite{szekeres1973polyhedral} and Seymour \cite{Seymour79} independently have conjectured that every bridgeless graph has a cycle cover in which each edge is covered by exactly two cycles, this is known as the \emph{double cycle cover} conjecture. 
Many variants of cycle covers have been studied throughout the years from the combinatorial and the optimization point of views
\cite{fan1992integer,thomassen1997complexity,hochbaum2001bounded,immorlica2005cycle,blaser2005approximating,krivelevich2005approximation,manthey2009minimum,khachay2016approximability}. 


\subsection{Low Congestion Cycle Covers}
Motivated by various applications for \emph{resilient distributed computing}, we introduce a new notion of \emph{low-congestion} cycle covers: a collection of cycles that cover all graph edges by cycles that are both \emph{short} and almost \emph{edge-disjoint}. The efficiency of our low-congestion cover is measured by the key parameters of packet routing \cite{leighton1994packet}: dilation (length of largest cycle) and congestion (maximum edge overlap of cycles).
Formally, a $(\dilation,\congestion)$-cycle cover of a graph $G$ is a collection of cycles in $G$ in which each cycle is of length at most $\dilation$, and each edge participates in at least one cycle and at most $\congestion$ cycles. 
Using the beautiful result of Leighton, Maggs and Rao  \cite{leighton1994packet} and the follow-up of \cite{Ghaffari15}, a $(\dilation,\congestion)$-cycle cover allows one to route information on all cycles simultaneously in time $\widetilde{O}(\dilation+\congestion)$.

Since $n$-vertex graphs with at least $2n$ edges have girth $O(\log n)$, one can cover all but $2n$ edges in $G$, by 
edge-disjoint cycles of length $O(\log n)$ (e.g., by repeatedly omitting short cycles from $G$). 
For a bridgeless graph with diameter $D$, it is easy to cover the remaining graph edges with cycles of length $O(D)$, which is optimal (e.g., the cycle graph).
This can be done by covering each edge $e=(u,v)$ using the alternative $u$-$v$ shortest path in $G\setminus \{e\}$. Although providing short cycles, such an approach might create cycles with a large overlap, e.g., where a single edge appears on many (\eg $\Omega(n)$) of the cycles. Indeed, a-priori, it is not clear that cycle covers that enjoy both low congestion and short lengths, say $O(D)$, even exist, nor if it is possible to efficiently find them. 
Perhaps surprisingly, our main result shows that such covers exist and in 
particular, one can enjoy a dilation of $O(D\log n)$ while incurring only a 
poly-logarithmic congestion.
\begin{theorem}[Low Congestion Cycle Cover]\label{thm:cyclecover_upper}
Every bridgeless graph with diameter $D$ has a 
$(\dilation,\congestion)$-cycle cover where $\dilation=\widetilde{O}(D)$ and 
$\congestion = \widetilde{O}(1)$.
That is, the edges of $G$ can be covered by cycles such that each cycle is of 
length at most $\widetilde{O}(D)$ and each edge participates in at most 
$\widetilde{O}(1)$ cycles. 
\end{theorem}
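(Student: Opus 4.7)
The plan is to build the cover on top of a BFS tree $T$ of $G$ rooted at an arbitrary vertex, splitting the task between covering non-tree and tree edges. Every non-tree edge $e'$ has a fundamental cycle $C_{e'} \subseteq T \cup \{e'\}$ of length at most $2D+1$, and for each tree edge $e$ the set $S(e)$ of non-tree edges crossing the cut $T - e$ is nonempty by bridgelessness. A tree edge $e$ is covered by the family $\{C_{e'} : e' \in F\}$ iff $F \cap S(e) \neq \emptyset$, so the problem reduces to choosing a set $F \subseteq E \setminus E(T)$ of non-tree edges that hits every $S(e)$ and is ``balanced'' in the sense that no tree edge $f$ has too many of its $S(f)$-mates in $F$, while also covering the non-tree edges themselves.

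I would choose $F$ by a scale-based random sampling scheme. Bucket the tree edges by $\lceil \log_2 |S(e)| \rceil$, giving $O(\log n)$ buckets. For each tree edge $e$ in bucket $i$, sample each non-tree edge of $S(e)$ independently with probability $\Theta(\log n / 2^i)$, and take $F$ to be the union of all sampled edges. A direct Chernoff bound shows that every tree edge has a sampled $S$-mate with high probability, so coverage of tree edges follows from a union bound. Non-tree edges not placed in $F$ are covered in a cleanup step that joins each such edge to a neighboring already-sampled fundamental cycle via $O(D)$-length tree paths, producing one additional cycle of length $\widetilde{O}(D)$ per uncovered non-tree edge. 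Dilation $\widetilde{O}(D)$ is then immediate, since every produced cycle consists of a fundamental cycle of length $O(D)$ plus, in the cleanup case, $O(D)$ tree detours.

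The technical heart -- and main obstacle -- is the congestion bound. For an edge $f$, the number of chosen cycles containing $f$ equals $|F \cap S(f)|$ plus a small cleanup contribution, and the expected contribution from bucket $i$ is $\sum_{e \in \text{bucket } i} \Theta(\log n / 2^i) \cdot |S(e) \cap S(f)|$, which is not obviously $\widetilde{O}(1)$ since a single non-tree edge may lie in swap sets at many different scales. The control will come from the structural fact that for every non-tree edge $e' = (x,y)$, the set $\{e \in T : e' \in S(e)\}$ is exactly the $T$-path from $x$ to $y$, a contiguous subpath of $T$. A charging argument that assigns each non-tree edge to a canonical tree edge on its $T$-path per bucket, combined with the observation that $e' \in S(f)$ forces $f$ itself to lie on this same $T$-path, should telescope the per-bucket contribution to $\widetilde{O}(1)$. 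Summing over the $O(\log n)$ buckets yields total expected congestion $\widetilde{O}(1)$, and Chernoff concentration followed by a union bound over all edges of $G$ promotes this to a worst-case $\widetilde{O}(1)$ guarantee.
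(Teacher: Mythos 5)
Your proposal diverges substantially from the paper and, unfortunately, the key congestion claim is not merely unproven but false, so the gap is not fixable by fleshing out the ``charging argument.'' The proposal restricts the covering family essentially to fundamental cycles of the BFS tree $T$ (plus unspecified ``cleanup'' cycles), and tries to get low congestion by subsampling which non-tree edges have their fundamental cycle taken. The following counterexample kills this. Take a spider: a root $r$ with $k$ vertex-disjoint paths of length $D$ hanging from it, let $w_j$ denote the leaf of path $j$, and add the non-tree edges $(w_1,w_j)$ for $j=2,\dots,k$ (so $n \approx kD$, the graph is bridgeless, and the diameter is $\Theta(D)$). For every tree edge $e$ on path $j$ with $j \ge 2$ we have $|S(e)|=1$, hence $e$ sits in bucket~$0$ and samples its unique swap edge $(w_1,w_j)$ with probability $\Theta(\log n)$, i.e.\ essentially always; even after capping at $1$, the $D$ tree edges of path $j$ jointly sample $(w_1,w_j)$ w.h.p. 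Thus $F$ contains all $k-1$ non-tree edges w.h.p. But every fundamental cycle $C_{(w_1,w_j)}$ traverses all of path~$1$, so every tree edge $f$ on path~$1$ has congestion $|F\cap S(f)| = k-1 = \Theta(n/D)$, which is far from $\widetilde O(1)$ when $D$ is small. No charging scheme can rescue this: each $(w_1,w_j)$ is the \emph{only} edge whose fundamental cycle can cover path $j$, so any hitting set $F$ for all the $S(e)$'s must contain all $k-1$ of them, and they all load path~$1$. In fact restricting to fundamental cycles is already a dead end on this instance, independently of how one selects them.

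The paper avoids this precisely by refusing to route a non-tree edge's cycle all the way to the LCA. To cover $(w_1,w_j)$ it builds cycles of the form $(w_1,w_2)\circ\pi(w_2,w_j,T)\circ(w_j,w_1)$, detouring between two non-tree endpoints rather than descending into path~$1$, and it controls the total load of such detours via a post-order block partition of $V(T)$ with bounded density, exploiting that each tree edge is ``cut'' by at most two blocks. Covering the tree edges of $T$ then requires a genuinely separate recursive scheme built on swap edges and edge-disjoint tree-path matchings (Algorithm $\EdgeDisjointPath$), which your hitting-set reduction does not address. A second, smaller gap: the ``cleanup step'' that joins an unsampled non-tree edge to a ``neighboring'' sampled cycle is never specified, and its congestion is never bounded; the paper instead covers such edges directly by pairing them up inside blocks, which is what guarantees $\widetilde O(1)$ congestion.
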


\Cref{thm:cyclecover_upper} is \emph{existentially} optimal up to 
poly-logarithmic factors, e.g., the cycle graph. 
We also study cycle covers that are {\em universally-optimal} with respect to the input 
graph $G$ (up to log-factors). By using neighborhood covers \cite{awerbuch1998near}, we show how to convert 
the existentially optimal construction into a universally optimal one: 
\begin{theorem}[Optimal Cycle Cover, Informal]\label{thm:cyclecover_opt}
	There exists a construction of (nearly) universally optimal 
	$(\dilation,\congestion)$-cycle covers with 
	$\dilation=\widetilde{O}(\OPT(G))$ and $\congestion=\widetilde{O}(1)$, where
	$\OPT(G)$ is the best possible cycle length (i.e., even 
	without the congestion constraint).
\end{theorem}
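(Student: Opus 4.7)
The plan is to reduce \Cref{thm:cyclecover_opt} to $O(\log n)$ separate applications of \Cref{thm:cyclecover_upper}, one per distance scale, glued together via neighborhood covers. Partition the non-bridge edges of $G$ into scales $E_i = \{e : 2^{i-1} < C_e \le 2^i\}$ for $i = 0,1,\dots,\lceil \log n\rceil$. It suffices to produce, for each scale $i$, a cycle collection covering $E_i$ with dilation $\tilde{O}(2^i)$ and congestion $\tilde{O}(1)$: each edge $e$ is then covered by a cycle of length $\tilde{O}(C_e) \le \tilde{O}(\OPT(G))$, and summing over $O(\log n)$ scales keeps the total congestion at $\tilde{O}(1)$.

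For scale $i$, invoke an Awerbuch--Peleg neighborhood cover \cite{awerbuch1998near} of radius $r_i = \Theta(2^i)$: a family $\mathcal{C}_i$ of vertex clusters such that (a) for every vertex $v$ the ball $B_G(v,2^i)$ is fully contained in some cluster of $\mathcal{C}_i$; (b) every cluster has \emph{strong} diameter $\tilde{O}(2^i)$ in $G$; and (c) every vertex lies in only $\tilde{O}(1)$ clusters. By (a), each edge $e \in E_i$ has some witnessing cycle of length $\le 2^i$ fully contained in some cluster $C(e)\in\mathcal{C}_i$, and we assign $e$ to $C(e)$.

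Within each cluster $C\in\mathcal{C}_i$ consider the induced subgraph $H_C = G[C]$ and decompose it into its $2$-edge-connected blocks via the bridge tree. Every edge assigned to $C$ lies on a cycle inside $C$ and hence inside some block $K$ of $H_C$. Each such $K$ is bridgeless, and any simple path in $H_C$ between two vertices of $K$ must stay in $K$ (a detour through a bridge cannot return without reusing that bridge), so the diameter of $K$ is at most the strong diameter of $C$, namely $\tilde{O}(2^i)$. Apply \Cref{thm:cyclecover_upper} to each block $K$: we obtain a cover of all edges of $K$ (in particular the $E_i$-edges assigned to $K$) by cycles of length $\tilde{O}(2^i)$ with per-block congestion $\tilde{O}(1)$. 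By (c), each edge of $G$ participates in only $\tilde{O}(1)$ clusters of $\mathcal{C}_i$, so the aggregate congestion at scale $i$ is still $\tilde{O}(1)$.

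The main subtlety will be the block-diameter step: one must argue that breaking $H_C$ into $2$-edge-connected blocks preserves small diameter, which relies both on the shortcut-through-bridge observation and on the \emph{strong} diameter guarantee of the neighborhood cover (a weak-diameter cover would only certify short paths in $G$, possibly leaving $C$, and those would be useless after restricting to a block). Once this is in place the rest is bookkeeping: each edge $e$ is covered at its own scale by a cycle of length $\tilde{O}(C_e)$, and the congestion adds up to $\tilde{O}(\log n)\cdot\tilde{O}(1) = \tilde{O}(1)$, delivering the claimed universally optimal $(\tilde{O}(\OPT(G)),\tilde{O}(1))$-cycle cover.
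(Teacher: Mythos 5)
Your proof is correct, and it follows the same high-level strategy as the paper (scale-by-scale application of neighborhood covers, union over $O(\log n)$ scales), but it handles the "clusters are not bridgeless" obstruction in a genuinely different way. The paper does not decompose $G[S_i]$ into $2$-edge-connected blocks; instead it introduces the abstract notion of a \emph{nice} cycle-cover algorithm --- one that covers every edge lying on \emph{some} cycle even if the input graph has bridges --- and then verifies (Observation~\ref{obs:nice}) that $\GraphCover$ happens to be nice. It then runs $\GraphCover$ directly on each induced cluster subgraph $G[S_i]$. Your route is strictly black-box: you only use \Cref{thm:cyclecover_upper} as stated, at the price of an extra structural lemma, namely that each $2$-edge-connected block $K$ of $G[S_i]$ inherits the cluster's strong diameter because a simple path between two vertices of $K$ projects to a closed walk in the bridge tree, hence cannot use any bridge. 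That lemma is correct (the $2$-edge-connected components of $G[S_i]$ are vertex-disjoint, so the bridge tree is a bona fide tree and the parity argument applies), and your insistence on \emph{strong} rather than weak cluster diameter is exactly the right point to flag --- a weak-diameter ball-carving would certify short $G$-paths that may leave $C$ entirely and be useless inside a block. The trade-off: the paper's proof is slightly shorter because it skips the block decomposition, but it requires peeking inside $\GraphCover$ to certify niceness, whereas your argument would work verbatim with \emph{any} construction meeting the guarantee of \Cref{thm:cyclecover_upper} on bridgeless inputs. One small tightening you should make explicit when writing this up: only run the construction at scales $i$ with $E_i\ne\emptyset$ (equivalently, $i\le\lceil\log\OptCCVal\rceil$); otherwise the larger scales produce extraneous cycles of length $\widetilde{O}(2^i)\gg\widetilde{O}(\OptCCVal)$ and the global dilation bound would not hold even though every edge is still covered by a short cycle at its own scale.
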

In fact, our algorithm can be made nearly optimal with respect to each 
individual edge. That is, we can construct a cycle cover that covers each edge 
$e$ by a cycle whose length is $\widetilde{O}(|C_e|)$ where $C_e$ is the 
shortest cycle in $G$ that goes through $e$. The congestion for any edge remains
$\widetilde{O}(1)$. 

Turning to the distributed setting, we also provide a construction of cycle covers for the family of \emph{minor-closed} graphs. Our construction is (nearly) optimal in terms of both its run-time and in the parameters of the cycle cover. Minor-closed graphs have recently attracted a lot of attention in the setting of distributed network optimization \cite{ghaffari2016distributed,haeupler2016low,ghaffari2017near,haeupler2018minor,leviplanar18}. 
\begin{theorem}[Optimal Cycle Cover Construction for Minor Close Graphs, 
Informal]\label{thm:optimal-minor-closed}
For the family of minor closed graphs, 
there exists an $\widetilde{O}(OPT(G))$-round algorithm that constructs  
$(\dilation,\congestion)$-cycle cover with 
$\dilation=\widetilde{O}(OPT(G)), \congestion=\widetilde{O}(1)$, where
$\OPT(G)$ is equal to the best possible cycle length (i.e., even without
the constraint on the congestion).
\end{theorem}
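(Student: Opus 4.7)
The plan is to bootstrap from the universally-optimal construction of Theorem 2 and make every step run in $\widetilde{O}(\OPT(G))$ rounds by exploiting the special structural properties of minor-closed graphs. As a first step, I would reduce to the case where an approximation $\OPT' = \Theta(\OPT(G))$ is known, via standard doubling on the guess of $\OPT(G)$; each guess only costs a factor of $O(\log n)$ in the final round complexity. Given such a guess, I would construct a distributed neighborhood cover at radius $\widetilde{\Theta}(\OPT')$, yielding overlapping clusters $\{S_i\}$ such that (i) every edge $e$ lies in some cluster whose induced diameter is $\widetilde{O}(\OPT')$, and (ii) each vertex belongs to only $\widetilde{O}(1)$ clusters. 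The point is that any edge $e$ with shortest covering cycle of length at most $\OPT(G)$ is fully contained inside a ball of radius $\OPT(G)$ around it, so confining the cycle-cover search to these clusters loses nothing.

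Inside each cluster $S_i$, I would then invoke the existential construction of Theorem 1 (applied to the subgraph induced by $S_i$, which has diameter $\widetilde{O}(\OPT'))$) to obtain a $(\widetilde{O}(\OPT'),\widetilde{O}(1))$-cycle cover of the edges assigned to $S_i$. Taking the union over clusters yields a cover of all of $E(G)$; dilation is preserved, and the congestion is multiplied only by the $\widetilde{O}(1)$ cluster-overlap of the neighborhood cover. This establishes the combinatorial part with parameters $\dilation = \widetilde{O}(\OPT(G))$ and $\congestion = \widetilde{O}(1)$, matching Theorem 2.

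The distributed efficiency is where the minor-closed assumption is crucial. I would plug in the low-congestion shortcut framework of Haeupler, Ghaffari, Li and co-authors, which for any minor-closed family admits shortcuts of quality $\widetilde{O}(D)$ constructible in $\widetilde{O}(D)$ rounds, and which is well known to extend to the setting of \emph{part-wise} shortcuts over any vertex partition. Applying this framework \emph{inside each cluster} $S_i$ (viewing its own sub-partition as the input parts) produces shortcuts of quality $\widetilde{O}(\OPT')$ in $\widetilde{O}(\OPT')$ rounds. With these shortcuts in hand, the internal subroutines of the cycle-cover construction from Theorem 1, which essentially decompose into shortest-path computations and greedy short-cycle extractions, can be executed in parallel across all clusters with $\widetilde{O}(\OPT(G))$ rounds of overall overhead, via a Leighton--Maggs--Rao scheduling on the shortcut graph.

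The main obstacle is implementing the neighborhood cover itself within the tight budget of $\widetilde{O}(\OPT(G))$ rounds rather than the more natural $\widetilde{O}(D)$ afforded by Awerbuch--Peleg style BFS exploration. To achieve this, the ball-growing must be strictly bounded to radius $\widetilde{O}(\OPT')$ and the coordination among overlapping clusters must be routed through the minor-closed shortcut infrastructure rather than through the base graph; ensuring that no cluster accidentally inherits a round complexity proportional to the global diameter is the delicate point. Once this locally-bounded cover is in place, the remaining ingredients, namely the intra-cluster invocation of Theorem 1 and the scheduling of parallel routines, follow by combining the existential construction with the shortcut-based simulation in a black-box manner.
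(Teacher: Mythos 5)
The neighborhood-cover reduction at the top of your proposal matches the paper's Lemma \ref{lem:distmodcover}, and the observation that each edge's shortest covering cycle is captured inside a cluster of radius $\Theta(\OPT(G))$ is exactly the paper's argument. The gap is entirely in the intra-cluster step.

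You propose to run ``the existential construction of Theorem \ref{thm:cyclecover_upper}'' inside each cluster and then speed it up to $\widetilde{O}(\OPT')$ rounds via low-congestion shortcuts for minor-closed graphs plus Leighton--Maggs--Rao scheduling. This is not justified, and the paper's approach is structurally different. The core subroutine of Algorithm $\GraphCover$ for non-tree edges is to repeatedly find cycles of length $O(\log n)$ in a contracted block graph whose node count may still be $\Theta(n)$; this girth-hunting step is a global cycle-detection task, not a part-wise aggregation, and does not fall inside the class of computations that the shortcut framework accelerates. Indeed, the paper explicitly notes that a sublinear-round distributed construction of cycle covers for \emph{general} graphs requires considerably extra work and is deferred to follow-up work, so if your shortcut-based simulation worked inside a cluster it would already solve the general case, which is presented as open. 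The $\widetilde{O}(n)$-round general construction (Lemma \ref{lem:preproc-cover}) handles the cycle search by pipelining all edges to a root --- and that is precisely the step you would have to remove, not speed up.

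What the paper actually does for minor-closed graphs is design a \emph{different} block decomposition and a \emph{different} covering argument tailored to the family. Instead of consecutive post-order blocks, it uses vertex-disjoint \emph{subtree} blocks, computed bottom-up on $T$. Since the blocks are vertex-disjoint subtrees, contracting each block produces a graph that is again minor-closed, hence sparse (Fact \ref{fc:minorsparse}). Sparsity of the contracted graph, rather than small girth, is then what certifies progress: after removing self-loops and duplicated multi-edges between a pair of blocks, only $O(|\mathcal{B}_i|)$ edges can remain. The algorithm therefore never needs to search for short cycles at all; it covers self-loops by fundamental cycles and covers pairs of inter-block parallel edges by matching them in pairs. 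The congestion and round-complexity bounds rely on a dedicated observation (Claim \ref{obs:keyminorclosed}) that bounds, per tree edge $(x,y)$, the number of $E'$-edges incident to the sub-block hanging below $y$ even when the full block density is unbounded. None of these ingredients appear in your proposal, and without them the intra-cluster step remains unproved.
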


\paragraph{Natural Generalizations of Low-Congestion Cycle Covers.}
Interestingly, our cycle cover constructions are quite flexible and naturally generalize to other related graph structures. For example, a $(\dilation,\congestion)$-\emph{two-edge-disjoint} cycle cover of a $3$-edge connected graphs is a collection of cycles such that each edge is covered by at least two edge disjoint cycles in $G$, each edge appears on at most $\congestion$ cycles, and each cycle is of length at most $\dilation$. In other words, such cycle cover provides $3$ edge disjoint paths between every neighboring nodes, these paths are short and nearly edge disjoint.

As we will describe next, we use this notation of two-edge-disjoint cycle covers in the context of fault tolerant algorithms. Towards this end, we show:
\begin{theorem}\label{thm:twoedge}[Two-edge-Disjoint Cycle Covers, 
Informal]\label{lem:twoedgecyclecover}
Every $3$-edge connected $n$-vertex graph with diameter $D$ has a $(\dilation,\congestion)$-\emph{two-edge-disjoint} cycle cover with $\dilation=\widetilde{O}(D^3)$ and $\congestion=\widetilde{O}(D^2)$. 
\end{theorem}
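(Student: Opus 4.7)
The plan is a two-level application of Theorem~\ref{thm:cyclecover_upper}. First I would apply the theorem to $G$ itself to obtain a ``primary'' cycle cover $\mathcal{C}_1$ with dilation $\widetilde{O}(D)$ and congestion $\widetilde{O}(1)$; this assigns to each edge $e$ a primary cycle $C^{(1)}_e\in \mathcal{C}_1$ containing it. The rest of the argument is devoted to producing, for every $e$, a second cycle through $e$ that is edge-disjoint from $C^{(1)}_e$, simultaneously for all edges and with small total congestion.

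I would process each primary cycle $C=(v_0,\ldots,v_{L-1})\in\mathcal{C}_1$, where $L=\widetilde{O}(D)$, separately. The plan for each such $C$ is to build the secondary cycles inside $G\setminus E(C)$ via a recursive use of Theorem~\ref{thm:cyclecover_upper} applied to a virtual graph $H_C$ on vertex set $V(C)$: for every pair $v_i,v_j$ joined by a path of length $\widetilde{O}(D)$ in $G\setminus E(C)$, I add a virtual edge $\{v_i,v_j\}$ and fix that path as the virtual edge's witness. Using $3$-edge-connectivity (which ensures every $v_i$ has at least one edge leaving $C$) together with the diameter bound on $G$, I would argue that $H_C$ is bridgeless and has diameter $\widetilde{O}(D)$. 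Applying Theorem~\ref{thm:cyclecover_upper} to $H_C$ then yields a virtual cover of dilation $\widetilde{O}(D)$ and congestion $\widetilde{O}(1)$; substituting each virtual edge by its witness path expands every virtual cycle into a real cycle in $G\setminus E(C)$ that, combined with the corresponding edge of $C$, forms a secondary cycle through that edge edge-disjoint from $C$.

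For the parameter accounting, each real expanded cycle has length at most (virtual length)$\,\times\,$(expansion factor), i.e.\ $\widetilde{O}(D)\cdot\widetilde{O}(D)=\widetilde{O}(D^2)$; plus an $\widetilde{O}(D)$-arc of $C$ used to close it; an additional $\widetilde{O}(D)$ factor arises from the blow-up incurred by switching between the primary and secondary covers, bringing the dilation to $\widetilde{O}(D^3)$. For congestion, a real edge of $G$ can serve as a witness for at most $\widetilde{O}(D)$ virtual edges across the relevant $H_C$'s, with an additional $\widetilde{O}(1)$ factor from the virtual cover itself and another $\widetilde{O}(D)$ summed over primary cycles that might reuse $e$ in their expansion, giving a total of $\widetilde{O}(D^2)$.

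The hard part will be the construction and structural analysis of the virtual graph $H_C$. Since $G\setminus E(C)$ need not be connected, establishing that $H_C$ is bridgeless with diameter $\widetilde{O}(D)$ is not automatic; I plan to address this by using the external edges at each $v_i$ (guaranteed by $3$-edge-connectivity) to stitch the components of $G\setminus E(C)$ into a single diameter-$\widetilde{O}(D)$ structure, selecting witness paths via BFS trees in $G\setminus E(C)$ rooted at the $v_i$'s. A second subtlety is bounding the congestion inherited by real edges during the witness-path expansion: this requires a packing argument ensuring that no real edge is chosen as a witness too often, which is exactly where the $\widetilde{O}(D^2)$ congestion bound ultimately emerges.
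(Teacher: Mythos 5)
Your approach is genuinely different from the paper's, but it has a fundamental gap at the very first step. You fix a primary cycle $C^{(1)}_e$ (from a single application of \Cref{thm:cyclecover_upper}) and then try to construct a second cycle through $e$ that is edge-disjoint from $C^{(1)}_e$ (except at $e$). But $3$-edge-connectivity of $G$ does \emph{not} guarantee that such a complementary cycle exists once $C^{(1)}_e$ has been fixed; it only guarantees the \emph{existence} of \emph{some} pair of cycles through $e$ sharing only $e$. Concretely, take $V=\{1,\ldots,6\}$, let $C$ be the $6$-cycle $1,2,3,4,5,6,1$, and add the two triangles on $\{1,3,5\}$ and on $\{2,4,6\}$. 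This graph is $3$-edge-connected, but $G\setminus E(C)$ splits into two components, one containing $1$ and the other containing $2$. Hence no cycle through the edge $(1,2)$ is edge-disjoint from $C$ except at $(1,2)$, and if the primary cover happens to assign $C^{(1)}_{(1,2)}=C$, your construction has nothing to produce. The issue is structural, not a matter of choosing witness paths cleverly, and your observation that $H_C$ may fail to be bridgeless is a symptom of exactly this obstruction. (The downstream difficulties you flag -- connectivity/diameter of $H_C$ and the witness-path congestion accounting -- are real as well, but secondary to this.)

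The paper sidesteps this by never committing to a "first" cycle. It runs $O(D^2\log n)$ independent random-sampling experiments: in each, every edge is retained with probability $p=1-1/(3D)$, and algorithm $\OptimalEdgeCycleCover$ is applied to the sampled subgraph. For each edge $e=(u,v)$, one then looks at the subgraph $G_e$ formed by the union of all cycles covering $e$ across all experiments and proves, via a union bound over pairs $\{e_1,e_2\}$, that with high probability the $u$--$v$ edge cut in $G_e$ is at least $3$. Menger's theorem then yields three edge-disjoint $u$--$v$ paths inside $G_e$, of length at most $|V(G_e)|=\widetilde{O}(D^3)$; the congestion $\widetilde{O}(D^2)$ comes from taking the union of $\widetilde{O}(D^2)$ covers each with congestion $\widetilde{O}(1)$. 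The sampling argument is what makes the disjointness certifiable \emph{post hoc} rather than requiring it to hold relative to a prematurely fixed cycle.
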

It is also quite straightforward to adapt the construction of Theorem 
\ref{thm:twoedge} to yield $k$-edge disjoint covers which cover every edge by 
$k$ edge disjoint cycles. These variants are also related to the notions of 
length-bounded cuts and flows \cite{baier2010length}, and find applications in 
fault tolerant computation. 

Cycle covers can be extended even further, one interesting example is ``$P_k$ covers'', where it is required to cover all paths of length at most $k$ in $G$ by simple cycles. The cost of such an extension has an overhead of $O((\Delta \cdot D)^k)$ in the dilation and congestion, where $\Delta$ is the maximum degree in $G$. These variant might find applications in secure computation.

Finally, in a companion work~\cite{parteryogev17} cycle cover are used to 
construct a new graph structure called \emph{private neighborhood trees} which 
serve the basis of a compiler for secure distributed algorithms.

\paragraph{Low-Congestion Covers as a Backbone in Distributed Algorithms.} Many of 
the underlying graph algorithms in the \congest\ model are based (either 
directly or indirectly) on low-congestion communication backbones. Ghaffari and 
Haeupler introduced the notion of low-congestion shortcuts for planar graphs 
\cite{ghaffari2016distributed}. These shortcuts have been shown to be useful 
for a wide range of problems, including MST, Min-Cut \cite{haeupler2018round}, 
shortest path computation \cite{haeupler2018faster} and other problems 
\cite{ghaffari2017near,li2018distributed}. Low congestion shortcuts have been 
studied also for bounded genus graphs \cite{haeupler2016low}, bounded treewidth 
graphs \cite{haeupler2016near} and recently also for general graphs 
\cite{haeupler2018round}. Ghaffari considered \emph{shallow-tree packing}, 
collection of small depth and nearly edge disjoint trees for the purpose 
of distributed broadcast \cite{ghaffari2015distributed}. 

Our low-congestion cycle covers join this wide family of low-congestion covers -- the graph theoretical infrastructures that underlay efficient algorithms in the \congest\ model. It is noteworthy that our cycle cover constructions are based on novel and independent ideas and are technically \emph{not} related to any of the existing low congestion graph structures.

\subsection{Distributed Compiler for Resilient Computation}
Our motivation for defining low-congestion cycle covers is rooted in the 
setting of distributed computation in a faulty or non-trusted environment. 
In this work, we consider two types of malicious adversaries, a byzantine adversary that can corrupt messages, and an eavesdropper adversary that listens on graph edges and show how to compile an algorithm to be resilient to such adversaries.

We present a new general framework for resilient computation in 
the \congest\ model \cite{Peleg:2000} of distributed computing. In this model, 
execution proceeds in synchronous rounds and in each round, each node can send a message of size $O(\log n)$ to each of its neighbors. 

The low-congestion cycle covers give raise to a 
\emph{simulation} methodology (in the spirit of synchronizes 
\cite{awerbuch1985complexity}) that can take any $r$-round distributed 
algorithm $\cA$ and compile it into a \emph{resilient} one, while incurring 
a blowup in the round complexity as a function of network's diameter.  
In the high-level, omitting many technicalities, our applications use the fact that the cycle cover provides each edge $e$ \emph{two}-edge-disjoint paths: a direct one using the edge $e$ and an indirect one using the cycle $C_e$ that covers $e$. Our low-congestion covers allows one to send information on all cycles in essentially the same round complexity as sending a message on a \emph{single} cycle.

\paragraph{Compiler for Byzantine Adversary.}
Fault tolerant computation \cite{ben1988completeness,gartner1999fundamentals} 
concerns with the efficient information exchange in communication networks 
whose nodes or edges are subject to Byzantine faults. 
The three cornerstone problems in the area of fault tolerant computation are: 
\emph{consensus} 
\cite{DworkPPU88,fischer1983consensus,fischer1985impossibility,keidar2001cost,leblanc2013resilient},
\emph{broadcasting} (i.e., one to all) 
\cite{peleg1989time,pelc1996fault,berman1997reliable,kranakis2001fault,pelc2005broadcasting}
 and \emph{gossiping} (i.e., all to all) 
\cite{blough1993optimal,bagchi1994information,censor2017fast}. A plentiful list 
of fault tolerant algorithms have been devised for these problems and various 
fault and communication models have been considered, see \cite{pelc1996fault} 
for a survey on this topic. 

In the area of interactive coding, the common model considers an adversary that can corrupt at most a fraction, known as \emph{error rate}, of the messages sent throughout the entire protocol. Hoza and Schulman \cite{hoza2016adversarial} showed a general compiler for synchronous distributed algorithms that handles an  adversarial error rate of $O(1/|E|)$ while incurring a constant communication overhead.
\cite{censor2018making} extended this result for the asynchronous setting, see \cite{Gelles17} for additional error models in interactive coding. 

In our applications, we consider a Byzantine adversary that can corrupt a \emph{single} message in each round, regardless of the number of messages sent over all. This is different, and in some sense incomparable, to the adversarial model in interactive coding, where the adversary is limited to corrupt only a bounded \emph{fraction} of all message. On the one hand, the latter adversary is stronger than ours as it allows to corrupt potentially many messages in a given round. On the other hand, in the case where the original protocol sends a linear number (in the number of vertices) of messages in a given round, our adversary is \emph{stronger} as the interactive coding adversary which handles only error rate of $O(1/n)$ cannot corrupt a single edge in each and every round. 
As will be elaborated more in the technical sections, this 
adversarial setting calls for the stronger variant of cycle covers in which 
each edge is covered by \emph{two} edge-disjoint cycles (as discussed in 
Theorem \ref{thm:twoedge}). 
\begin{theorem}\label{thm:bcomp}
	(Compiler for Byzantine Adversary, Informal)
Assume that a $(\dilation_1, \congestion_1)$ cycle cover and a $(\dilation_2,\congestion_2)$ two-edge-disjoint cycle cover are computed in a (fault-free) preprocessing phase. 
Then any  distributed algorithm $\cA$ can be compiled into an equivalent algorithm $\cA'$ that is \emph{resilient} to a Byzantine adversary while incurring an overhead of $\widetilde{O}((\congestion_1+\dilation_1)^2\cdot\dilation_2)$ in the number of rounds.
\end{theorem}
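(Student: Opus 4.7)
The plan is to compile $\cA$ by a round-by-round simulation: each round of $\cA$ is replaced by a fault-tolerant sub-protocol that delivers every edge message reliably. Combining the two-edge-disjoint cycle cover of Theorem~\ref{lem:twoedgecyclecover} with the edge $e$ itself yields three edge-disjoint $u$-$v$ paths for every edge $e=(u,v)$, so that a majority vote at $v$ would tolerate any single corrupted path. The difficulty is that routing along all three paths takes $\Omega(\dilation_2)$ rounds, which gives the adversary enough budget to corrupt messages on more than one of these paths.

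To cope with this, I would use a two-layer scheme. The cheap layer uses the $(\dilation_1,\congestion_1)$ cycle cover as an efficient \emph{error detector}: via Leighton--Maggs--Rao packet routing on all cycles $C_e$ simultaneously, in $T_1=\widetilde{O}(\dilation_1+\congestion_1)$ rounds each receiver compares the message arriving on the direct edge with the one arriving around $C_e$ and flags the edge on disagreement. The heavy layer is invoked only for the flagged edges: the message is routed on the three edge-disjoint paths of the two-edge-disjoint cover and decided by majority, again via LMR scheduling, in $\widetilde{O}(\dilation_2)$ rounds. I would iterate this ``detect then correct'' loop $T_1$ times per simulated round, for a total of $\widetilde{O}(T_1^2\cdot \dilation_2) = \widetilde{O}((\dilation_1+\congestion_1)^2\cdot \dilation_2)$ rounds, matching the claim.

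The key invariant I plan to prove is that, after the $i$th iteration, the set of genuinely undelivered edge messages strictly shrinks (up to polylogarithmic slack): every iteration either resolves a flagged edge correctly by majority, or the adversary burns a corruption from its budget to keep it unresolved. Since the adversary spends only one corruption per round and the total simulation length is bounded, the process must terminate. The main obstacle is precisely this invariant, because in the correction step the adversary may corrupt one message on one of the three edge-disjoint paths and tip a majority, and in the detection step it may fabricate false mismatches (wasting a correction) or hide real ones. My plan to handle this is to equip every transmitted packet with a sequence number and sender identifier so that stale or off-phase messages cannot be silently substituted, and then charge each adversarial corruption to a bounded delay of a single edge; summing these charges against the adversary's one-corruption-per-round budget then yields the stated overhead.
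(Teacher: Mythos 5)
Your two-layer ``detect cheaply, correct expensively'' architecture is essentially the paper's plan, and the overhead arithmetic lands in the right place, but two of your concrete mechanisms are exactly the pitfalls the paper goes out of its way to avoid, and a third ingredient is missing.

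First, your detection layer as described is unsound. You propose sending each $M_e$ once on the direct edge and once around $C_e$ and flagging on disagreement. But the direct copy and the cycle copy necessarily arrive in different rounds (the cycle has length up to $\dilation_1$), so the one-corruption-per-round adversary has enough budget to corrupt \emph{both} copies of a particular $M_e$ and replace them with the \emph{same} wrong value; the receiver sees agreement and accepts a corrupted message. The paper's fix is that $u$ resends $M_e$ on the direct edge in \emph{every} round of a $\dilation_1$-round subphase, so the receiver collects $\dilation_1{+}1$ copies while the adversary's budget in that window is only $\dilation_1$; requiring unanimity of all copies then makes acceptance safe. Without this repetition, the invariant you want (``either the edge is resolved correctly or the adversary burned a corruption'') simply fails.

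Second, you propose routing the three edge-disjoint paths in the correction layer via LMR random-delay scheduling. The paper explicitly identifies this as broken: majority voting over the three paths is only sound if the three copies of $M_e$ traverse their paths in a synchronized, pipelined fashion, so that in any round the adversary can touch at most one of them. Random-delay scheduling destroys this synchrony --- $M_e$ may go out on $P_1$ in round $j$ and on $P_2$ in round $j{+}1$, and then a single-edge-per-round adversary can corrupt both. Your ``sequence numbers and charging corruptions to delays'' patch does not address this, because the problem is not message confusion but desynchronization of the replicas of a single message. The paper avoids LMR here entirely: it notes that after the detection phase there are only $T = O(\dilation_1 + \congestion_1)$ suspicious messages, has each suspicious sender draw a random ID in $[1,T^2]$ (unique w.h.p.\ since $T = \Omega(\log n)$), and then runs $T^2$ \emph{serialized} subphases of $O(\dilation_2)$ rounds each, one per ID, each handling a single suspicious message with the synchronized pipelined majority scheme at bandwidth $O(\log n)$. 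This serialization --- not an iterated detect-and-correct loop --- is what gives the $\widetilde{O}((\dilation_1+\congestion_1)^2 \cdot \dilation_2)$ bound, and it is the key idea your proposal is missing. Finally, the paper also inserts a feedback subphase after detection so that the \emph{senders} (not only the receivers) learn which messages are suspicious, which is necessary for the senders to participate in the serialized correction subphases; your sketch does not account for this.
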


\paragraph{Compiler Against Eavesdropping.}
Our second application considers an eavesdropper adversary that in each round can listen on one of the graph edges of his choice. The goal is to take an algorithm $\cA$ and compile it to an equivalent algorithm $\cA'$ with the guarantee that the adversary learns nothing (in the information theoretic sense) regarding the messages of $\cA$. 
This application perfectly fits the cycle cover infrastructure. We show: 

\begin{theorem}[Compiler for Eavesdropping, Informal]\label{thm:ecomp}
Assume that $(\dilation, \congestion)$-cycle cover is computed in a preprocessing phase. Then any  distributed algorithm $\cA$ can be compiled into an $\cA'$ algorithm that is \emph{resilient} to an eavesdropping adversary while incurring an overhead of $\widetilde{O}(\dilation+\congestion)$ in the number of rounds.
\end{theorem}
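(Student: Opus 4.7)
The plan is to compile each round of $\cA$ by an information-theoretic two-share secret sharing of every edge-message, with one share sent on the direct edge and the other routed on its covering cycle, and to schedule all indirect shares in parallel via the Leighton--Maggs--Rao packet-routing theorem applied to the cycle cover.

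Concretely, using the $(\dilation,\congestion)$-cycle cover $\{C_e\}_{e\in E}$ produced in preprocessing, each round of $\cA$ is simulated as follows. For every edge $e=(u,v)$ along which $\cA$ would send $m_e$ from $u$ to $v$, the endpoint $u$ draws a fresh uniform pad $r_e$, sends the direct share $m_e\oplus r_e$ across $e$ in one round, and simultaneously initiates transmission of the indirect share $r_e$ along the $u$-to-$v$ path $C_e\setminus\{e\}$. Node $v$ recovers $m_e$ by XOR'ing the two shares. The routings of all indirect shares form a packet-routing instance in which every packet is pinned to a pre-specified cycle of length at most $\dilation$ and every edge lies on at most $\congestion$ such cycles; the Leighton--Maggs--Rao scheduling theorem in the distributed form due to Ghaffari then executes the instance in $\widetilde{O}(\dilation+\congestion)$ \congest\ rounds, yielding the claimed per-round overhead.

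For security, note that on any one observed edge in any one sub-round the eavesdropper sees either a direct share $m_e\oplus r_e$ (uniform by independence of $r_e$) or an in-flight indirect share (uniform in isolation), so the marginal view of any single observation is information-theoretically independent of every $m_{e'}$. The crux of the argument, and the step I expect to be the main obstacle, is to extend this to the entire $\widetilde{O}(\dilation+\congestion)$-round transcript, ruling out the attack in which the adversary uses one sub-round to intercept $m_e\oplus r_e$ on $e$ and a later sub-round to catch $r_e$ somewhere along $C_e$. I would close this by having each relay node on $C_e$ re-mask the in-flight packet with a fresh local pad whose cumulative mask is cancelled only at $v$, so that the packet visible on any cycle edge in any sub-round is marginally uniform and jointly independent of the direct share; a hybrid argument over the edges of $G$ then replaces the real pads with independent uniform values and simulates the adversary's entire transcript without access to any $m_e$, yielding information-theoretic resilience.
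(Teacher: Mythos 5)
You correctly identify the key ingredients (secret sharing the per-edge message, routing one share along the covering cycle, and using LMR/Ghaffari scheduling for the congestion bound), and you also correctly identify the main obstacle: with a two-share splitting, the direct share travels on $e$ in one round while the indirect share trickles along $C_e\setminus\{e\}$ over several later rounds, so an adversary listening on $e$ first and on a cycle edge later would learn $m_e$. But your proposed fix is where the proof breaks.

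The re-masking step does not close this gap as stated. If each relay $w_i$ on $C_e$ replaces the in-flight packet $p$ with $p\oplus s_i$ for a fresh local pad $s_i$, then $v$ ultimately receives $r_e\oplus s_1\oplus\cdots\oplus s_k$, and to recover $r_e$ it must learn $s_1\oplus\cdots\oplus s_k$. Delivering that cumulative mask to $v$ is exactly the original problem of transmitting information from the relays to $v$ securely against a one-edge-per-round eavesdropper; nothing in the construction gives the relays a channel to $v$ on which the adversary cannot eavesdrop, nor any pre-shared correlated randomness with $v$ that would let the mask ``cancel at $v$.'' So the hybrid argument you sketch has nothing to telescope against, and the adversary that catches $m_e\oplus r_e$ on $e$ and then $r_e\oplus s_1$ on $(w_1,w_2)$ in a later round still learns $s_1$-related information that must be accounted for somewhere.

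The paper sidesteps this issue entirely by not using a $2$-out-of-$2$ sharing. It splits $M_e$ into $\dilation+1$ shares, sends the first $\dilation$ shares on the direct edge $e$, one per round over the $\dilation$-round phase, and sends the last share along $C_e\setminus\{e\}$. Now the timing attack is physically impossible: collecting all $\dilation$ direct shares requires occupying $e$ in every round of the phase, which leaves no round free to sit on any edge of $C_e\setminus\{e\}$; and if the adversary skips $e$ in even one round, it misses that round's direct share. In either case at least one share is missing, so the adversary's entire transcript is information-theoretically independent of $M_e$. This is the idea your proposal is missing, and no amount of re-masking along the cycle substitutes for it without introducing a second secure-delivery problem of the same kind. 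The scheduling step and the $\widetilde{O}(\dilation+\congestion)$ bound you state are otherwise as in the paper.
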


In a companion work~\cite{parteryogev17}, low-congestion cycle covers are used 
to build up a more massive infrastructure that provides much stronger security 
guarantees. In the setting of \cite{parteryogev17}, the adversary takes 
over a single \emph{node} in the network and the goal is for all nodes to learn 
nothing on inputs and outputs of other nodes. This calls for combining the 
graph theory with cryptographic tools to get a compiler that is both efficient 
and secure.

\paragraph{Our Focus.} We note that the main focus in this paper is to study 
low-congestion cycle covers from an algorithmic and combinatorial perspective, as well as to demonstrate their applications for resilient computation. 
In these distributed applications, it is assumed that the cycle covers are constructed in a \emph{preprocessing} phase and are given in a distributed manner (e.g., each edge $e$ knows the cycles that go through it). Such preprocessing should be done only once per graph.

Though our focus is not in the distributed implementation of constructing these cycle covers, we do address this setting to some extent by: (1) providing a preprocessing algorithm with $\widetilde{O}(n)$ rounds that constructs the covers for general graphs; (2) providing a (nearly) optimal construction for the family of minor closed graphs. A sublinear distributed construction of cycle covers for general graphs requires considerably extra work and appears in a follow-up work \cite{parteryogev17}. 


\subsection{Preliminaries}\label{sec:prelims}
\paragraph{Graph Notations.} For a rooted tree $T \subseteq G$, and $z \in V$, let $T(z)$ be the 
subtree of $T$ rooted at $z$, and let $\pi(u,v,T)$ be the tree path between $u$ 
and $v$, when $T$ is clear from the context, we may omit it and simply write 
$\pi(u,v)$. For a vertex $u$, let $p(u)$ be the parent of $u$ in $T$. Let $P_1$ be a $u$-$v$ path (possibly $u=v$) and $P_2$ be a $v$-$z$ path, we denote by $P_1 \circ P_2$ to be the concatenation of the two paths. 

The fundamental cycle $C$ of an edge $e=(u,v) \notin T$ is the cycle formed by taking $e$ and the tree path between $u$ and $v$ in $T_0$, i.e., $C=e \circ \pi(u,v,T)$.
For $u,v \in G$, let $\dist(u,v,G)$ be the length (in edges) of the shortest $u-v$ path in $G$.

For every integer $i\geq 1$, let $\Gamma_i(u,G)=\{v ~\mid~ \dist_G(u,v)\leq i\}$. When $i=1$, we simply write $\Gamma(u,G)$.
Let $\deg(u,G)=|\Gamma(u,G)|$ be the degree of $u$ in $G$. For a subset of edges 
$E' \subseteq E(G)$, let $\deg(u,E')=|\{v : (u,v) \in E'\}|$ be the number 
of edges incident to $u$ in $E'$.
For a subset of nodes $U$, let $\deg(U,E')=\sum_{u \in U}\deg(u,E')$. For a 
subset of vertices $S_i\subseteq V(G)$, let $G[S_i]$ be the induced subgraph on 
$S_i$.
\begin{fact}\label{fc:girthmoore}[Moore Bound, \cite{bollobas2004extremal}]
Every $n$-vertex graph $G=(V,E)$ with at least $2n^{1+1/k}$ edges has a cycle 
of length at most $2k$.
\end{fact}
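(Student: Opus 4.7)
The plan is to prove the contrapositive: assuming the graph has girth strictly greater than $2k$, I will derive that $|E| < 2n^{1+1/k}$. The strategy is the classical BFS-counting argument for Moore-type bounds. The core intuition: if the graph has no short cycles, then a breadth-first tree from any vertex of high degree must branch out without collisions for $k$ layers, forcing the vertex count to blow up beyond $n$.

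First I would reduce to the case of a large minimum degree. From $|E|\geq 2n^{1+1/k}$ the average degree is at least $d:=4n^{1/k}$, and a standard extraction lemma gives a nonempty subgraph $H\subseteq G$ with minimum degree at least $d/2=2n^{1/k}$. The lemma is proved by repeatedly deleting any vertex of degree less than $d/2$: each deletion removes fewer edges than it loses in the ``average degree budget,'' so the average degree in the remaining graph never decreases, and the process must terminate at a nonempty graph (otherwise average degree would be undefined). Crucially, removing vertices and edges cannot create new cycles, so $H$ still has girth exceeding $2k$.

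Next, pick any vertex $v\in H$ and run BFS from $v$ up to depth $k$, writing $L_i$ for the set of vertices at distance exactly $i$ from $v$. I would use the girth hypothesis to argue two structural facts for every $i\leq k-1$: (i) there are no edges inside $L_i$, and (ii) no two distinct vertices of $L_i$ share a neighbor in $L_{i+1}$. Both follow because such an edge or shared neighbor would close a cycle of length at most $2i+2\leq 2k$, contradicting girth $>2k$. Consequently, every $u\in L_i$ with $i\leq k-1$ has its at least $2n^{1/k}-1$ non-parent neighbors lying in $L_{i+1}$, all distinct from the children of other vertices of $L_i$. A straightforward induction then yields $|L_i|\geq 2n^{1/k}(2n^{1/k}-1)^{i-1}$ for $i\geq 1$, and in particular $|L_k|\geq 2n^{1/k}\cdot(n^{1/k})^{k-1}=2n$, which already exceeds $|V(H)|\leq n$, a contradiction.

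The main obstacle, such as it is, is being careful with the edge cases and constants: making sure the pruning step leaves a nonempty graph, that $2n^{1/k}-1\geq n^{1/k}$ (which holds as soon as $n^{1/k}\geq 1$, \ie in the only nontrivial regime), and that the BFS structural facts are applied only up to depth $k-1$ so the forbidden cycle lengths are genuinely at most $2k$. There is nothing deeper than this; the entire argument is an exercise in girth-vs-expansion bookkeeping, and the bound $2n^{1+1/k}$ is chosen exactly so that the pruned minimum degree $2n^{1/k}$ makes the final product $\geq n$.
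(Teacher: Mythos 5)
The paper states this as a \emph{Fact} cited to Bollob\'{a}s's book and gives no proof of its own, so there is nothing internal to compare against. Your argument is the standard Moore-bound proof and it is correct: the pruning step legitimately yields a nonempty subgraph $H$ of minimum degree at least $2n^{1/k}$ (each deletion removes fewer than $m/n$ edges, so deleting everything would remove fewer than $m$ edges); the two BFS structural facts hold for $i\le k-1$ because an edge inside $L_i$ closes a cycle of length at most $2i+1\le 2k-1$ and a shared child in $L_{i+1}$ closes one of length at most $2i+2\le 2k$; parent-uniqueness for vertices in $L_i$ follows from fact (ii) applied at level $i-1$, which is in range for $i\le k$; and the resulting recursion $|L_{i+1}|\ge|L_i|(2n^{1/k}-1)$ together with $2n^{1/k}-1\ge n^{1/k}$ gives $|L_k|\ge 2n>n$, a contradiction. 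No gaps.
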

%
%
\paragraph{The Communication Model.}\label{def:com-model}
We use a standard message passing model, the
\congest\ model \cite{Peleg:2000}, where the execution proceeds in synchronous 
rounds and in each round, each node can send a message of size $O(\log n)$ to 
each of its neighbors.
In this model, local computation at each node is for free and the primary 
complexity measure is the number of communication rounds. Each node holds a 
processor with a unique and arbitrary ID of $O(\log n)$ bits.


\begin{definition}[Secret Sharing]\label{def:secret-sharing}
Let $x \in \bit^n$ be a message. The message $x$ is {\em secret shared} to $k$ 
shares by choosing $k$ random strings $x^1,\ldots,x^k \in \bit^n$ conditioned 
on $x = \bigoplus_{j=1}^{k}x^j$. Each $x^j$ is called a share, and notice that 
the joint distribution of any $k-1$ shares is uniform over $(\bit^n)^{k-1}$.
\end{definition}
\section{Technical Overview}
 \subsection{Low Congestion Cycle Covers}\label{sec:cyclehighlevel}

We next give an overview of the construction of low congestion cycle covers of
\Cref{thm:cyclecover_upper}. The proof proof appears in \Cref{sec:cyclecoverall}.

Let $G=(V,E)$ be a 
bridgeless $n$-vertex graph with diameter $D$. 
Our approach is based on constructing a BFS tree $T$ rooted at an arbitrary vertex in the 
graph $G$ and covering the edges by two procedures: the first 
constructs a low congestion cycle cover for the \emph{non}-tree edges and the 
second covers the tree edges.

\paragraph{Covering the Non-Tree Edges.}
Let $E'=E \setminus E(T)$ be the set of non-tree edges. Since the diameter\footnote{The graph $G\setminus T$ might 
be disconnected, when referring to its diameter, we refer to the maximum 
diameter in each connected component of $G\setminus T$.} of $G \setminus T$ 
might be large (\eg 
$\Omega(n)$), to cover the edges of $E'$ by \emph{short} cycles (i.e., of length $O(D)$), one 
must use the 
edges of $T$. A na\"ive approach is to cover 
every 
edge $e=(u,v)$ in 
$E'$ by taking its fundamental cycle in $T$ 
(\ie using the $u$-$v$ path in $T$). Although this yields short cycles, the 
congestion on the tree edges might become $\Omega(n)$. The key 
challenge is to use the edges of $T$ (as we indeed have to) in a 
way that the output cycles would be short without overloading any 
tree edge more than $\widetilde{O}(1)$ times.

%

Our approach is based on using the edges of the tree $T$ only for the purpose of 
connecting
nodes that are somewhat \emph{close} to each other (under some definition of closeness to be described later), in a way that would balance the overload on each tree edge. To realize this approach, we 
define a 
specific way of partitioning the nodes of the tree $T$ 
into \emph{blocks} according to $E'$. In a very rough manner, a block consists of a 
set of nodes that have few incident edges in $E'$.
To define these blocks, we number the nodes based 
on post-order traversal in $T$ and partition them into blocks containing 
nodes with consecutive numbering. The {\em density} of a block $B$ is the 
number 
of edges in $E'$ with an endpoint in $B$. Letting $\densitythreshold$ be some threshold of constant value on the density, the blocks are partitioned such that 
every block is either (1) a singleton block consisting of one node with at least $\densitythreshold$ edges in $E'$ or (2) consists of at least two nodes but has a density bounded by $2\densitythreshold$. 
As a result, the number of 
blocks is not too large 
(say, at most $|E'|/8$).  

To cover the edges of $E'$ by cycles, the algorithm 
considers the contracted graph obtained by contracting all nodes in a 
given block into one supernode and connecting two supernodes $B_1$ and 
$B_2$, if there is an edge in $E'$ whose one endpoint is in $B_1$, and the 
other endpoint is in $B_2$. This graph is in fact a multigraph as it might 
contain self-loops or multi-edges. 
We now use the fact that any $x$-vertex graph with at least $2x$ edges has 
girth $O(\log x)$. 
Since the contracted graph 
contains at most $x=|E'|/8$ nodes and has $|E'|$ edges, its girth is $O(\log n)$. The algorithm
then repeatedly finds (edge-disjoint) short cycles (of length 
$O(\log n)$) in this contracted graph\footnote{That is, it computes a short cycle $C$, omit the edges of $C$ from the contracted graph and repeat.}, until we are left with at most $|E'|/4$ edges. The cycles 
computed in the contracted graph are then translated 
to cycles in the original graph $G$ by using 
the tree paths $\pi(u,v,T)$ between nodes $u,v$ belonging to the same supernode 
(block). We note that this translation might result in cycles that are non-simple, and this is handled later on.

Our key insight is that even though the tree paths connecting two nodes in a given block might be 
long, i.e., of length $\Omega(D)$, we show that every tree edge is ``used'' by at most \emph{two} blocks. 
That is, for each 
edge $e$ of the tree, there are at most 2 blocks such that the tree path 
$\pi(u,v,T)$ of nodes $u,v$ in the block passes through $e$. (If a block has 
only a single node, then it will use no tree edges.) Since the (non-singleton) blocks have 
constant density, we are able to bound the congestion on each tree edge $e$. The 
translation of cycles in the contracted graph to cycles in the original graph
yields $O(D\log n)$-length cycles in the original graph where every edge 
belongs to $O(1)$ cycles.

The above step already covers $1/4$ of the edges in $E'$. We continue 
this process for $\log n$ times until all edges of $E'$ are covered, and thus get 
a $\log n$ factor in the congestion. 

Finally, to make the output cycle simple, we have an additional ``cleanup" 
step  (procedure 
$\SimplifyCycles$) which takes the output collection of non-simple cycles and 
produces a collection of simple ones. 
In this process, some of the edges in the non-simple cycles might be omitted, 
however, we prove that only tree edges might get omitted and all non-tree edges 
remain covered by the simple cycles. This concludes the high level idea of 
covering the non-tree edges.
We note the our blocking definition is quite useful also for distributed 
implementations. The reason is that although the blocks are not independent, in 
the sense that the tree path connecting two nodes in a given block pass through 
other blocks, this independence is very \emph{limited}. The fact that each tree 
edge is used in the tree paths for only \emph{two} blocks allows us also to 
work distributively on all blocks simultaneously (see 
\Cref{sec:distributedconstcovers}). 

\paragraph{Covering the Tree Edges.}
Covering the tree edges turns out to be the harder case where new ideas are 
required. 
Specifically, whereas for the non-tree edges our goal is to find cycles that use 
the tree edge 
as rarely as possible, here we aim to find cycles that cover all tree edges, but still avoid using a particular tree edge in too many 
cycles.

The construction is based on the notion of swap edges.
For every tree edge $e \in T$, define the \emph{swap} edge 
of $e$ by $e'=\swap(e)$ to be an arbitrary edge in $G$ that restores 
the connectivity of $T \setminus \{e\}$. Since the graph $G$ is $2$-edge 
connected such an edge $\swap(e)$ is guaranteed to exist for every $e \in 
T$. Let $e=(u,v)$ (i.e., $u=p(v)$) and $(u',v')=\swap(e)$. Let $s(v)$ be the endpoint 
of $\swap(e)$ that do not belong to $T(u)$ (i.e., the subtree $T$ rooted at 
$u$), thus $v'=s(v)$. 

The algorithm for covering the tree edges is recursive, where in each 
step we split the tree into two edge disjoint subtrees $T_1,T_2$ that are 
balanced in terms of number of edges. To perform a recursive step, we would 
like to break the problem into two independent subproblems, one that covers the 
edges of $T_1$ and the other that covers the edges of $T_2$. However, observe 
that there might be edges 
$(u,v) \in T_1$ where the only cycle that covers them\footnote{Recall that the 
graph $G$ is two edge connected.} passes through $T_2$ (and vice versa). 

Specifically, we will consider all tree edges $(u,v)\in T_1$, whose second endpoint $s(v)$ of their swap edge is in $T_2$. To cover these tree edges, we employ two procedures, one on $T_1$ and the other 
on $T_2$ that together form the desired cycles (for an illustration, see 
\Cref{fig:swapedge,fig:treecoverfigmany}). First, we mark all nodes $v \in T_1$ 
such that their $s(v)$ is in $T_2$. Then, 
we use an Algorithm called $\EdgeDisjointPath$ (\cite{klein1995nearly} and Lemma 
4.3.2 \cite{Peleg:2000}) which solves the following problem: given a rooted 
tree 
$T$ and a set of $2k$ 
marked nodes $M \subseteq V(T)$ for $k\leq n/2$, find a matching 
of these vertices $\langle u_i,u_j \rangle$ 
into pairs such that the tree paths $\pi(u_i,u_j,T)$ connecting the matched 
pairs are \emph{edge-disjoint}.

We employ Algorithm $\EdgeDisjointPath$ on $T_1$ with the marked nodes as 
described above. Then 
for every pair $u_i,u_j \in T_1$ that got matched by Algorithm 
$\EdgeDisjointPath$, we add a virtual edge between $s(u_i)$ and 
$s(u_j)$ in $T_2$. Since this virtual edge is a non-tree edge with both 
endpoints in $T_2$, we have translated the 
dependency between $T_1$ and $T_2$ to covering a \emph{non-tree} edge in $T_2$. At this 
point, we can simply use Algorithm $\NonTreeCover$ on the tree $T_2$ and the 
non-virtual edges. This computes a cycle collection which covers all virtual 
edges $(s(u_i),s(u_j))$. In the final step, we replace each virtual edge 
$(s(u_i),s(u_j))$ with an $s(u_i)$-$s(u_j)$ path that consists of the tree path $\pi(u_i,u_j,T_1)$, and the 
paths between $u_i$ and $s(u_i)$ (as well as the path connecting $u_j$ 
and $s(u_j)$). 

This above description is simplified and avoids many details and complications 
that we had to address in the full algorithm. For instance, in our algorithm, a 
given tree edge might be responsible for the covering of up to $\Theta(D)$ many 
tree edges. This prevents us from using the edge disjoint paths of Algorithm 
$\EdgeDisjointPath$ in a na\"ive manner. In particular, our algorithm has to 
avoid the multiple appearance of a given tree edge on the same cycle as in such 
a case, when making the cycle simple that tree edge might get omitted and will 
no longer be covered. 
See \Cref{sec:cyclecoverall} for the precise 
details of the proof, and see 
\Cref{fig:main-algorithm-non-tree-intro} for a 
summary of our algorithm.
\begin{figure}[!h]
	\begin{boxedminipage}{\textwidth}
\vspace{3mm} \textbf{Algorithm $\GraphCover(G=(V,E))$}
\begin{enumerate}
	\item Construct a BFS tree $T$ of $G$.
	\item Let $E'=E \setminus E(T)$ be the set of
	non-tree
		edges.
	\item Repeat $O(\log n)$ times:
	\begin{enumerate}
		\item Partition the nodes of $T$ with block density $\densitythreshold$ 
		with respect to (uncovered edges) $E'$.
		\item While there are $t$ edges  
		$(u_1,v_1),\ldots,(u_t,v_t) \in \widetilde{E}$ for $t \le \log n$ such 
		that for all $i \in [t-1]$, $v_i$ and $u_{i+1}$ are in the same block 
		and $v_t$ and $u_1$ are in the same block (with respect to the 
		partitioning $\mathcal{B}$):
		\begin{itemize}
			\item Add the cycle 
			$(u_1,v_1) \circ \pi(v_1,u_2) \circ (u_2,v_2) \circ \pi(v_2,u_3) \circ (u_3,v_3) \circ \dots \circ(v_t,u_1)$ to $\cC$.
					\item Remove the covered edges from $E'$.
		\end{itemize}
	\end{enumerate}
	\item $\cC \gets \cC \cup\TreeCover(T)$ (see 
	\Cref{fig:main-algorithm-tree-edges}).
	\item Output $\SimplifyCycles(\cC)$.
\end{enumerate}
	\end{boxedminipage}
	\caption{Procedure for constructing low-congestion covers.}
	\label{fig:main-algorithm-non-tree-intro}
\end{figure}

\subsection{Universally Optimal Cycle Covers}\label{sec:optcovers}
In this section we describe how to transform the construction of \Cref{sec:cyclehighlevel} into an 
universally optimal construction: covering each edge $e$ in $G$ by almost the shortest possible cycle while having almost no overlap between cycles.
Let $C_e$ be the shortest cycle covering $e$ in $G$ and $\OptCCVal=\max_e{|C_e|}$. 
Clearly, there are graphs with diameter $D=\Omega(n)$ and $\OptCCVal=O(1)$. We 
show:
\begingroup
\def\thetheorem{\ref{thm:cyclecover_opt}}
\begin{theorem}[Rephrased]\label{thm:cyclcoveropt}
For any bridgeless graph $G$, one can construct an 
$(\widetilde{O}(\OptCCVal),\widetilde{O}(1))$ cycle cover $\mathcal{C}$. Also, each edge $e \in G$ has a cycle $C'_e$ in 
$\mathcal{C}$ containing $e$ such that $|C'_e|=\widetilde{O}(|C_e|)$. 
\end{theorem}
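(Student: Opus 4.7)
I would combine \Cref{thm:cyclecover_upper} with the neighborhood cover construction of Awerbuch \etal~\cite{awerbuch1998near}. For every scale $k\in\{2^0,2^1,\dots,2^{\lceil\log n\rceil}\}$, invoke the neighborhood cover routine to obtain a family $\mathcal{S}_k$ of vertex clusters satisfying the three standard guarantees: (a) each cluster $S\in\mathcal{S}_k$ induces a subgraph of (strong) diameter $\widetilde{O}(k)$; (b) every vertex, and thus every edge, belongs to $\widetilde{O}(1)$ clusters of $\mathcal{S}_k$; and (c) for every $v\in V$ the ball $\Gamma_k(v,G)$ is entirely contained in a single cluster of $\mathcal{S}_k$. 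Then, for every cluster $S$ and every $2$-edge-connected component $H\subseteq G[S]$, I would invoke the existentially optimal construction of \Cref{thm:cyclecover_upper} on $H$ to obtain a $(\widetilde{O}(k),\widetilde{O}(1))$-cycle cover $\mathcal{C}_{S,H}$. The final output is the union $\mathcal{C}=\bigcup_{k,S,H}\mathcal{C}_{S,H}$.

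\textbf{Per-edge dilation.} Fix an edge $e=(u,v)$ and let $C_e$ be its shortest covering cycle in $G$, of length $k_e=|C_e|$. Choose the scale $k^\star=2^{\lceil\log k_e\rceil}$, so that $k_e\le k^\star<2k_e$. Since $C_e$ is a cycle of length $k_e$ through $u$, it is entirely contained in $\Gamma_{k^\star}(u,G)$, and by property~(c) some cluster $S\in\mathcal{S}_{k^\star}$ contains $C_e$. Consequently $e$ is not a bridge of $G[S]$, and the $2$-edge-connected component $H\subseteq G[S]$ containing $e$ has diameter $\widetilde{O}(k^\star)=\widetilde{O}(k_e)$. \Cref{thm:cyclecover_upper} applied to $H$ then yields a cycle $C'_e\in\mathcal{C}_{S,H}$ passing through $e$ of length $\widetilde{O}(k_e)=\widetilde{O}(|C_e|)$, as required. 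Taking the maximum over all edges gives the overall dilation bound $\widetilde{O}(\OptCCVal)$.

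\textbf{Congestion.} By property~(b), each edge $e$ lies in $\widetilde{O}(1)$ clusters at each of the $O(\log n)$ scales. Within every such cluster, the cycle cover produced by \Cref{thm:cyclecover_upper} places $e$ on at most $\widetilde{O}(1)$ cycles. Multiplying these factors gives a total congestion of $\widetilde{O}(1)$ on every edge.

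\textbf{Main obstacle.} The subtle point is that restricting $G$ to a cluster $S$ may create bridges that were not present in $G$, so \Cref{thm:cyclecover_upper} cannot be invoked on $G[S]$ directly. The resolution is to decompose each $G[S]$ into its $2$-edge-connected components and run the existential construction on each component separately; property~(c) of the neighborhood cover, combined with the fact that the \emph{entire} shortest cycle $C_e$ fits inside some cluster, is exactly what guarantees that every edge is covered by a non-trivial component at the right scale. A minor technicality is that an edge may also be covered at coarser scales (giving longer cycles), but this does not hurt either the per-edge dilation bound (which is witnessed by one specific cycle) or the congestion bound (which already absorbs the $O(\log n)$ scales).
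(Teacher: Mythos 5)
Your proposal is correct and follows the same high-level plan as the paper -- neighborhood covers at all scales $2^i$, invoke the diameter-based cycle-cover construction inside each cluster, and take the union -- but it handles the cluster-induced bridge problem by a genuinely different device. You decompose each $G[S]$ into its $2$-edge-connected components and feed each component $H$ to \Cref{thm:cyclecover_upper} as a bridgeless black box. This relies on a correct but non-trivial observation that you state only implicitly: every simple $u$--$v$ path in $G[S]$ between two vertices of the same $2$-edge-connected component $H$ stays entirely inside $H$ (in the bridge tree of $G[S]$, a walk from the node $H$ back to itself must recross every bridge it crosses, which a simple path cannot do), and therefore $\mathrm{diam}(H) \le \mathrm{diam}(G[S]) = \widetilde{O}(k^\star)$. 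The paper avoids the decomposition and this diameter-preservation argument entirely by noting (\Cref{obs:nice}) that Algorithm $\GraphCover$ is \emph{nice} -- it does not require its input to be bridgeless but automatically covers every edge lying on some cycle -- and so it can be run on $G[S_i]$ directly. The two routes are equivalent in effect. Yours treats \Cref{thm:cyclecover_upper} strictly as a black box for bridgeless graphs, at the price of the extra component decomposition and its diameter claim; the paper's avoids both but requires verifying the niceness property by looking inside the algorithm. Your congestion accounting ($\widetilde{O}(1)$ per cluster $\times$ $\widetilde{O}(1)$ clusters per scale $\times$ $O(\log n)$ scales) and your per-edge dilation argument (pin $e$ to the scale $k^\star \approx |C_e|$ at which $C_e$ fits in a cluster) match the paper's reasoning exactly, and your ``main obstacle'' paragraph correctly pinpoints the only step where a naive application of the bridgeless theorem would fail.
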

\addtocounter{theorem}{-1}
\endgroup

We will use the fact that our cycle cover algorithm of \Cref{sec:cyclehighlevel} does not require $G$ to be bridgeless, but rather covers every edge $e$ that appears on \emph{some} cycle in $G$. We call such cycle cover algorithm \emph{nice}.  

Our approach is based on the notion of neighborhood covers (also known as \emph{ball carving}). 
The $t$-\emph{neighborhood cover} \cite{awerbuch1996fast} of the graph $G$ is a 
collection of clusters $\mathcal{N}=\{S_1,\ldots, S_r\}$ in the graph such that (i) every vertex $v$ has a cluster that contains its entire $t$-neighborhood, (ii) the diameter of $G[S_i]$ is $O(t \cdot \log n)$ and (iii) every vertex belongs to $O(\log n)$ clusters in $\mathcal{N}$.
The key observation is that if each edge appears on a cycle of length at most $\OptCCVal$, then there must be a (small diameter) subgraph $G[S_i]$ that fully contains this cycle. 

The algorithm starts by computing an $\OptCCVal$-neighborhood cover which decomposes $G$ into almost-disjoint subgraph $G[S_1],\ldots, G[S_r]$, each with diameter $\widetilde{O}(\OptCCVal)$. 
Next, a $(\dilation_i,\congestion_i)$ cycle cover $\mathcal{C}_i$ is constructed in each subgraph $G_i$ by applying algorithm $\GraphCover$ of \Cref{sec:cyclehighlevel} where $\dilation_i=\widetilde{O}(\Diam(G[S_i])$ and $\congestion_i=\widetilde{O}(1)$.
The final cycle cover $\mathcal{C}$ is the union of all these covers $\mathcal{C}=\bigcup_{i}\mathcal{C}_i$.
Since $\Diam(G[S_i])=\widetilde{O}(\OptCCVal)$, the length of all cycles is $\widetilde{O}(\OptCCVal)$. Turning to congestion, since each vertex appears on $O(\log n)$ many subgraphs, taking the union of all cycles increases the total congestion by only $O(\log n)$ factor. Finally it remains to show that all edges are covered. Since each edge $e$ appears on a cycle $C_e$ in $G$ of length at most $\OptCCVal$, there exists a  cluster, say $S_i$ that contains all the vertices of $C_e$. We have that $e$ appears on a cycle in $G[S_i]$ and hence it is covered by the cycles of $\mathcal{C}_i$.  
To provide a cycle cover that is almost-optimal with respect each edge, we 
repeat the above procedures for $O(\log \OptCCVal)$ many times, in the $i^{th}$ 
application, the algorithm constructs $2^i$-neighborhood cover, applies Alg. 
$\cA$ in each of the resulting clusters and by that covers all edges $e$ with $|C_e|\leq 2^i$. The detailed analysis and pseudocodes is in \Cref{sec:fulloptcovers}. 

\subsection{Application to Resilient Distributed Computation}
Our study of low congestion cycle cover is motivated by applications in distributed computing. We given an overview of our two applications to resilient distributed computation that uses the framework of our cycle cover. Both applications are compilers for distributed algorithms in the standard \congest\ model. In this model, each node can send a message of size $O(\log n)$ to 
each of its neighbors in each rounds (the full definition of the model appears in \Cref{def:com-model}). The full details of the compilers appear in \Cref{sec:applications}.

\paragraph{Byzantine Faults.}
In this setting, there is an adversary that can maliciously modify messages sent over the edges of the graph. The adversary is allowed to do the following.
In each round, he picks a single message $M_e$ passed on the edge $e \in G$ and corrupts it in an arbitrary manner (\ie modifying the sent message, or even completely dropping the message). The recipient of the corrupted message is not notified of the corruption. The adversary is assumed to know the inputs to all the nodes, and the entire history of communications up to the present. 
It then picks which edge to corrupt adaptively using this information.

Our goal is to compile any distributed algorithm $\cA$ into an resilient one $\cA'$ while incurring a small blowup in the number of rounds. The compiled algorithm $\cA'$ has the exact same output as $\cA$ for all nodes even in the presence of such an adversary.
Our compiler assumes a preprocessing phase of the graph, which is \emph{fault-free}, in which the cycle covers are computed. The preprocessing phase computes a $(\dilation_1,\congestion_1)$-cycle covers and a $(\dilation_2,\congestion_2)$-two-edge disjoint  variant using \Cref{lem:twoedgecyclecover} (see \Cref{app:threeedgedisjoint} for details regarding two-edge disjoints cycle cover).

For the simplicity of this overview, we give a description of our compiler assuming that the bandwidth on each edge is $\widetilde{O}(\congestion_2)$. This is the basis for the final compiler that uses the standard bandwidth of $O(\log n)$. We note that this last modification is straightforward in a model without an adversary, e.g., by blowing up the round complexity by a factor of $\widetilde{O}(\congestion_2)$, or by using more efficient scheduling techniques such as \cite{leighton1994packet,Ghaffari15}. However, such transformations fail in the presence of the adversary since two messages that are sent in the same round might be sent in different rounds after this transformation. This allows the adversary to modify both of the messages -- which could not be obtained before the transformations, i.e., in the large bandwidth protocol.

The key idea is to use the three edge-disjoint, low-congestion paths between any neighboring pairs $u$ and $v$ provided by the $(\dilation_2,\congestion_2)$ two-edge disjoint cycle covers. Let $\ell=4\dilation_2$, where $\dilation_2$ is an upper bound on the length these paths. Consider round $i$ of algorithm $\cA$. For every edge $e=(u,v)$, let $M_e$ be the message that $u$ sends to $v$ in round $i$ of algorithm $\cA$. 
Each of these messages $M_e$ is going to be sent using $\ell$ rounds, on the three edge-disjoint routes. The messages will be sent repeatedly on the edge disjoint paths, in a pipeline manner, throughout the $\ell$ rounds. That is, in each of the $\ell$ rounds, node $u$ repeatedly sends the message $M_e$ along the three edge disjoint paths to $v$. Each intermediate node forwards any message received on a path to its successor on that path. The endpoint $v$ recovers the message $M_e$ by taking the majority of the received messages in these $\ell$ rounds. Let $a_1 \leq a_2$ be the lengths of the two edge-disjoint paths connecting $u$ and $v$ (in addition to the edge $(u,v)$). We prove that the fraction of uncorrupted messages received by $v$ is at least
$$
\frac{2\ell-a_1-a_2}{3\ell-a_1-a_2} > \frac{6 \dilation_2}{12 \dilation_2-3} > 1/2.
$$
Thus, regardless of the adversary's strategy, the majority of the messages received by $v$ are correct, allowing $v$ to recover the message.

Our final compiler that works in the \congest\ model with bandwidth $O(\log n)$ is more complex. 
As explained above, using scheduling to reduce congestion might be risky. Our approach compiles each round of algorithm $\cA$ in two phases. The first phase uses the standard $(\dilation_1,\congestion_1)$ cycle cover to reduce the number of ``risky receivers" from $n$ down to $O(\dilation_1+\congestion_1)$. 
The second phase restricts attention to these remaining messages which will be re-send along the three edge-disjoint paths in a similar manner to the description above. The fact that we do not know in advance which messages will be handled in the second phase, poses some obstacles and calls for a very careful scheduling scheme. See \Cref{sec:applications} for the detailed compiler and its analysis.

\paragraph{Eavesdropping.}
In this setting, an adversary eavesdrops on an single (adversarially chosen) edge in each round. The goal is to prevent the adversary from \emph{learning} anything, in the information-theoretic sense, on any of the messages sent throughout the protocol. Here we use the two edge disjoint paths, between neighbors, that the cycle cover provides us in a different way. Instead of repeating the message, we ``secret share'' it.

Consider an edge $(u,v)$ and let $M$ be the message sent on $e$.
The sender $u$ secret shares\footnote{We say $M \in \bit^m$ is {\em secret shared} to $k$ shares by choosing $k$ random strings $M^1,\ldots,M^k \in \bit^m$ conditioned on $M = \bigoplus_{j=1}^{k}M^j$. Each $M^j$ is called a share, and notice that the joint distribution of any $k-1$ shares is uniform over $(\bit^m)^{k-1}$ and thus provides no information on the message $M$.} the message $M$ to $d+1$ random shares $M_1,\ldots,M_{d+1}$ such that $M_1 \oplus \dots \oplus M_{d+1} = M$. The first $\ell$ shares of the message, namely $M_1,\ldots, M_{\ell}$, will be sent on the direct $(u,v)$ edge, in each of the rounds of phase $i$, and the $(d+1)^{th}$ share is sent via the $u$-$v$ path $C_e \setminus \{e\}$. At the end of these $d$ rounds, $v$ receives $d+1$ messages. Since the adversary can learn at most $d$ shares out of the $d+1$ shares, we know that he did not learn anything (in the information-theoretic sense) about the message $M$.
See the full details in \Cref{sec:applications}.

\subsection{Distributed Algorithm for Minor-Closed Graphs}\label{sec:minorclose}


We next turn to consider the distributed construction of low-congestion covers for the family of minor-closed graphs. We will highlight here the main ideas for constructing $(\dilation,\congestion)$ 
cycle covers with
$\dilation=\widetilde{O}(D)$ and $\congestion=\widetilde{O}(1)$ within $r=\widetilde{O}(D)$ rounds. Similarly to \Cref{sec:optcovers}, applying the below construction in each component of the neighborhood cover,  yields a nearly optimal cycle cover with $\dilation=\widetilde{O}(\OptCCVal)$ and $\congestion=\widetilde{O}(1)$, where $\OptCCVal$ is the best dilation of any cycle cover in $G$, regardless of the congestion constraint. 

The distributed output of the cycle cover construction is as follows: each edge $e$ knows the edge IDs of all the edges that are covered by cycles that pass through $e$. 
Let $|E(G)|\leq c \cdot n$ for the universal constant $c$ of the minor closed family of $G$ (see \Cref{fc:minorsparse}). 

The algorithm begins by constructing a BFS tree $T \subseteq G$ in $O(D)$ 
rounds. Here we focus on the covering procedure of the non-tree edges. 
Covering the tree edges is done by a reduction to the non-tree just like in the centralized construction.

The algorithm consists of $O(\log n)$ phases, each takes $O(D)$ rounds. In each phase $i$, we are given a subset $E' \subseteq E \setminus E(T)$ that remains to be covered and the algorithm constructs a cycle cover $\mathcal{C}_i$, that is shown to cover most of the $E'$ edges, as follows:

\paragraph{Step (S1): Tree Decomposition into Subtree Blocks.}
The tree is decomposed into vertex disjoint subtrees, which we call \emph{blocks}. These blocks have different properties compared to those of the algorithm in \Cref{sec:cyclehighlevel}.
The density of a block is 
the number of edges in $E'$ that are incident to nodes of the block. Ideally, we would want the densities of the blocks to be bounded by $\densitythreshold=16 \cdot c$. 
Unfortunately, this cannot be achieved while requiring the blocks to be vertex disjoint subtrees.
Our blocks might have an arbitrarily large density, and this would be handled in the analysis.
  
The tree decomposition works layer by layer from the bottom of the tree 
up the root. The weight of a node $v$, $W(v)$, is 
the number of uncovered non-tree edges incident to $v$. Each node 
$v$ of the layer sends to its parent in $T$, the \emph{residual weight} of its subtree, 
namely, the total weight of all the vertices in its subtree $T(v)$ that are not 
yet assigned to blocks.
A parent $u$ that receives the residual weight from its children does the 
following. Let $W'(u)$ be the sum of the total residual weight of its children 
plus $W(u)$. If 
$W'(u)\geq \densitythreshold$, then $u$ declares a 
block and down-casts its ID to all relevant descendants in its 
subtree (this ID serves as the block-ID). Otherwise, it passes $W'(u)$ to 
its parent. 

\paragraph{Step (S2): Covering Half of the Edges.}
The algorithm constructs a cycle collection that
covers two types of $E'$-edges: (i) edges with both endpoints in the same block and (ii) 
pairs of edges $e_1,e_2 \in E'$ whose endpoints connect the same pair of blocks $B_1,B_2$.
That is, the edges in $E'$ that are not covered are those that connect vertices 
in blocks $B,B'$ and no other edge in $E'$ connects these pair of blocks.

The root of each block is responsible for 
computing these edges in its block, and to compute 
their cycles, as follows. All nodes exchange the block-ID with their 
neighbors. Then, each node sends to the root of its block the block IDs of its 
neighbors in $E'$. 
This allows each root to identity the relevant $E'$ edges incident to its block. The analysis shows that despite the fact that the density of the block might be 
large, this step can be done in $O(D)$ rounds.
Edges with both endpoints in the same block are covered by taking their 
fundamental cycle\footnote{The fundamental cycle of an edge $e=(u,v) \notin T$ is 
the cycle formed by taking $e$ and the $u$-$v$ path in $T$.}
into $\mathcal{C}_i$. 
For the second type, the root arbitrarily matches pairs of $E'$ edges that connect vertices in the same pair of blocks. For each matched pair of edges 
$e=(u,v),e'=(u',v')$ with endpoints in block $B_1$ and $B_2$, the cycle for covering these edges 
$C(e,e')$ defined by
$C(e,e')=\pi(u,u',T)\circ e' \circ 
\pi(v',v,T) \circ e$ (i.e., taking the tree paths in each block). Thus, the
cycles have length $O(D)$.
 (see \Cref{fig:minormatching} for an illustration).
This completes the description of phase $i$.

\paragraph{Covering Argument via Super-Graph.}
We show that most of the $E'$-edges belongs to the two types of edges covered by the algorithm. This 
statement does not hold for general graphs, and exploits the properties minor closed families.
Let $E''$ be the subset of $E'$ edges that are \emph{not} covered in phase $i$. 
We consider the super-graph of $T \cup E''$ obtained by contracting the tree 
edges inside each block. Since the blocks are vertex disjoint, the resulting 
super-graph has one super-node per block and the $E''$ edges connecting these 
super-nodes. By the properties of phase $i$, the super-graph does not contain 
multiple edges or self-loops. The reason is that every self-loop corresponds to 
an edge in $E'$ that connects two nodes inside one block. Multiple edges 
between 
two blocks correspond to two $E'$-edges that connect endpoints in the same pair 
of blocks. Both of these $E'$ edges are covered in phase $i$.
Since the density of each block with respect to $E'$ is at least $\densitythreshold$, the super-graph contains at most $n'=|E'|/(c \cdot 8)$ super-nodes and $|E''|$ edges. As the super-graph belongs to the family of minor-closed as well, we have that $|E''|\leq c\cdot n'$ edges, and thus $|E''|\leq |E'|/8$, as required. 
The key observation for bounding the congestion on the edges is:

\begin{observation}
Let $e=(x,y)$ be a tree edge (where $x$ is closer to the root) and let $B$ be the block of $x$ and $y$.
Letting $B_y=B \cap T(y)$, it holds that $\deg(B_y,E')\leq \densitythreshold$. 
\end{observation}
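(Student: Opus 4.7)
The plan is to track the residual weight that $y$ sends up to $x$ during Step~(S1) and equate it with $\deg(B_y, E')$, then exploit the fact that $y$ did not itself declare a block.

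First I would observe that $y$ is not the root of its block $B$. Indeed, $x$ is a strict ancestor of $y$ in $T$ and $x \in B$, so the root of $B$ must be $x$ or some ancestor of $x$, not $y$. Nor can $y$ have declared a different block, since the blocks are vertex-disjoint and $y$ already lies in $B$. Hence, when $y$ is processed in the bottom-up sweep, the test $W'(y) \geq \densitythreshold$ must have failed, giving the strict inequality $W'(y) < \densitythreshold$.

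Next I would argue that $W'(y)$ is exactly $\deg(B_y, E')$. By unrolling the recursive definition of residual weight, $W'(y) = \sum_{v \in U} W(v)$, where $U \subseteq T(y)$ is the set of descendants of $y$ (including $y$ itself) that are still unassigned to any block at the moment $y$ executes its step. I claim $U = B_y$. Between the moment $y$ passes its weight up and the moment the root of $B$ eventually declares $B$, only ancestors of $y$ are processed (by the bottom-up schedule), and a declaring node $r$ down-casts its block ID only inside $T(r)$; hence no assignment events touch $T(y)$ during this interval. Therefore a node $v \in T(y)$ is unassigned at the earlier time if and only if it is unassigned when $B$ is declared, and by the semantics of the down-cast the latter is equivalent to $v \in B$, \ie $v \in B_y$. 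Thus $W'(y) = \sum_{v \in B_y} W(v) = \deg(B_y, E')$.

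Putting the two steps together yields $\deg(B_y, E') = W'(y) < \densitythreshold$, which gives the stated bound $\deg(B_y, E') \leq \densitythreshold$. The main subtlety I expect to encounter is the temporal bookkeeping in the second step: carefully formalizing that ``nothing inside $T(y)$ changes between $y$'s pass-up and $B$'s declaration'' so that the recursive unrolling of $W'(y)$ lines up exactly with the vertex set $B_y$. This ultimately rests on two structural properties of the construction in Step~(S1), namely the strictly bottom-up order of processing and the fact that the down-cast of a block ID stays confined to the subtree of the block's root.
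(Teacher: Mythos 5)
Your proof is correct and follows essentially the same route as the paper's (one-sentence) argument: if $\deg(B_y,E') > \densitythreshold$ then $y$ would have declared $B_y$ as its own block, contradicting the hypothesis that $x$ and $y$ lie in the same block $B$. What you add is the bookkeeping that identifies $W'(y)$ with $\deg(B_y,E')$, by arguing that the unassigned descendants of $y$ at the moment $y$ passes up are precisely $B_y$ (because no declaration event touches $T(y)$ between $y$'s step and the declaration of $B$) --- a detail the paper treats as immediate from the construction.
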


This observation essentially implies that blocks can be treated as if they have bounded densities, hence taking the tree-paths of blocks into the cycles keeps the congestion bounded.
The distributed algorithm for covering the tree edges essentially mimics the centralized construction of 
\Cref{sec:cyclecoverall}. For the computation of the swap edges distributively we will use the algorithm of Section 4.1 in \cite{ghaffari2016near}.
The full analysis of the algorithm as well as the prcoedure that covers the tree edges, appear in 
\Cref{app:optminor}.

\section{Low Congestion Cycle Cover}\label{sec:cyclecoverall}
We give the formal definition of a cycle cover and prove our main theorem 
regarding low-congestion cycle covers. 

\begin{definition}[Low-Congestion Cycle Cover]\label{def:cycle-cover}
For a given graph $G=(V,E)$, a $(\dilation,\congestion)$ low-congestion 
cycle cover $\cC$ of $G$ is a collection of cycles that cover all 
edges of $G$ such that each cycle $C \in \cC$ is of length at most 
$O(\dilation)$ and each edge appears in at most $O(\congestion)$ cycles in 
$\cC$. That is, for every $e \in E$ it holds that $ 1\le |\{C \in \mathcal{C} : 
e 
\in C\}| \le O(\congestion)$. 
\end{definition}

We also consider partial covers, that cover only a subset of edges $E'$. We 
say that a cycle cover $\mathcal{C}$ is a $(\dilation,\congestion)$ cycle cover 
for $E' \subseteq E$, if all cycles are of length at most $\dilation$, each edge of 
$E'$ appears in at least one of the cycles of $\mathcal{C}$, and no edge in 
$E(G)$ appears in more than $c$ cycles in $\mathcal{C}$. That is, in this 
restricted definition, the covering is with respect to the subset of edges 
$E'$, however, the congestion limitation is with respect to all graph edges.

The main contribution of this section is an existential result regarding cycle 
covers with low congestion. Namely, we show that {\em any graph} that is 2-edge 
connected has a cycle cover where each cycle is at most the diameter of the 
graph (up to $\log n$ factors) and each edge is covered by $O(\log n)$ 
cycles. Moreover, the proof is 
actually constructive, and yields a polynomial time algorithm that computes 
such a cycle cover.

\begingroup
\def\thetheorem{\ref{thm:cyclecover_upper}}
\begin{theorem}[Rephrased]
For every bridgeless $n$-vertex graph $G$ with diameter $D$,
there exists a $(\dilation,\congestion)$-cycle cover with $\dilation=O(D\log 
n)$ and $\congestion=O(\log^3 n)$.
\end{theorem}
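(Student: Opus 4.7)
The plan is to prove the theorem by separately covering the non-tree edges and the tree edges with respect to a BFS tree $T$ of $G$ rooted at an arbitrary vertex. Since $T$ has depth at most $D$, any tree path between two vertices has length at most $2D$, so cycles assembled from a constant number of non-tree edges glued together by tree paths will have length $O(D)$. The overall structure iterates $O(\log n)$ times to drive down the uncovered set, contributing a $\log n$ factor to the final dilation and congestion.

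For the non-tree edges $E' = E \setminus E(T)$, I would partition $V(T)$ into \emph{blocks} using a post-order numbering and a constant density threshold $\densitythreshold$: each block is either a singleton vertex with at least $\densitythreshold$ incident $E'$-edges, or a contiguous (in post-order) group of vertices whose total $E'$-degree is between $\densitythreshold$ and $2\densitythreshold$. The crucial structural claim is that \emph{every tree edge is used by at most two blocks} in the sense that only two blocks can contain a pair of vertices whose $T$-path crosses that edge; this follows from the post-order contiguity of non-singleton blocks and gives us control over tree congestion. Contract each block into a supernode, yielding a multigraph on at most $|E'|/8$ nodes with $|E'|$ edges, so by the Moore bound (\Cref{fc:girthmoore}) it has girth $O(\log n)$. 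Repeatedly peeling off edge-disjoint short cycles in this multigraph covers at least $3|E'|/4$ edges, each lifted back to $G$ by replacing consecutive supernode-edges with the $T$-paths between their endpoints inside the shared block. Repeating the whole loop $O(\log n)$ times covers all of $E'$; the resulting congestion on any tree edge is $O(\log^2 n)$ and on any non-tree edge is $O(\log n)$, while each cycle has length $O(D \log n)$.

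For the tree edges I would use swap edges together with the $\EdgeDisjointPath$ primitive of \cite{klein1995nearly}. Recurse on balanced edge-disjoint splits $T = T_1 \cup T_2$: the only tree edges whose covering cycle is forced to leave $T_i$ are those whose swap edge has its second endpoint outside $T_i$. Mark these nodes in $T_1$, use $\EdgeDisjointPath$ to pair them up via edge-disjoint $T_1$-paths, and for each matched pair $(u_i, u_j)$ introduce a virtual non-tree edge between $s(u_i)$ and $s(u_j)$ in $T_2$. Invoking the non-tree cover procedure on $T_2$ with these virtual edges (plus the already-present non-tree edges) produces cycles that, once the virtual edges are expanded into their defining $T_1$-paths plus swap-edge connections, become cycles in $G$ covering the originally marked $T_1$-edges. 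Recursing into $T_1$ and $T_2$ takes $O(\log n)$ depth, contributing the final logarithmic factor to the congestion.

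The main obstacle is twofold. First, verifying the ``at most two blocks use each tree edge'' invariant rigorously, which requires arguing from post-order numbering and the definition of block boundaries. Second, controlling congestion and non-simplicity in the tree-edge phase: a single cycle may traverse the same tree edge multiple times (since a marked node's swap path plus the $\EdgeDisjointPath$ matching can revisit edges), and one must argue that running $\SimplifyCycles$ at the end only deletes tree edges that remain covered by \emph{other} cycles in the collection, never deleting the last cover of any edge. Combining the bounds from the two phases gives $\dilation = O(D \log n)$ and $\congestion = O(\log^3 n)$ as required.
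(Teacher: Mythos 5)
Your proposal follows the paper's architecture exactly at the top level: BFS tree, block partitioning by post-order for non-tree edges, the ``at most two blocks use any tree edge'' invariant, Moore bound on the contracted supergraph, $O(\log n)$ iterations, swap edges and $\EdgeDisjointPath$ for tree edges, balanced recursive splits. The non-tree phase is essentially the paper's $\NonTreeCover$; note that your per-iteration bookkeeping giving $O(\log^2 n)$ on tree edges is an overestimate. Since at most two blocks of density at most $\densitythreshold = O(1)$ touch any tree edge per iteration, each iteration adds only $O(1)$ cycles through that edge, so the non-tree cover costs $O(\log n)$ congestion. This does not change the final bound but is worth getting right.

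The genuine gap is precisely at the obstacle you flag for the tree-edge phase, and the fix you sketch is not the one that works. You propose to handle non-simplicity by arguing that $\SimplifyCycles$ deletes only tree edges ``that remain covered by other cycles.'' But the scheme as you describe it has no mechanism forcing another cycle to pick up a dropped edge, so that claim is unsupported. The paper instead guarantees a stronger, \emph{per-cycle} property: for each edge $e$ it covers, the constructed cycle contains every edge of $e \circ P_e$ \emph{exactly once}, so $\SimplifyCycles$ cannot drop them. Getting this requires two ideas your proposal omits. First, one restricts attention to a maximal set $I(T)$ of tree edges whose paths $P_e$ are pairwise tree-edge-disjoint (built greedily top-down); every other tree edge lies on some $P_e$ with $e \in I(T)$, so covering $I(T)$ this way covers all of $E(T)$. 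Second, even with $I(T)$ in hand, the edge-disjoint matching paths $\pi(v_1,v_2,T'_1)$ produced by $\EdgeDisjointPath$ can still intersect the $P_e$ paths of other matched pairs; the paper handles this by defining a directed conflict graph $G_{\Sigma}$ on the matched pairs, proving its outdegree is at most $1$, $3$-coloring it, and running the virtual-edge / $\NonTreeCover$ step once per color class. Your outline also omits that, within a single recursion level, $\EdgeDisjointPath$ must be iterated $O(\log n)$ times (each round only guarantees covering roughly half of the still-uncovered marked edges), which is where the third $\log$ factor in the $O(\log^3 n)$ congestion actually comes from. Without $I(T)$ and the conflict-graph coloring, the simplicity argument does not go through.
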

\addtocounter{theorem}{-1}
\endgroup


The construction of a $(\dilation,\congestion)$-cycle cover $\cC$ starts by constructing a BFS tree $T$.
The algorithm has two sub-procedures: the first computes a cycle collection $\cC_1$ for covering the \emph{non-tree} edges $E_1=E(G) \setminus E(T)$, the second computes a cycle collection $\cC_2$ for covering the \emph{tree} edges $E_2=E(T)$. We describe each cover 
separately. The pseudo-code for the algorithm is given in 
\Cref{fig:main-algorithm}. The algorithm uses two procedures, 
$\NonTreeCover$ and $\TreeCover$ which are given in 
\Cref{sec:cover-non-tree} and \Cref{sec:cover-tree-edges} respectively.

\begin{figure}[!h]
	\begin{boxedminipage}{\textwidth}
	\vspace{3mm} \textbf{Algorithm $\GraphCover(G=(V,E))$}
	\begin{enumerate}
		\item Construct a BFS tree $T$ of $G$ (with respect to edge set $E$).
		\item Let $E_1=E(G)\setminus E(T)$ be all non-tree edges, and let $E_2=E(T)$ be all 
		tree edges.
		\item $\cC_1 \gets \NonTreeCover(T,E_1)$.
		\item $\cC_2 \gets \TreeCover(T,E_2)$
		\item Output $\cC_1 \cup \cC_2$.
	\end{enumerate}
	\end{boxedminipage}
	\caption{Centralized algorithm for finding a cycle cover of a graph $G$.}
	\label{fig:main-algorithm}
\end{figure}
%
\subsection{Covering Non-Tree Edges}\label{sec:cover-non-tree}
Covering the non-tree edge mainly uses the fact that while the graph has many edges, then the girth 
is small. Specifically, using Fact \ref{fc:girthmoore}, with $k=2\log{n}$ we get 
that the girth of a graph with at least $2n$ edges is at most $4\log{n}$. 
Hence, as long as that the graph has at least $2n$ edges, a cycle of length 
$4\log n$ can be found. We get that all but $2n$ edges in $G$ are covered by 
edge-disjoint cycles of length $O(\log n)$.

In this subsection, we show that the set of edges $E_1$, \ie the set of 
non-tree edges can be covered by a 
$(O(D\log n), \widetilde{O}(1))$-cycle cover denoted $\cC_1$. Actually, what we show is 
slightly more general: if the tree is of depth $D(T)$ the length of the cycles 
is at most $O(D(T)\log n)$. \Cref{lem:non-treecover} will be useful for 
covering 
the tree-edges as well in \Cref{sec:cover-tree-edges}.

\begin{lemma}\label{lem:non-treecover}
Let $G=(V,E)$ be a $n$-vertex graph, let $T \subseteq G$ be a spanning tree of depth 
$D(T)$. Then, there exists 
an $(O(D(T) \log n), O(\log n))$-cycle cover $\cC_1$ for the edges of $E(G)\setminus E(T)$.  
\end{lemma}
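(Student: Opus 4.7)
The plan is to build $\cC_1$ in $O(\log n)$ phases, each of which covers a constant fraction of the remaining non-tree edges while adding only $O(1)$ congestion to every edge and producing cycles of length $O(D(T)\log n)$. Let $E'\subseteq E(G)\setminus E(T)$ denote the set of currently uncovered non-tree edges at the start of a phase. First, I would partition the vertices of $T$ into \emph{blocks} of consecutive post-order indices, using a density threshold $\densitythreshold$ (a constant), so that every block $B$ is either a singleton with $\deg(B,E')\geq \densitythreshold$, or has at least two nodes and $\deg(B,E')\leq 2\densitythreshold$. A greedy scan of the post-order sequence achieves this, and the key structural property I would then prove is that \emph{each tree edge is used by the internal tree-paths of at most two blocks}: for any tree edge $e=(p(v),v)$, the only blocks whose post-order interval straddles $e$ are the block containing $v$ and the block immediately following it in post-order.

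Next, I would form the multigraph $\auxgraph$ by contracting each block into a supernode and keeping all of $E'$ as (possibly parallel or self-looping) edges. Since every block has density at least $\densitythreshold\geq 16$, the number of supernodes is at most $|E'|/8$, while $\auxgraph$ has $|E'|$ edges. By \Cref{fc:girthmoore} applied with $k=2\log n$, as long as $\auxgraph$ has at least $2$ times its vertex count in edges, it contains a cycle of length $O(\log n)$; I would repeatedly extract such short cycles and remove their edges from $\auxgraph$. This process only stops once at most $|E'|/4$ edges remain in $\auxgraph$, so at least $|E'|/2$ edges of $E'$ are packed into edge-disjoint cycles of length $O(\log n)$ in $\auxgraph$ (self-loops correspond to $E'$-edges with both endpoints in one block and are handled as singletons, taking their fundamental cycle in $T$).

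Then I would lift each such cycle $\widetilde{C}=(B_1,B_2,\ldots,B_t,B_1)$ in $\auxgraph$ back to a closed walk in $G$ by, for each consecutive pair of $E'$-edges $(u_i,v_i),(u_{i+1},v_{i+1})$ with $v_i,u_{i+1}$ in the same block $B_{i+1}$, concatenating the tree path $\pi(v_i,u_{i+1},T)$, which has length at most $2D(T)$. Each lifted walk has length $O(D(T)\log n)$. Because each tree edge lies on the tree-path of at most two blocks, and each block's tree-paths carry at most $O(\densitythreshold)=O(1)$ chosen $E'$-edges in this phase, every tree edge receives congestion $O(1)$ per phase; every non-tree edge is used at most once per phase by construction. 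Running the procedure for $O(\log n)$ phases drives the remaining uncovered set to empty (each phase covers a constant fraction of the survivors), producing a cover $\cC_1$ with dilation $O(D(T)\log n)$ and congestion $O(\log n)$. Finally, I would invoke the $\SimplifyCycles$ post-processing to make each closed walk into a simple cycle; since this step only removes edges (and, as noted in the overview, only tree edges can be dropped), all $E'$-edges remain covered, so $\cC_1$ is a valid $(O(D(T)\log n),O(\log n))$-cover of $E(G)\setminus E(T)$.

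\textbf{Main obstacle.} The delicate step is verifying the two-blocks-per-edge property for the post-order partition and then combining it cleanly with the $\densitythreshold$-threshold accounting to conclude constant congestion per phase. Without this combinatorial lemma the contraction approach does not control tree-edge congestion, so establishing it carefully, including the edge case of singleton ``heavy'' blocks, is the technical heart of the proof.
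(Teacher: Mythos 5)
Your plan reproduces the paper's approach almost exactly: post-order block partitioning with a constant density threshold, the two-blocks-per-tree-edge lemma, contraction to a multigraph, repeated girth-based short-cycle extraction, lifting via tree paths, $O(\log n)$ phases, and the final $\SimplifyCycles$ cleanup — all match. One small slip in your sketch of the key lemma: for $e=(p(v),v)$, the two blocks that can straddle the post-order interval $[\min_N(v),N(v)]$ of $T(v)$ are the block of $v$ (if it also contains $N(v)+1$) and the block of $v$'s \emph{leftmost descendant} (if it also contains $\min_N(v)-1$); it is not ``the block immediately following'' $v$'s block, which lies entirely above $N(v)$. This misidentification is incidental — the bound of at most two straddling blocks, which is what the congestion accounting actually uses, is correct and proved just as you intend.
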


An additional useful property of the cover $\cC_1$ is that despite the fact that the length of the cycles in $\cC_1$ is $O(D\log n)$, each cycle is used to cover $O(\log n)$ edges.
\begin{lemma}\label{lem:propnontree}
Each cycle in $\cC_1$ is used to cover $O(\log n)$ edges in $E(G)\setminus E(T)$.
\end{lemma}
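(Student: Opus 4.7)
\textbf{Proof plan for Lemma~\ref{lem:propnontree}.}
The plan is to observe that each cycle $C \in \cC_1$ is assembled by the inner \texttt{while} loop of procedure $\NonTreeCover$ out of a short cycle found in the contracted \emph{block} multigraph, and that the length of this auxiliary cycle directly bounds the number of non-tree edges of $G$ on $C$.

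First I would fix an arbitrary iteration of the outer repeat-loop and recall the construction: after partitioning the tree into blocks with respect to the set $\widetilde{E}\subseteq E_1$ of currently uncovered non-tree edges, the algorithm forms the multigraph $H$ whose super-nodes are the blocks and whose edges are precisely the edges of $\widetilde{E}$. The cycles added to $\cC_1$ in this phase are exactly the cycles produced by the \texttt{while} loop, namely sequences $(u_1,v_1),\ldots,(u_t,v_t)$ of non-tree edges with the property that each pair $(v_i,u_{i+1})$ and $(v_t,u_1)$ lies in a common block; these correspond bijectively to simple cycles of length $t$ in $H$.

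Next I would invoke the girth argument used in \Cref{sec:cyclehighlevel}. Because the number of super-nodes of $H$ is at most $|\widetilde{E}|/8\le n$ and the number of super-edges is $|\widetilde{E}|$, Fact~\ref{fc:girthmoore} (with $k=2\log n$) guarantees that, as long as $H$ has at least $2n$ edges, it contains a cycle of length at most $4\log n$. The \texttt{while} loop only adds cycles of length $t\le \log n$ (equivalently, it is written so that whenever no such short cycle exists the loop terminates for that phase and the repeat-loop moves on). Hence every cycle $C$ assembled in this step uses exactly $t\le \log n$ non-tree edges, namely $(u_1,v_1),\ldots,(u_t,v_t)$.

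Finally I would observe that the remaining portion of $C$ is the concatenation of tree paths $\pi(v_i,u_{i+1},T)$, each of which consists entirely of tree edges and therefore contains no edge of $E\setminus E(T)$. Consequently the only edges of $E\setminus E(T)$ appearing on $C$ are the $t$ generating non-tree edges, giving $|C\cap (E\setminus E(T))|=t\le \log n$. (If a subsequent simplification step such as $\SimplifyCycles$ is applied, it can only delete edges, so the count cannot grow.) Since this argument is independent of the phase, every cycle of $\cC_1$ covers $O(\log n)$ non-tree edges, proving the lemma. No serious obstacle is expected here; the only point to keep straight is that tree edges on $C$ are not being counted, since they lie in $E(T)$ and so outside $E(G)\setminus E(T)$.
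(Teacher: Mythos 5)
Your proposal is correct and follows essentially the same approach as the paper: the cycles in $\cC_1$ are built from length-$t$ cycles ($t \le \log n$) in the contracted block graph, and translating them into $G$ only introduces tree edges via the $\pi(v_i, u_{i+1}, T)$ segments, so at most $\log n$ edges of $E(G)\setminus E(T)$ can appear on any resulting cycle. The one superfluous detour is your invocation of the Moore/girth bound — that is needed to argue that short cycles in the contracted graph \emph{exist} (i.e., for the covering/progress claim of \Cref{lem:non-treecover}), not to bound the number of non-tree edges per cycle, which follows directly from the explicit $t \le \log n$ cap in the while loop; otherwise the argument matches the paper's.
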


The rest of this subsection is devoted to the proof of 
\Cref{lem:non-treecover}. A key 
component in the proof is a 
partitioning of 
the nodes of the tree $T$ into \emph{blocks}. The partitioning is based on a 
numbering of the nodes from $1$ to $n$ and grouping nodes with consecutive 
numbers into blocks under certain restrictions. We define a numbering of the 
nodes 
$$N: V(T) \to \left [|V(T)| \right]$$
by traversing the nodes of the tree in post-order traversal. That is, we let 
$N(u)=i$ if $u$ is the $\ith{i}$ node traversed.
Using this mapping, we proceed to defining a 
partitioning of the nodes into blocks and show some of their useful properties. 

For a block $B$ of nodes and a subset of non-tree edges $E'\subseteq E_1$, the notation $\deg(B,E')$ is the number of edges in $E'$ that 
have an endpoint in the set $B$ (counting multiplicities). We call this the \emph{density} of block $B$ 
with respect to $E'$. 
For a subset of edges $E'$, and a density bound $\densitythreshold$ (which will 
be set to a constant), 
an $(E',\densitythreshold)$-partitioning $\mathcal{B}$ is a partitioning of the 
nodes of the graph into blocks that satisfies the following properties: 
\begin{enumerate}\label{blockdef}
	\item Every block consists of a consecutive subset of nodes (w.r.t.\ their 
	$N(\cdot)$ numbering).
	\item If $\deg(B,E') > \densitythreshold$ then $B$ 
	consists of a single node.
	\item The total number of blocks is at most 
	$4|E'|/\densitythreshold$.
\end{enumerate} 
\begin{claim}
For any $\densitythreshold$ and $E'$, there exists an 
$(E',\densitythreshold)$-partitioning partitioning of the nodes 
of $T$ satisfying the above properties.
\end{claim}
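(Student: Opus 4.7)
The plan is to build the partition by a single greedy left-to-right sweep through the vertices in the order induced by $N(\cdot)$. I would maintain a ``current'' block $B_{\text{cur}}$, initially empty, and process $v_1,v_2,\ldots,v_n$ in order. For each $v_j$: if $\deg(v_j,E')>\densitythreshold$, close $B_{\text{cur}}$ (if nonempty) and emit $\{v_j\}$ as a fresh singleton block; otherwise, if appending $v_j$ to $B_{\text{cur}}$ would push its density above $\densitythreshold$, close $B_{\text{cur}}$ and start a new block containing $v_j$; otherwise, simply append $v_j$ to $B_{\text{cur}}$. At termination, close the final block. Properties (1) and (2) of the definition are then immediate from the construction: every block is a maximal interval of consecutive $N(\cdot)$-indices, and no non-singleton block is ever grown past density $\densitythreshold$, so $\deg(B,E')>\densitythreshold$ forces $|B|=1$.

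The heart of the proof is the count bound in property (3). The key local lemma I would establish is that for every pair of consecutive output blocks $B_i,B_{i+1}$,
\[
\deg(B_i,E')+\deg(B_{i+1},E')>\densitythreshold.
\]
If either block is a singleton produced by the high-degree rule, the inequality is trivial since that block alone has density $>\densitythreshold$. Otherwise, $B_i$ was closed only because appending the first node $v$ of $B_{i+1}$ would have made $\deg(B_i,E')+\deg(v,E')>\densitythreshold$, and $v\in B_{i+1}$ implies $\deg(B_{i+1},E')\geq\deg(v,E')$, giving the bound.

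To finish, I would group the output blocks into $\lfloor k/2\rfloor$ disjoint consecutive pairs $(B_1,B_2),(B_3,B_4),\ldots$ Each pair has combined density strictly greater than $\densitythreshold$ by the local lemma, so summing over all pairs yields total density strictly greater than $\lfloor k/2\rfloor\cdot\densitythreshold$. On the other hand, counting each edge of $E'$ once per endpoint gives $\sum_{B}\deg(B,E')=2|E'|$, and since the chosen pairs are vertex-disjoint their combined sum is bounded above by this quantity. Combining, $\lfloor k/2\rfloor\cdot\densitythreshold<2|E'|$, which rearranges in both parities of $k$ to $k\leq 4|E'|/\densitythreshold$, as required.

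The main subtlety — more bookkeeping than real obstacle — is the case analysis inside the local lemma: one must simultaneously handle high-degree singletons, low-degree singletons emitted simply because the preceding block happened to be ``almost full,'' and the boundary first/last blocks. In each case the reason a new block was started forces a combined density greater than $\densitythreshold$ with its predecessor, so the pairing argument goes through uniformly without any additional slack being lost.
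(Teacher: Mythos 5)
Your proof is correct and follows the same overall strategy as the paper: a left-to-right greedy sweep in $N(\cdot)$-order, the key local observation that any two consecutive blocks $B_i,B_{i+1}$ satisfy $\deg(B_i,E')+\deg(B_{i+1},E')>\densitythreshold$ (since $B_i$ was closed precisely because the first node of $B_{i+1}$ would push it over), and a charging argument against $\sum_B\deg(B,E')=2|E'|$.

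The only noteworthy difference is in the final counting. The paper singles out the $t$ blocks of density at most $\densitythreshold/2$, pairs each with its successor, and handles the remaining high-density blocks separately, arriving at $\sum_i\deg(B_i,E')\ge\ell\densitythreshold/2$. You instead pair \emph{all} consecutive blocks into disjoint pairs $(B_1,B_2),(B_3,B_4),\ldots$ and sum the local bound over the pairs. Your route is arguably cleaner and avoids the paper's somewhat delicate $t$-accounting, at the cost of a tiny wrinkle when the number of blocks $k$ is odd: you only get $(k-1)\densitythreshold/2 < 2|E'|$, i.e.\ $k<4|E'|/\densitythreshold+1$, which is $k\le\lceil 4|E'|/\densitythreshold\rceil$ rather than $k\le 4|E'|/\densitythreshold$ on the nose. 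This is a one-unit discrepancy that is irrelevant to how the bound is used downstream (it feeds into an $O(\cdot)$), and can be repaired by e.g.\ assigning $B_1$ to both the odd and even pairings and averaging, or simply noting that the unpaired $B_k$ contributes nonnegatively. So the argument is essentially the paper's, with a slightly more uniform pairing scheme.
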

\begin{proof}
This partitioning can be constructed by a greedy algorithm that traverses 
nodes of $T$ in increasing order of their numbering $N(\cdot)$ and groups them 
into blocks while the density of the block does not exceed $\densitythreshold$ 
(see \Cref{fig:main-algorithm-partitioning} for the precise procedure).
\begin{figure}[!h]
	\begin{boxedminipage}{\textwidth}
\vspace{3mm} \textbf{Algorithm $\Partition(T,E')$}
\begin{enumerate}
	\item Let $\mathcal{B}$ be an empty partition, and let $B$ be an empty 
	block.
	\item Traverse the nodes of $T$ in post-order, and for each node $u$ do:
	\begin{enumerate}
		\item If $\deg(B \cup \{u\},E')  \le 
		\densitythreshold$ add $u$ to $B$.
		\item Otherwise, add the block $B$ to $\mathcal{B}$ and initialize a 
		new block $B=\{u\}$.
	\end{enumerate}
	\item Output $\mathcal{B}$.
\end{enumerate}
	\end{boxedminipage}
	\caption{Partitioning procedure.}
	\label{fig:main-algorithm-partitioning}
\end{figure}
Indeed, properties 1 and 2 are satisfied directly by the construction. For 
property 3, let $t$ be the number of blocks $B$ with $\deg(B,E') \le 
\densitythreshold/2$. For such a block $B$ let $B'$ be the block that comes after $B$. By the construction, we know that $B'$ satisfies $\deg(B,E') + \deg(B',E') >
\densitythreshold$. Let $B_1,\ldots,B_{\ell}$ be the final 
partitioning. Then, we have $t$ pairs of blocks that have density at least 
$\densitythreshold$ and the rest of the $(\ell-t/2)$ blocks that have density 
at least $\densitythreshold/2$. Formally, we have
$$\sum_{i=1}^{\ell}\deg(B_i,E') \ge t\densitythreshold + (\ell - 
t/2)\densitythreshold/2 \ge \ell \densitythreshold/2.$$
On the other hand, since it is a partitioning of $E'$ we have that 
$\sum_{i=1}^{\ell}\deg(B_i,E') = 2|E'|$. Thus, we get that $\ell 
\densitythreshold/2 \le 2|E'|$ and therefore $\ell \le 4|E'|/\densitythreshold$ 
as required.
\end{proof}

Our algorithm for covering the edges of $E_1=E(G)\setminus E(T)$ makes use of this block 
partitioning with $\densitythreshold = 16$. For any two nodes $u,v \in V(T)$, 
The algorithm begins with an empty collection $\cC$ and then 
performs $\log n$ iterations where each iteration works as follows: Let 
$E'
\subseteq E_1$ be the set of uncovered edges (initially $E'=E_1$). 
Then, we partition the nodes of $T$ with respect to $E'$ and 
density parameter $\densitythreshold$. Finally, we search for cycles of length 
$O(\log n)$ between the blocks. If such a cycle exists, we map it to a 
cycle in $G$ by connecting nodes $u,v$ within a block by the path $\pi(u,v)$ in 
the tree $T$. This way a cycle of length $O(\log n)$ between the blocks translates 
to a cycle of length $O(D(T)\log n)$ in the original graph $G$. Denote the 
resulting collection by $\cC$.

We note that the cycles $\cC$ might not be \emph{simple}. This 
might happen if and only if the tree paths $\pi(v_i,u_{i+1})$ and 
$\pi(v_j,u_{j+1})$ intersect for some $j \in [t]$. Notice that the if an edge 
appears more than once in a cycle, then it must be a tree edge. Thus, we can 
transform any non-simple cycle $C$ into a collection of simple cycles that 
cover all edges that appeared only once in $C$ (the formal procedure is given 
at \Cref{fig:main-simplify-cycles}). Since these cycle are 
constructed to cover only non-tree edges, this transformation does not 
damage the covering of the $E_1$ edges. The formal description 
of the algorithm is given in \Cref{fig:main-algorithm-non-tree}.

\begin{figure}[!h]
	\begin{boxedminipage}{\textwidth}
\vspace{3mm} \textbf{Algorithm $\mathsf{NonTreeEdgeCover}(T,E_1)$}
\begin{enumerate}
	\item Initialize a cover $\cC$ as an empty set.
	\item Repeat $O(\log |E_1|)$ times:
	\begin{enumerate}
		\item Let $E' \subseteq E_1$ be the subset of all uncovered 
		edges.
		\item Construct an $(E', \densitythreshold)$-partitioning 
		$\mathcal{B}$ of the nodes of $T$.
		\item While there are $t$ edges  
		$(u_1,v_1),\ldots,(u_t,v_t) \in E'$ for $t \le \log n$ such 
		that for all $i \in [t-1]$, $v_i$ and $u_{i+1}$ are in the same block 
		and $v_t$ and $u_1$ are in the same block (with respect to the 
		partitioning $\mathcal{B}$):
		\begin{itemize}
			\item Add the cycle 
			$(u_1, v_1) \circ \pi(v_1,u_2) \circ (u_2,v_2) \circ \pi(v_2,u_3) \circ (u_3,v_3) \circ \ldots \circ (u_t,v_t) \circ (v_t,u_1)$ to $\cC$.
			\item Remove $(u_1,v_1),\ldots,(u_t,v_t)$ from $E'$.
		\end{itemize}
	\end{enumerate}
	\item Output $\cC' \gets \SimplifyCycles(\cC)$.
\end{enumerate}
	\end{boxedminipage}
	\caption{Procedure for covering non-tree edges.}
	\label{fig:main-algorithm-non-tree}
\end{figure}

\begin{figure}[!h]
	\begin{boxedminipage}{\textwidth}
\vspace{3mm} \textbf{Algorithm $\SimplifyCycles(\cC)$}
\begin{enumerate}
	\item While there is a cycle $C \in \cC$ with a vertex $w \in C$ that 
	appears more than once:
	\begin{enumerate}
		\item Remove $C$ from $\cC$.
		\item Let $C=(v_1,v_2) \circ \dots \circ (v_{k-1},v_{k})$ and define 
		$v_{k+i}=v_{i}$.
		\item Let $i_1,\ldots,i_{\ell}$ be such that $v_{i_j}=w$ for all $j \in 
		[\ell]$, and let $i_{\ell+1}=i_1$.
		\item For all $j \in [\ell]$ let $C_j=(v_{i_j},v_{i_j + 1})\circ  \dots 
		\circ (v_{i_{j+1}-1},v_{i_{j+1}})$, and if $|C_j| \ge 3$, add $C_j$ to 
		$\cC$.
	\end{enumerate}
	\item Output $\cC$.
\end{enumerate}
	\end{boxedminipage}
	\caption{Procedure making all cycles in $\cC$ simple.}
	\label{fig:main-simplify-cycles}
\end{figure}

We proceed with the analysis of the algorithm, and show that it yields the 
desired cycle cover. That is, we show three things: that every cycle has length 
at most $O(D(T)\log n)$, that each edge is covered by at most $O(\log n)$ 
cycles, and that each edge has at least one cycle covering it.

\paragraph{Cycle Length.}
The bound of the cycle length follows directly from the construction. The 
cycles added to the collection are of the form $(u_1, v_1) \circ 
\pi(v_1,u_2) 
\circ (u_2,v_2) \circ \pi(v_2,u_3) \circ (u_3,v_3) \circ \dots \circ 
(v_t,u_1)$, where 
each $\pi(v_i,u_{i+1})$ are paths in the tree $T$ and thus are of length at 
most $2D(T)$. Notice that the simplification process of the cycles can only 
make the cycles shorter. Since $t \le \log n$ we get that the cycle lengths are 
bounded by $O(D(T)\log n)$.

\paragraph{Congestion.}
To bound the congestion of the cycle cover we exploit the structure of the 
partitioning, and the fact that each block in the partition has a low density. 
We begin by showing that by the post-order numbering, all nodes in a 
given subtree have a continuous range of numbers. For every $z \in V$, let 
$\min_N(z)$ be the minimal number of a node in the subtree of $T$ rooted by 
$z$. That is, $\min_N(z)=\min_{u \in T(z)}N(u)$ and similarly let 
$\max_N(z)=\max_{u \in T(z)}N(u)$.
\begin{claim}\label{obs:postordernumber}
For every $z \in V$ and for every $u \in G$ it holds that (1) 
$\max_N(z)=N(z)$ and 
(2) $N(u) \in [\min_N(z), \max_N(z)]$ iff $u \in T(z)$.
\end{claim}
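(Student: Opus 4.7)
The plan is to prove both parts by directly unwinding the recursive definition of post-order traversal on the tree $T$: when the traversal visits $z$, it first recurses in turn on each of $z$'s children's subtrees, and only then assigns a number to $z$ itself. This viewpoint makes both claims nearly immediate and there is no real technical obstacle, so the main job is simply to phrase the recursive structure precisely.

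For part (1), I would argue that every strict descendant of $z$, i.e., every $u \in T(z)\setminus\{z\}$, gets numbered strictly before $z$ because the recursion on the subtrees rooted at $z$'s children completes before $z$ is assigned its own number. Hence $N(u) < N(z)$ for all $u \in T(z)\setminus\{z\}$, and therefore $\max_N(z) = N(z)$.

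For part (2), the key observation is that the entire block of numbers assigned during the recursive call on $z$'s subtree lies within $T(z)$: once the traversal enters $z$, it does not assign any number to a node outside $T(z)$ until $z$ itself is numbered. Thus if $z$ is numbered $i = N(z)$, then the $|T(z)|-1$ descendants of $z$ received exactly the numbers $i-|T(z)|+1, i-|T(z)|+2, \ldots, i-1$ (in some order determined by the subtrees), which combined with (1) gives $\min_N(z) = N(z) - |T(z)| + 1$ and $\max_N(z) = N(z)$. Consequently $\{N(u) : u \in T(z)\}$ is exactly the integer interval $[\min_N(z), \max_N(z)]$; since $N$ is a bijection from $V(T)$ to $[|V(T)|]$, no node outside $T(z)$ can have a number in this interval, yielding both directions of the stated ``iff''.

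The one bit of care needed is to formalize ``the recursion on $z$ stays inside $T(z)$''. I would do this by induction on the height of $z$: the base case $z$ a leaf is trivial, and in the inductive step the recursive calls on $z$'s children number precisely the nodes of their respective subtrees (by induction), using $\sum_{c\text{ child of }z}|T(c)| = |T(z)|-1$ consecutive integers, after which $z$ itself is assigned the next number. Since this is a standard property of post-order traversal, no step is expected to be difficult; the proof is essentially bookkeeping around the recursive definition.
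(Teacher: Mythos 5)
Your proof is correct and follows essentially the same approach as the paper: induction on the tree structure together with the observation that post-order traversal numbers the nodes of $T(z)$ consecutively, with $z$ receiving the largest number. The paper tracks the ranges $[\min_N(v_{i,j}),\max_N(v_{i,j})]$ of the children's subtrees explicitly whereas you invoke the cardinality count $|T(z)|$ and the bijectivity of $N$, but this is a cosmetic difference, not a different argument.
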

\begin{proof}
The proof is by induction on the depth of $T(z)$. For the base case, we consider 
the leaf nodes $z$, and hence $T(z)$ with $0$-depth, the claim holds vacuously. 
Assume that the claim holds for nodes in level $i+1$ and consider now a node 
$z$ in level $i$. Let $v_{i,1}, \ldots, v_{i,\ell}$ be the children of $z$ 
ordered from left to right. By the post-order traversal, the root $v_{i,j}$ is 
the last vertex visited in $T(v_{i,j})$ and hence 
$N(v_{i,j})=\max_N(v_{i,j})$. Since the traversal of $T(v_{i,j})$ starts right 
after finishing the traversal of $T(v_{i,j-1})$ for every $j\geq 2$, it holds 
that $\min_N(v_{i,j})=N(v_{i,j-1})+1$. Using the induction assumption for 
$v_{i,j}$, we get that all the nodes in $T(z)\setminus \{z\}$ have numbering 
in the range $[\min_N(v_{i,1}),\max_N(v_{i,\ell})]$ and any other node not in 
$T(z)$ is not in this range. Finally, $N(z)=N(v_{i,\ell})+1$ and so the claim 
holds.
\end{proof}

The cycles that we computed contains tree paths $\pi(u,v)$ that connect two 
nodes $u$ and $v$ in the \emph{same} block. Thus, to bound the congestion on a tree edge 
$e \in T$ we need to bound the number of blocks that contain a pair $u,v$ 
such that $\pi(u,v)$ passes through $e$. The next claim shows 
that every edge in the tree is effected by at most 2 
blocks.
\begin{claim}\label{lem:congestedge}
Let $e\in T$ be a tree edge and define $\mathcal{B}(e)=\{ B \in \mathcal{B} 
~\mid~ \exists u,v \in B \mbox{~s.t.~} e \in \pi(u,v)\}$. Then, 
$|\mathcal{B}(e)|\leq 2$ for every $e\in T$.
\end{claim}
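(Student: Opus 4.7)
The plan is to combine \Cref{obs:postordernumber} with the fact that each block of the partition is, by construction, a run of consecutive post-order numbers.

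First, write $e=(p,c)$ with $p$ the parent of $c$ in $T$. A standard observation about trees says that for any two nodes $u,v\in V(T)$, the tree path $\pi(u,v,T)$ contains $e$ if and only if exactly one of $u,v$ lies in the subtree $T(c)$. Now by \Cref{obs:postordernumber}, $u\in T(c)$ iff $N(u)\in I_c:=[\min_N(c),N(c)]$. Hence a block $B$ lies in $\mathcal{B}(e)$ exactly when $B$ contains at least one node with $N$-label in $I_c$ and at least one node with $N$-label outside $I_c$.

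Next, by item~1 in the definition of an $(E',\densitythreshold)$-partitioning, the $N$-labels of every block $B$ form an interval $J_B$. The condition $B\in\mathcal{B}(e)$ therefore translates to: $J_B\cap I_c\neq\emptyset$ and $J_B\not\subseteq I_c$. This is equivalent to saying that $J_B$ straddles at least one of the two endpoints of $I_c$, i.e., either $\min_N(c)\in J_B$ with $J_B$ also containing some label $<\min_N(c)$, or $N(c)\in J_B$ with $J_B$ also containing some label $>N(c)$.

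Because the $J_B$'s partition $\{1,\dots,n\}$ into pairwise disjoint intervals, there is at most one block whose interval straddles the left endpoint of $I_c$ in this way, and at most one whose interval straddles the right endpoint. (If a single block's interval contains all of $I_c$, it straddles both endpoints at once, but this only improves the count.) Every other block is either entirely inside $I_c$ or entirely outside, and thus does not contribute to $\mathcal{B}(e)$. This yields $|\mathcal{B}(e)|\le 2$, as required. The only subtle point is keeping the case analysis on the two endpoints of $I_c$ straight and ruling out the possibility of more than one straddling block on each side; both follow immediately from the blocks being a partition into intervals.
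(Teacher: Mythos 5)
Your proof is correct and follows essentially the same route as the paper's: both reduce membership of a block $B$ in $\mathcal{B}(e)$ to the block's interval of post-order labels straddling the interval $I_c=[\min_N(c),N(c)]$ of the child-subtree, and then count at most one straddling block per endpoint since the blocks partition $\{1,\dots,n\}$ into disjoint intervals. Your write-up is somewhat more explicit (stating the iff criterion for $e\in\pi(u,v)$ directly rather than via the LCA decomposition), but the underlying argument is identical.
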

\begin{proof}
Let $e=(w,z)$ where $w$ is closer to the root in $T$, and let $u,v$ be two 
nodes in the same block $B$ such that $e \in \pi(u,v)$. Let $\ell$ be the least common ancestor 
of $u$ and $v$ in $T$ (it might be that $\ell \in 
\{u,v\}$), then the tree path between $u$ and $v$ can be written as 
$\pi(u,v)=\pi(u,\ell)\circ \pi(\ell,v)$. Without loss of generality, assume 
that $e \in \pi(\ell,v)$. This implies that $v \in T(z)$ but $u \notin T(z)$. 
Hence, the block of $u$ and $v$ intersects the nodes of $T(z)$. Each block 
consists of a consecutive set of nodes, and by \Cref{obs:postordernumber} also 
$T(z)$ consists of a consecutive set of nodes with numbering in the range 
$[\min_N(z),\max_N(z)]$, thus there are at most two such blocks that intersect 
$e=(w,z)$, \ie blocks $B$ that contains both a vertex $y$ with $N(y)\in 
[\min_N(z),\max_N(z)]$ and a vertex $y'$ with $N(y')\notin  
[\min_N(z),\max_N(z)]$, and the claim follows.
\end{proof}

Finally, we use the above claims to bound the congestion. Consider any 
tree edge $e=(w,z)$ where $w$ is closer to 
the root than $z$. Recall that $T(z)$ be the subtree of $T$ rooted at $z$. Fix an 
iteration $i$ of the algorithm. We characterize all cycles in $\mathcal{C}$ 
that go through this edge.

For any cycle that passes through $e$ there must be a block $B$ and two nodes 
$u,v \in B$ 
such that $e \in \pi(u,v)$. By \Cref{lem:congestedge}, we know that there are 
that at each iteration of the algorithm, there are at most two such blocks $B$ 
that can affect the congestion of $e$.
Moreover, we claim that each such block has density at most 
$\densitythreshold$. Otherwise it would be a block containing a single node, 
say $u$, and thus the path $\pi(u,u)=u$ is empty and cannot contain the edge 
$e$. For each edge in $E'$ we construct a single cycle in $\cC$, and thus for 
each one of the two blocks that affect $e$ the number of pairs $u,v$ such that 
$e \in \pi(u,v)$ is bounded by $\densitythreshold/2$ (each pair $u,v$ has two 
edges in the block $B$ and we know that the total number of edges is bounded by 
$\densitythreshold$).

To summarize the above, we get that for each iteration, that are at most 2 
blocks that can contribute to the congestion of an edge $e$: one block that 
intersects $T(z)$ but has also nodes smaller than $\min_N(z)$ and one block that 
intersects $T(z)$ but has also nodes larger than $\max_N(z)$. Each of these two 
blocks can increase the congestion of $e$ by at most $\densitythreshold/2$. 
Since there are at most $\log n$ iterations, we can bound the total congestion 
by $b\log n$. Notice that if an edge appears $k$ times in a cycle, then this 
congestion bound counts all $k$ appearances. Thus, after the simplification of 
the cycles, the congestion remains unchanged.

\paragraph{Cover.}
We show that each edge in $E_1$ is covered by some cycle and that each cycle is used to cover $O(\log n)$ edges in $E_1$. We begin by showing the covering property of the 
preliminary cycle collection, before the simplification procedure. We later show that 
the covering is preserved even after simplifying the cycles. The idea is that at 
each iteration of the algorithm, the number of uncovered edges is reduced by 
half. Therefore, the $\log |E_1|=O(\log n)$ iterations should suffice for covering all 
edges of $E_1$. In each iteration we partition the nodes into blocks, and we 
search for cycles between the blocks. The point is that if the number of edges 
is large, then when considering the blocks as nodes in a new virtual graph, this 
graph has a large number of edges and thus must have a short cycle. 

In what follows, we formalize the intuition given above. Let $E'_i$ 
be the set $E'$ at the $\ith{i}$ iteration of the algorithm. 
Consider the iteration $i$ with the set of uncovered edge set 
$E'_{i}$. Our goal is to show that $E'_{i+1} \le 
1/2E'_{i}$. By having $\log |E_1|$ iterations, last set will be empty.

Let $\mathcal{B}_i$ be the partitioning performed at iteration $i$ with respect 
to the edge set $E'_{i}$. Define a super-graph $\supergraph$ in 
which each block $B_j \in \mathcal{B}_i$ is represented by a node 
$\widetilde{v}_j$, and there is an edge $(\widetilde{v}_j,\widetilde{v}_{j'})$ 
in $\supergraph$ if there is an edge in $E'_{i}$ between some node 
$u$ in $B_{j}$ and a node $u'$ in $B_{j'}$, \ie 
$$(\widetilde{v}_j,\widetilde{v}_{j'}) \in E(\supergraph) \iff 
E'_{i} 
\cap (B_{j} \times B_{j'}) \neq \emptyset.$$
See \Cref{fig:cyclevirtual} for an illustration. 
\begin{figure}[h!]
\begin{center}
\includegraphics[scale=0.35]{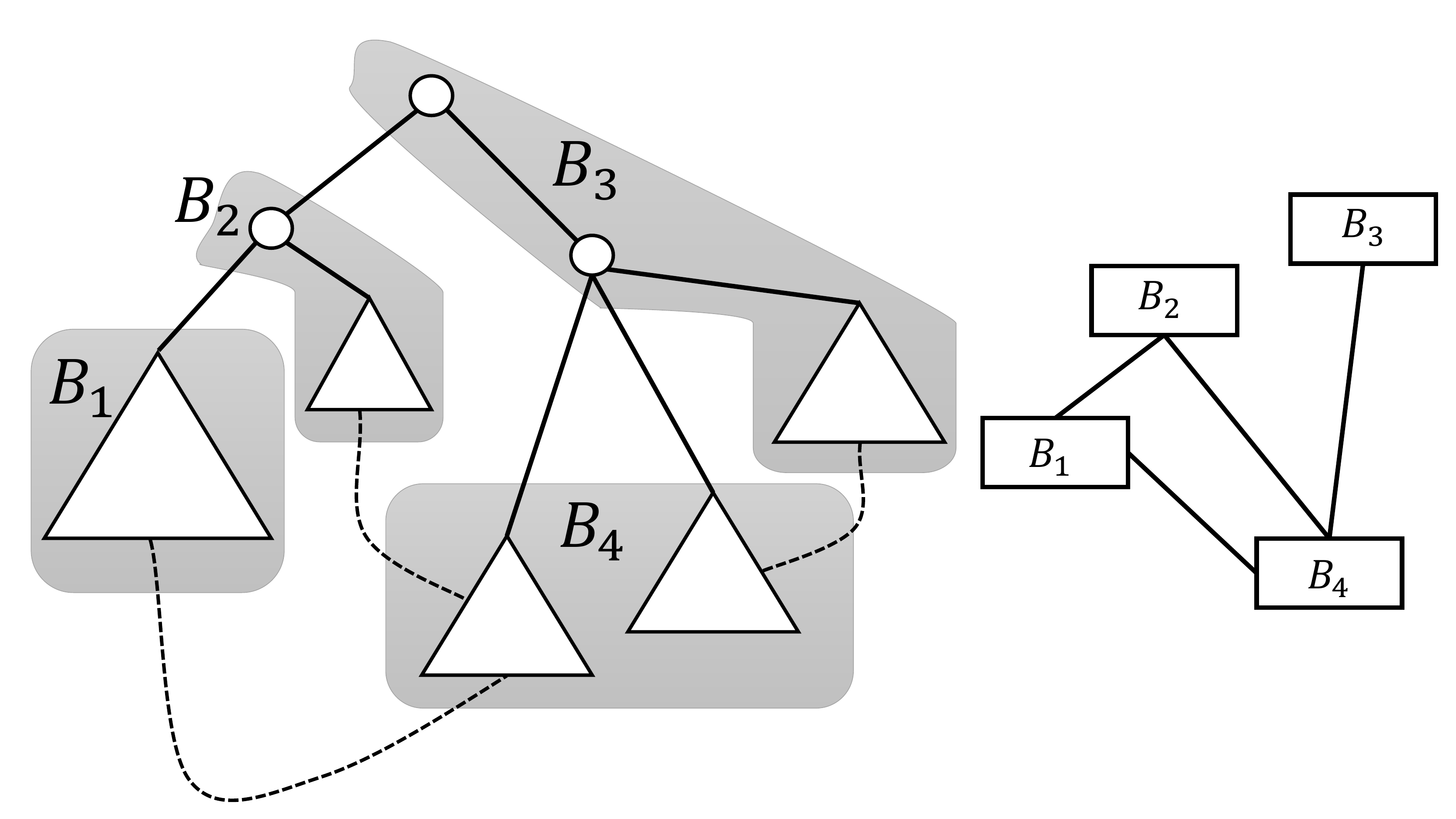}
\caption{Left: Schematic illustration of the block partitioning in the tree 
$T$. Dashed edges are those that remain to be covered after employing Alg. 
$\NonTreeCover$, where each two blocks are connected by exactly one edge. Each 
dashed edge corresponds to super-edges in $\supergraph$. Right: A triangle in 
the super-graph $\supergraph$.
\label{fig:cyclevirtual}}
\end{center}
\end{figure}
The number of nodes in $\supergraph$, which we denote by $n_i$, is the number 
of blocks in the partition and is bounded by 
$n_i \le 4|E'_i|/\densitythreshold$. Let $B(u)$ be the block of the 
node $u$. The algorithm finds cycles of
the form $(u_1, v_1) \circ \pi(v_1,u_2) \circ (u_2,v_2) \circ \pi(v_2,u_3) 
\circ \dots \circ (u_t,v_t) \circ (v_t,u_1)$, which is equivalent to 
finding the cycle $B(u_1),\ldots,B(u_t)$ in the graph $\supergraph$. In 
general, any cycle of length $t$ in $\supergraph$ is mapped to a cycle in $G$ 
of length at most $t \cdot D(T)$. Then, the 
algorithm adds 
the cycle to $\cC$ and removes the edges of the cycle (thus removing them also 
from $\supergraph$). At the end of iteration $i$ the graph 
$\supergraph$ has no cycles of length at most $\log n$. At this point, the next 
set of edges $E'_{i+1}$ is exactly the edges left in $\supergraph$. 
By Fact 
\ref{fc:girthmoore} (and recalling that $\densitythreshold=16$) we get that if 
$\supergraph$ does not have any cycles of length at most $\log n$ then we get 
the following bound on the number of edges:
$$E'_{i+1} \le 2n_i = 8|E'_i|/\densitythreshold = 
|E'_i|/2.$$

Thus, all will be covered by a cycle $C$ before the simplification process. We 
show that the simplification procedure of the cycle maintains the cover 
requirement. This stems from the fact that the only edges that might appear more than once 
in a cycle are tree edges. Thus even if we drop these edges, the non-tree edges remain covered.
It is left to show that this process only drops edges that appear more than once:
\begin{claim}
Let $C$ be a cycle and let $\cC'=\SimplifyCycles(C)$. Then, for 
every edge $e \in C$ that appears at most once in $C$ there is a cycle $C' \in 
\cC'$ such that $e \in C'$.
\end{claim}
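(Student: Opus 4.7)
The plan is to prove the claim by maintaining the following invariant throughout the execution of $\SimplifyCycles$: every edge $e$ that appears exactly once in the initial cycle $C$ is contained, with multiplicity one, in exactly one cycle of the current collection $\cC$. Since at termination $\cC = \cC'$, the claim follows immediately (the ``at most once'' case where $e \notin C$ is vacuous).

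To establish that a single iteration of the while loop preserves this invariant, I would analyze what happens when a non-simple cycle $C'' \in \cC$ is selected together with a repeated vertex $w$ at positions $i_1 < \ldots < i_\ell$ (with $\ell \geq 2$). By the construction of the $C_j$'s in \Cref{fig:main-simplify-cycles}, the directed edges of $C''$ are partitioned among the sub-walks $C_1, \ldots, C_\ell$: every traversed edge $(v_j, v_{j+1})$ of $C''$ lies in exactly one $C_j$. Consequently, if an undirected graph edge $e$ is traversed exactly once in $C''$, then $e$ is traversed exactly once in a unique $C_{j^*}$ and not at all in the remaining sub-cycles.

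The crux of the argument is showing that such a $C_{j^*}$ is never discarded, i.e., $|C_{j^*}| \geq 3$. A sub-cycle of length one would require a self-loop, which does not occur in the underlying simple graph. A sub-cycle of length two must have the form $(w,v) \circ (v,w)$ for some neighbor $v$ of $w$, in which case both of its directed edges correspond to the same underlying graph edge $\{w,v\}$; but this means $\{w,v\}$ is traversed twice inside $C_{j^*}$, contradicting the fact that $e$ is traversed exactly once in $C_{j^*}$. Hence $|C_{j^*}| \geq 3$ and $C_{j^*}$ is retained in $\cC$.

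Combining these steps, the invariant is preserved across iterations. Since each iteration strictly decreases a finite termination measure (for example, the sum $\sum_{C'' \in \cC}(|C''| - |V(C'')|)$ of total vertex repetitions, or the multiset of cycle lengths in lexicographic order, as each $C_j$ is strictly shorter than $C''$ whenever $\ell \geq 2$), the procedure halts with a collection $\cC'$ of simple cycles, at least one of which contains $e$. The main subtlety is precisely the ruling out of the length-two case, since this is the only way a single simplification step could ``lose'' an edge, and the key observation is that this failure mode requires the edge to have been traversed at least twice to begin with.
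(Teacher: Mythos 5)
Your proof is correct and takes essentially the same inductive route as the paper: show that the sub-cycle $C_{j^*}$ containing the single traversal of $e$ cannot have length $\le 2$, hence it is retained, and the invariant propagates through all iterations until termination. Your way of ruling out the length-2 case --- a length-2 sub-cycle necessarily traverses its only underlying edge twice, contradicting that $e$ is traversed exactly once in $C_{j^*}$ --- is a cleaner phrasing of the same idea the paper reaches by labelling $v_1,\ldots,v_4$ and arguing about which of them are distinct.
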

\begin{proof}
The procedure $\SimplifyCycles$ works in iterations where in each 
iteration it chooses a vertex $w$ that appears more than once in $C$ and 
partitions the cycle $C$ to consecutive parts, $C_1,\ldots,C_{\ell}$. All edges 
in $C$ appear in some $C_j$. However, $C_j$ might not be a proper cycle since 
it might be the case that $|C_j| \le 2$. Thus, we show that in an edge $e \in 
C$ appeared at most once in a cycle $C$ then it will appear in $C_j$ for some 
$j$ where $|C_j| \ge 3$. We show that this holds for any iteration and thus 
will hold at the end of the process.

We assume without loss of generality that no vertex has two consecutive 
appearances. Denote $e=(v_2,v_3)$ and let $C=(v_1,v_2) \circ (v_2,v_3) \circ 
\dots \circ (v_{k-1},v_k)$ 
for $k \ge 3$. Since $e$ does not appear again in $C$ we know that 
$v_1,v_2,v_3$ are distinct. Thus, if $k=3$ then $C$ will not be split again and 
the claim follows.

Therefore, assume that $k \ge 4$. Since $e$ does not appear again in $C$ we 
know that $v_2,v_3,v_4$ are distinct (it might be the case that $v_1=v_4$). 
Thus, we know that $|\{v_1,v_2,v_3,v_4\}| \ge 3$. Any subsequence begins and 
ends at the same vertex and thus the subsequence $C_j$ that contains $e$ must 
contains all of $v_1,v_2,v_3,v_3$ and thus $|C_j| \ge 3$, and the claim follows.
\end{proof}
Finally, we turn to prove \Cref{lem:propnontree}. The lemma follows by noting that each cycle in $\cC_1$ contains at most $O(\log n)$ non-tree edges. To see this, observe that each cycle computed in the contracted block graph has length $O(\log n)$. Translating these cycles into cycles in $G$ introduces only \emph{tree} edges. We therefore have that each cycle is used to cover $O(\log n)$ non-tree edges.

\subsection{Covering Tree Edges}\label{sec:cover-tree-edges}
Finally, we present Algorithm $\TreeCover$ that computes a cycle cover for the 
tree edges. The algorithm is recursive and uses Algorithm $\NonTreeCover$ as a 
black-box. Formally, we show:
\begin{lemma}\label{lem:treecover}
For every $n$-vertex bridgeless graph $G$ and a tree $T \subseteq G$ of depth $D$, 
there exists a $(D\log n, \log^3 n)$ cycle cover for the 
edges of $T$. 
\end{lemma}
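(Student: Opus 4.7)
The plan is to establish this recursively, halving the tree at each step and reducing each sub-problem to one that can be handled by Lemma~\ref{lem:non-treecover}. Concretely, pick a balanced edge-disjoint partition of $T$ into two subtrees $T_1, T_2$, each with at most $|E(T)|/2 + 1$ edges. Any tree edge $e \in T_1$ such that $\swap(e)$ has both endpoints in $T_1$ can be covered by a cycle that lives entirely in the $T_1$-subproblem (and symmetrically for $T_2$), so these edges are handled by the recursion. The remaining tree edges in $T_1$ are the \emph{cross} edges, whose swap edge sends one endpoint into $T_2$; these are the subject of the non-recursive work performed at the current level.

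To cover the cross edges of $T_1$, first mark every vertex $v \in T_1$ for which $s(v) \in T_2$. Invoke Algorithm $\EdgeDisjointPath$ on $T_1$ with this set of marked vertices; this produces a matching of the marked vertices into pairs $(u_i, u_j)$ whose connecting tree paths $\pi(u_i, u_j, T_1)$ are pairwise edge-disjoint. For every matched pair introduce a virtual non-tree edge $(s(u_i), s(u_j))$ in the tree $T_2$ (both endpoints lie in $T_2$ by construction). Run $\NonTreeCover$ on $T_2$ together with the union of these virtual edges and the actual non-tree edges with both endpoints in $T_2$; by Lemma~\ref{lem:non-treecover} this yields a cycle collection covering each virtual edge, with dilation $O(D\log n)$ and congestion $O(\log n)$. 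Finally, in every output cycle replace each occurrence of a virtual edge $(s(u_i), s(u_j))$ by the concrete $G$-path formed by the two swap edges at $u_i$ and $u_j$ together with $\pi(u_i, u_j, T_1)$. This promotes each cycle into a real cycle in $G$ that traverses the tree paths in $T_1$ and therefore covers all cross tree edges of $T_1$ that lie on some $\pi(u_i, u_j, T_1)$.

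For the analysis, the recursion has depth $O(\log n)$ because each step halves the number of tree edges. The dilation added at a single level is $O(D\log n)$ from $\NonTreeCover$ plus $O(D)$ for the two tree paths spliced in during the virtual-edge expansion, giving overall dilation $O(D\log n)$. For congestion, each level contributes $O(\log n)$ from $\NonTreeCover$ on non-tree and $T_2$-side tree edges; on the $T_1$ side, edge-disjointness of the paths returned by $\EdgeDisjointPath$ ensures that each tree edge of $T_1$ lies on at most one expansion path per batch, which, combined with the $O(\log n)$ cycle-multiplicity from $\NonTreeCover$ and the $O(\log n)$ batches of $\EdgeDisjointPath$ needed to match all marked nodes without duplicating a tree edge inside a single cycle, yields $O(\log^2 n)$ congestion per level and therefore $O(\log^3 n)$ across all $O(\log n)$ recursion levels. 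Correctness reduces to checking that every cross edge of $T_1$ lies on the $T_1$-tree-path between some matched pair containing it, while non-cross edges of $T_1$ are covered by the recursive call on $T_1$.

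The main obstacle is the non-simplicity issue flagged in the technical overview: a single cycle output by $\NonTreeCover$ on $T_2$ may incorporate several virtual edges whose $T_1$-side expansions share a common tree edge, causing that tree edge to appear more than once on the expanded cycle; if the cycle is then passed through $\SimplifyCycles$ the repeated edge is dropped and the cover guarantee collapses. The remedy is to schedule the matching so that no two virtual edges sharing a $T_1$-tree-path edge ever land together in the same $\NonTreeCover$ cycle. The cleanest route is to partition the marked vertices into $O(\log n)$ batches, invoke $\EdgeDisjointPath$ separately on each batch, and apply $\NonTreeCover$ independently to each resulting virtual-edge set; this careful batching is precisely what incurs the extra $\log n$ factor in the final congestion bound and is the main technical point to be made rigorous.
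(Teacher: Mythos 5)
Your high-level skeleton — recurse on a balanced split $T_1, T_2$, mark the cross vertices, match them with $\EdgeDisjointPath$, introduce virtual edges on the $T_2$ side, run $\NonTreeCover$, and then expand virtual edges back to $G$-paths — is exactly the paper's. But there are two genuine gaps in how you would fill in the details, and the first of them means the congestion bound does not follow from your argument as written.

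\paragraph{Missing the independent set $I(T)$.} You mark \emph{every} vertex $v \in T_1$ with $s(v) \in T_2$, and your plan implicitly expands each virtual edge endpoint $(v, s(v))$ into the path $P_e = \pi(v,u') \circ \swap(e)$ (this is what you need for the expansion to be a valid $G$-path that also covers the tree edges on $P_e$; ``the two swap edges \dots together with $\pi(u_i,u_j,T_1)$'' by itself is not a path unless $u'=u_i$). The problem is that the paths $P_e$ for different tree edges $e$ can overlap massively: a single swap edge can be the swap for $\Theta(D)$ tree edges stacked along one root-to-leaf path, and all their $P_e$ paths share tree edges. Expanding them all would push congestion to $\Theta(D)$, not $\poly\log n$. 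The paper circumvents this by first computing a maximal subset $I(T) \subseteq E(T)$ of tree edges whose $P_e$ paths are pairwise disjoint on tree edges and nearly disjoint on swap edges (each swap edge used at most twice, once per direction), and crucially such that every tree edge not in $I(T)$ lies on some $P_{e'}$ with $e' \in I(T)$. Only the $I(T)$ edges are marked and covered explicitly; the rest are covered for free because they sit on the covered $P_{e'}$ paths. Nothing in your proposal bounds the overlap of the $P_e$ paths, so the congestion argument does not go through.

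\paragraph{Misdiagnosis of the non-simplicity issue, and the missing conflict graph.} You identify the worry as ``two virtual edges whose $T_1$-side expansions share a common tree edge.'' But those expansions are the matched paths $\pi(u_i,u_j,T_1)$, and $\EdgeDisjointPath$ already guarantees they are pairwise edge-disjoint, so this collision cannot occur, and batching the marked vertices into $O(\log n)$ groups does nothing to address anything. The actual hazard is different: a matched path $\pi(\sigma') = \pi(v_1',v_2',T_1)$ for one pair may intersect the $P_e$ path of a tree edge $e$ lying on \emph{another} pair's path $\pi(\sigma)$. When both expansions appear on the same cycle, the shared tree edge occurs twice, and $\SimplifyCycles$ may drop it, destroying the cover guarantee for $e \circ P_e$. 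The paper handles this by building a directed conflict graph $G_\Sigma$ on the matched pairs, proving its outdegree is at most $1$ (this step uses both edge-disjointness of the $\pi(\sigma)$ paths and the near-disjointness of the $P_e$ paths coming from $I(T)$), 3-coloring it, and running $\NonTreeCover$ separately on each color class so that no interfering pairs ever land on a common cycle. This coloring — not the $O(\log n)$ batching you propose — is where the extra factor in the congestion actually comes from, and without the outdegree-$1$ claim you have no bound on how many colors/batches would be needed. Until you supply both $I(T)$ and the conflict-graph argument, the proof is incomplete.
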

We begin with some notation. Throughout, 
when referring to a tree edge 
$(u,v)\in T$, the node $u$ is closer 
to the root of $T$ than $v$. Let $E(T)=\{e_1,\ldots,e_{n-1}\}$ be an ordering 
of the edges of $T$ in non-decreasing distance from the root.
For every tree edge $e \in T$, recall that the swap edge 
of $e$, denoted by $e'=\swap(e)$, is an arbitrary edge in $G$ that restores 
the connectivity of $T \setminus \{e\}$. Let $e=(u,v)$ (i.e., $u=p(v)$) and $(u',v')=\swap(e)$. Let $s(v)$ be the endpoint 
of $\swap(e)$ that does not belong to $T(u)$ (i.e., the subtree $T$ rooted at 
$u$), thus $v'=s(v)$. Define the 
$v$-$s(v)$ path
$$P_e=\pi(v,u') \circ \swap(e).$$ 
For an illustration see \Cref{fig:swapedge}. 

For the tree $T$, we construct a subset of tree edges denoted by $I(T)$ that we 
are able to cover. These edges are \emph{independent} in the sense that their $P_e$ paths are ``almost" edge disjoint (as will be shown next). The subset $I(T)$ is constructed by going through the edges 
of $T$ in non-decreasing distance from the root. 
At any point, we add $e$ to $I(T)$ only if it is not covered by the $P_{e'}$ 
paths of the $e'$ edges already added.

\begin{claim}
The subset $I(T)$ satisfies:
\begin{itemize}
\item For every $e \in E(T)$, there exists $e' \in I(T)$ such that $e \in e' 
\circ P_{e'}$.
\item For every $e,e' \in I(T)$ such that $e \ne e'$ it holds that $P_{e}$ and 
$P_{e'}$ have no \emph{tree} edge in common (no edge of $T$ is in both paths).
\item For every swap edge $(z,w)$, there exists at most two paths $P_{e},P_{e'}$ for $e,e'\in I(T)$ such that one passes through $(z,w)$ and the other through $(w,z)$. That is, each swap edge appears at most twice on the $P_e$ paths, once in each direction.
\end{itemize}
\end{claim}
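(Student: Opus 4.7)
Property~1 is immediate from the greedy construction: when $e$ is processed, either it is added to $I(T)$, in which case $e \in e \circ P_e$ trivially, or some previously added $e' \in I(T)$ already satisfies $e \in P_{e'}$ (which is exactly why $e$ was not added), giving $e \in e' \circ P_{e'}$.

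For Property~2, my plan is to exploit the fact that the tree-edge portion of $P_e$ is the downward path $\pi(v,u')$ inside $T(v)$, since $u' \in T(v)$ so the LCA of $v$ and $u'$ is $v$ itself. Let $e=(u,v)$ and $e'=(u'',v'')$ be in $I(T)$, with $e$ processed first, and suppose for contradiction their $P$-paths share a tree edge $f$. Then $f \in T(v) \cap T(v'')$ forces the two subtrees to be nested, and the processing order (non-decreasing depth) forces $v$ to be a strict ancestor of $v''$, hence $T(v'') \subseteq T(v)$. To reach $f \in T(v'')$, the downward path $\pi(v,u')$ must enter $T(v'')$ through its root $v''$, so it traverses the edge $e'=(p(v''),v'')=(u'',v'')$. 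Thus $e' \in P_e$, contradicting the rule that $e'$ joins $I(T)$ only when uncovered by the $P$-paths of earlier members.

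For Property~3, the only non-tree edge on any $P_e$ is $\swap(e)$, so $(z,w)$ lies on $P_{e_i}$ only when $\swap(e_i)=(z,w)$, which forces $e_i$ onto the tree path $\pi(z,w)$. Let $\ell=\mathrm{LCA}(z,w)$. I first note that $e_i$ cannot have $\ell$ as its upper endpoint: if $u_i=\ell$ then $T(u_i)$ contains both $z$ and $w$, so no endpoint of $(z,w)$ lies outside $T(u_i)$ and $s(v_i)$ is undefined. Thus every admissible $e_i$ lies strictly below $\ell$, either on the $z$-side branch or on the $w$-side branch, and the orientation of $(z,w)$ in $P_{e_i}$ is $z \to w$ in the first case and $w \to z$ in the second, because the endpoint $u_i' \in T(v_i)$ is forced to be $z$ or $w$ according to which side $v_i$ sits on.

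The main obstacle, and the step that needs the most care, is showing that each side contributes at most one element of $I(T)$; for this I re-use the trap from Property~2. Suppose $e_1,e_2 \in I(T)$ both lie on the $z$-branch with $e_1$ processed first (hence closer to $\ell$), and write $e_j=(u_j,v_j)$. Then $v_2 \in T(v_1)$, and since $u_1'=z$ the downward path $\pi(v_1,z)$ passes through $v_2$ and therefore through the edge $e_2=(u_2,v_2)$. So $e_2 \in P_{e_1}$, contradicting $e_2 \in I(T)$. The symmetric argument on the $w$-branch then yields the desired ``at most two, once in each direction'' conclusion.
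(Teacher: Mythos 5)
Your proof is correct and follows the same strategy as the paper's: Property~1 falls out of the greedy rule; Property~2 comes from observing that a shared tree edge forces $T(v)$ and $T(v'')$ to nest, so that (by the root-to-leaf processing order) the deeper edge already lies on the earlier edge's $P$-path; and Property~3 reapplies that same trap once per orientation of the swap edge, which is exactly what the paper's ``same argument to the previous case'' is doing.

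One side-remark in your Property~3 is off, though it is not load-bearing. You claim an edge $e_i$ with upper endpoint $u_i=\ell$ must be excluded because $s(v_i)$ would be undefined. That follows a literal reading of the paper's phrase ``does not belong to $T(u)$'', but that phrase is a slip for $T(v)$: the swap edge by definition has one endpoint inside $T(v)$ and one outside $T(v)$, the formula $P_e=\pi(v,u')\circ\swap(e)$ only makes sense with $u'\in T(v)$, and the paper itself later writes ``the endpoint of $e'$ that is not in $T(v)$'' when recalling this notion in the distributed section. Under the intended reading, $s(v_i)$ is perfectly well-defined when $u_i=\ell$; the edge $(\ell,c_z)$ (for $c_z$ the child of $\ell$ toward $z$) can legitimately have $(z,w)$ as its swap edge and belong to $I(T)$. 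Happily, your Property-2 trap covers this case verbatim — $(\ell,c_z)$'s downward path $\pi(c_z,z)$ still swallows any deeper $z$-branch edge — so the conclusion stands. Just delete that exclusion sentence.
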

\begin{proof}
The first property follows directly from the construction. 
Next, we show that they share no tree edge in common. Assume that there is a 
common edge $(z,w) \in P_e \cap P_{e'} \cap E(T)$. Then, both $e,e'$ must be on 
the path from the root to $z$ on the tree. Without loss of generality, assume that $e'$ is closer to the root than $e$.
We then get that $e \in P_{e'}$, leading to 
contradiction. For the third property, assume towards constriction that both 
$P_e$ and $P_{e'}$ use the same swap edge in the same direction. Again it 
implies that both $e,e'$ are on the path from the root to $z$ on $T$, and the same argument to the previous case yields that $e \in P_{e'}$, thus a contradiction.
\end{proof}

Our cycle cover for the $I(T)$ edges will be shown to cover all the edges of the tree $T$. This is because the  cycle 
that we construct to cover an edge $e \in I(T)$ necessarily 
contains $P_e$.

Algorithm $\TreeCover$ uses the following procedure $\EdgeDisjointPath$, usually used in the context of distributed routing. 

\paragraph{Key Tool: Route Disjoint Matching.}
Algorithm $\EdgeDisjointPath$ solves the following problem 
(\cite{klein1995nearly}, and Lemma 4.3.2 \cite{Peleg:2000}): given a rooted tree $T$ and a set of $2k$ 
marked nodes $M \subseteq V(T)$ for $k\leq n/2$, the goal is to find (by a 
distributed algorithm) a matching of these vertices $\langle w_i,w_j \rangle$ 
into pairs such that the tree paths $\pi(w_i,w_j,T)$ connecting the matched 
pairs are \emph{edge-disjoint}. This matching can be computed distributively in 
$O(\Diam(T))$ rounds by working from the leaf nodes towards the root. In each 
round a node $u$ that received information on more $\ell\geq 2$ unmarked nodes 
in its subtree, match all but at most one into pairs and upcast to its parent 
the ID of at most one unmarked node in its subtree.  It is easy to see that all 
tree paths between matched nodes are indeed edge disjoint. 

We are now ready to explain the cycle cover construction of the tree edges $E(T)$.

\paragraph{Description of Algorithm $\TreeCover$.}
We restrict attention for covering the edges of $I(T)$. The tree edges $I(T)$ will be covered in a specific manner that covers also the edges of $E(T)\setminus I(T)$.
The key idea is to define a collection of (virtual) non-tree edges 
$\widetilde{E}=\{(v,s(v)) : (p(v),v) ~\in I(T)\}$ and covering these 
non-tree edges by enforcing the cycle that covers the non-tree edge $(v,s(v))$ 
to cover the edges $e=(p(v),v)$ as well as the path $P_{e}$. Since every edge $e' \in 
T$ appears in one of the $e \circ P_e$ paths, this will guarantee that all tree 
edges are covered.


Algorithm $\TreeCover$ is recursive and has $O(\log n)$ levels of recursion. In 
each independent level of the recursion we need to solve the following 
sub-problem: Given a tree $T'$, cover by cycles the edges of $I(T')$ along with 
their $P_e$ paths. The key idea is to subdivide this problem 
into two independent and balanced subproblems. To do this, the tree $T'$ gets 
partitioned\footnote{This partitioning procedure is described in 
\Cref{clm:tree-partition}. We note that this partitioning maintains the 
layering structure of $T'$.} into two balanced edge 
disjoint subtrees $T'_1$ and $T'_2$, where $|T'_1|,|T'_2| \leq 2/3 \cdot |T'|$ 
and $E(T'_1)\cup E(T'_2)=E(T')$. 
Some of the tree edges in $T'$ are covered by applying a procedure that computes cycles using the edges of  $T'$, and the remaining ones will be covered recursively in either $T'_1$ or $T'_2$. 
Specifically, the edges of $I(T')$ are partitioned into $4$ types 
depending on the position of their swap edges. For every $x,y \in \{1,2\}$, let
\begin{align*}
E'_{x,y} = \{(u,v) \in E(T'_x) 
\cap I(T')~\mid~ 
 v \in V(T'_x) \mbox{~and~} s(v) \in V(T'_y)\setminus V(T'_x)\}.
\end{align*}
The algorithm computes a cycle cover $\cC_{1,2}$ (resp., $\cC_{2,1}$) for covering the edges of $E'_{1,2},E'_{2,1}$ respectively. The remaining edges $E'_{11}$ and $E'_{22}$ are covered recursively by applying the algorithm on $T'_1$ and $T'_2$ respectively. See Fig. \ref{fig:swapedge} for an illustration.

We now describe how to compute the cycle cover $\cC_{1,2}$ for the edges of $E'_{1,2}$. The edges $E'_{2,1}$ are covered analogously (i.e., by switching the roles of $T'_1$ and $T'_2$). Recall that the tree edges $E'_{1,2}$ are those edges $(p(v),v)$ such $v \in T'_1$ and $s(v) \in T'_2$. The procedure works in $O(\log n)$ phases, each phase $i$ computes three cycle collections $\cC'_{i,1},\cC'_{i,2}$ and $\cC'_{i,3}$ which together covers at least half of the yet uncovered edges of $E'_{1,2}$ (as will be shown in analysis). 

Consider the $i^{th}$ phase where we are given the set of yet uncovered 
edges $X_i \subseteq E'_{1,2}$.
We first mark all the vertices $v$ with $(p(v),v) \in X_i$. Let $M_i$ be this 
set of marked nodes. For ease of description, assume that $M_i$ is even, 
otherwise, we omit one of the marked vertices $w$ (from $M_i$) and take care of 
its edge $(p(w),w)$ in a later phase. We apply Algorithm 
$\EdgeDisjointPath(T'_1,M_{i})$ (see Lemma 4.3.2 \cite{Peleg:2000}) which 
matches the marked vertices $M_i$ into pairs $\Sigma=\{\langle v_1,v_2 \rangle 
\mid v_1,v_2 \in M_i \}$ such that for each pair $\sigma=\langle v_1,v_2 
\rangle$ there is a tree path $\pi(\sigma)=\pi(v_1,v_2,T'_1)$ and all 
the tree paths $\pi(\sigma),\pi(\sigma')$ are edge disjoint for every 
$\sigma,\sigma' \in \Sigma$.

Let $X''_i=\{e=(p(v),v) \in X_i : \exists v' \mbox{~and~} \langle v,v' 
\rangle \in \Sigma, \mbox{~s.t.~} e \in \pi(v,v',T'_1)\}$ be the set of edges 
in $X_i$ that appear in the collection of edge disjoint paths 
$\{\pi(\sigma),\sigma \in \Sigma\}$.
Our goal is to cover all edges in $E''_i=X''_i \cup \{P_e ~\mid~ e \in X''_i\}$ by cycles $\cC_i$. 
To make sure that all edges $E''_i$ are covered, we have to be careful that 
each such edge appears on a given cycle exactly once. Towards this end, we 
define a directed conflict graph $G_{\Sigma}$ whose vertex set are the pairs of 
$\Sigma$, and there is an arc $(\sigma',\sigma) \in A(G_{\Sigma})$ where 
$\sigma=\langle v_1,v_2 \rangle$, $\sigma=\langle v'_1,v'_2 \rangle$, if at 
least one of the following cases holds: Case (I) $e=(p(v_1),v_1)$ on 
$\pi(v_1,v_2,T'_1)$ and the path 
$\pi'=\pi(v'_1,v'_2,T'_1)$ intersects the edges of $P_{e}$; Case (II) $e'=(p(v_2),v_2)$ on $\pi(v_1,v_2,T'_1)$ and the path $\pi'$ intersects the edges of $P_{e'}$.
Intuitively, a cycle that contains both $\pi'$ and $P_e$ is not simple and in particular might not cover all edges on $P_e$. Since the goal of the pair $\sigma=\langle v_1,v_2 \rangle$ is to cover all edges on $P_e$ (for $e \in \pi(v_1,v_2,T'_1)$), the pair $\sigma'$ ``interferes" with $\sigma$.

In the analysis section (\Cref{lem:outdegone}), we show that the outdegree in the graph $G_{\Sigma}$ is bounded by $1$ and hence we can color $G_{\Sigma}$ with $3$ colors. This allows us to partition $\Sigma$ into three color classes $\Sigma_1, \Sigma_2$ and $\Sigma_3$. Each color class $\Sigma_j$ is an independent set in $G_{\Sigma}$ and thus it is ``safe" to cover all these pairs by cycles together. We then compute a cycle cover $\cC_{i,j}$, for each $j \in \{1,2,3\}$. The collection of all these cycles will be shown to cover the edges $E''_i$.

To compute $\cC_{i,j}$ for $j=\{1,2,3\}$, for each matched pair $\langle v_1,v_2 \rangle \in \Sigma_j$, we add to $T'_2$ a virtual edge $\widehat{e}$ between $s(v_1)$ and $s(v_2)$.
Let 
$$\widehat{E}_{i,j}=\{(s(v_1),s(v_2)) ~\mid~ \langle v_1,v_2 \rangle \in \Sigma_j\}.$$
We cover these virtual non-tree edges by cycles using Algorithm $\NonTreeCover$ 
on the tree $T'_2$ with the non-tree edges $\widehat{E}_{i,j}$. Let 
$\cC''_{i,j}$ be an $(O(D\log n),\widehat{O}(\log n))$-cycle cover that is the 
output of Algorithm 
$\NonTreeCover(T'_2,E'_{i,j})$. 
The output cycles of $\cC''_{i,j}$ are not yet cycles in $G$ as they consists 
of 
two types of virtual edges: the edges in $\widehat{E}_{i,j}$ and the edges $\widetilde{E}=\{(v,s(v)) ~\mid~ (p(v),v)\in I(T)\}$. 
First, we translate each cycle $C'' \in \cC''_{i,j}$ into a cycle $C'$ in $G \cup \widetilde{E}$ by replacing each of the virtual edges $\widehat{e}=(s(v_1),s(v_2))\in \widehat{E}_{i,j}$ in $C''$  with the path $P(\widehat{e})=(s(v_1),v_1) \circ \pi(v_1,v_2,T'_1) \circ (v_2,s(v_2))$. 
Then, we replace each virtual edge $(v,s(v))\in \widetilde{E}$ in $C'$ by the $v$-$s(v)$ path $P_{e}$ for $e=(p(v),v)$. This results in cycles $\cC_{i,j}$ in $G$. 

Finally, let $\mathcal{C}_{i}=\mathcal{C}_{i,1}\cup \mathcal{C}_{i,2}\cup \mathcal{C}_{i,3}$
and define $X_{i+1}=X_i \setminus X''_i$ to be the set of edges $e \in X_i$ that are not covered by the paths of $\Sigma$. If in the last phase $\ell=O(\log n)$, the set of marked nodes $M_\ell$ is odd, we omit one of the marked nodes $w \in M_{\ell}$, and cover its tree edge $e=(p(w),w)$ by taking the fundamental cycle of the swap edge $\swap(e)$ into the cycle collection.
The final cycle collection for $E'_{1,2}$ is given by $\cC_{1,2}=\bigcup_{i=1}^\ell \cC_i$. The same is done for the edges $E'_{2,1}$. 
This completes the description of the algorithm. The final collection of cycles is denoted by $\cC_3$. See 
\Cref{fig:main-algorithm-tree-edges} for the full description of the algorithm.
See \Cref{fig:swapedge,fig:treecoverfig,fig:treecoverfigmany} and for illustration.

\begin{figure}[h!]
	\begin{center}
		\includegraphics[scale=0.40]{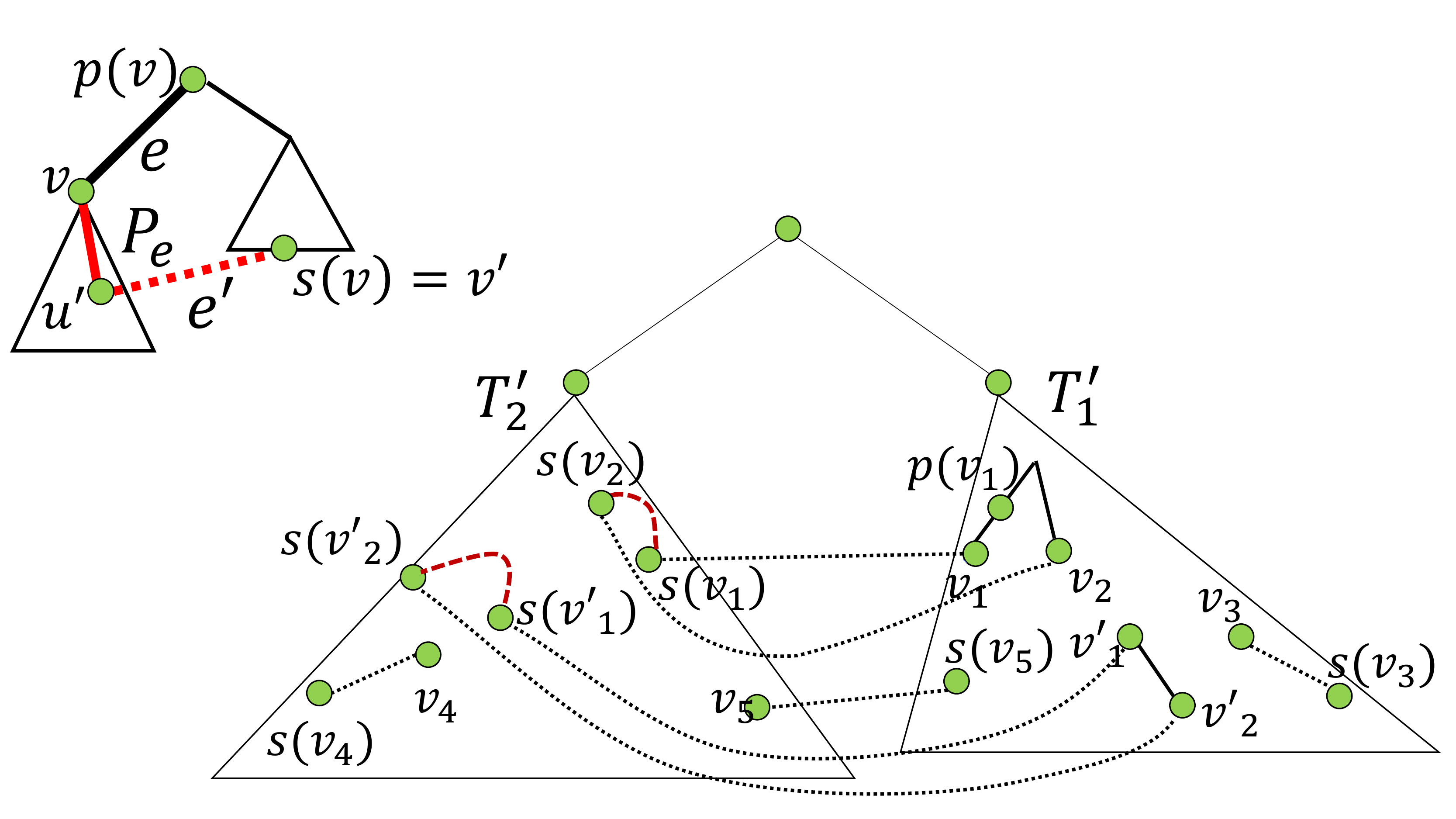}
		 \caption{Left: Illustration of the swap edge $e'=\swap(e)$ and the path $P_e$ 
			for an edge $e \in T$. For each tree edge $e=(u,v)\in T$, we add 
			the  
			auxiliary edge $(v,s(v))$. Right: The tree $T'$ is partitioned into two balanced trees $T'_1$ and $T'_2$. The root vertex in this example belongs to both trees. The edges $\widetilde{E}$ are partitioned into four sets: $E'_{1,1}$ (e.g., the edge $(p(v_3),v_3)$), $E'_{2,2}$ (e.g., the edge $(p(v_4),v_4)$), $E'_{1,2}$ (e.g., the edge $(p(v_1),v_1)$), $E'_{2,1}$ (e.g., the edge $(p(v_5),v_5)$). The algorithm covers the edges of $E'_{1,2}$ by using Algorithm $\EdgeDisjointPath$ to compute a matching and edge disjoint paths in $T'_1$. See the tree paths between $v_1$ and $v_2$ and $v'_1$ and $v'_2$. Based on this matching, we add virtual edges between vertices of $T'_2$, for example the edges $(s(v_1),s(v_2))$ and $(s(v'_1),s(v'_2))$ shown in dashed. The algorithm then applies Algorithm $\NonTreeCover$ to cover these non-tree edges in $T'_2$.
			\label{fig:swapedge}}
	\end{center}
\end{figure}

\begin{figure}[h!]
	\begin{center}
	 \includegraphics[scale=0.40]{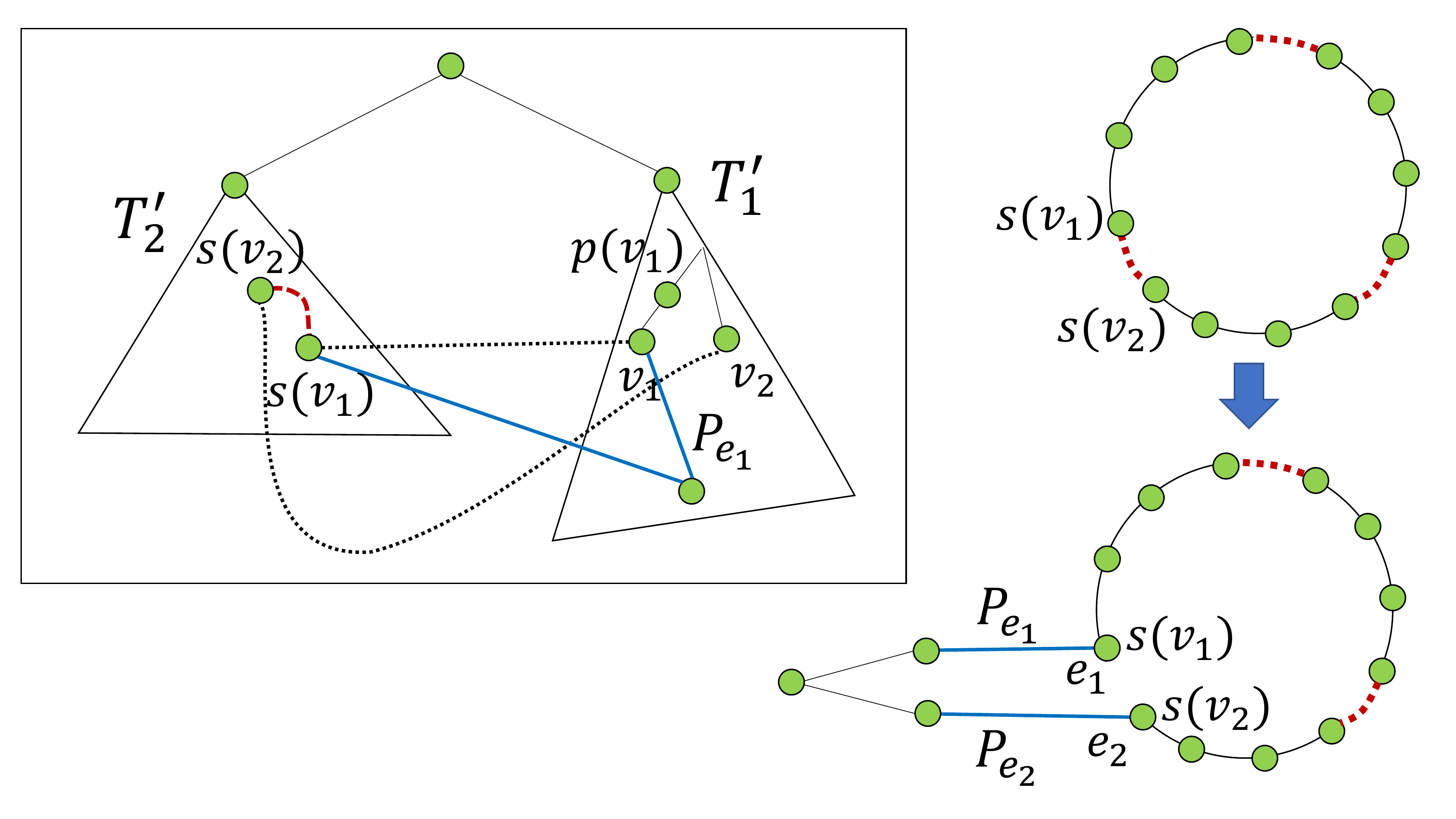}
		\caption{Illustration of replacing a single virtual edge $(s(v_1),s(v_2))$ in a cycle $C'' \in \cC''_{i,j}$ by an $s(v_1)$-$s(v_2)$ path in $G$. 
				\label{fig:treecoverfig}}
	\end{center}
\end{figure}

\begin{figure}[h!]
	\begin{center}
	 	\includegraphics[scale=0.40]{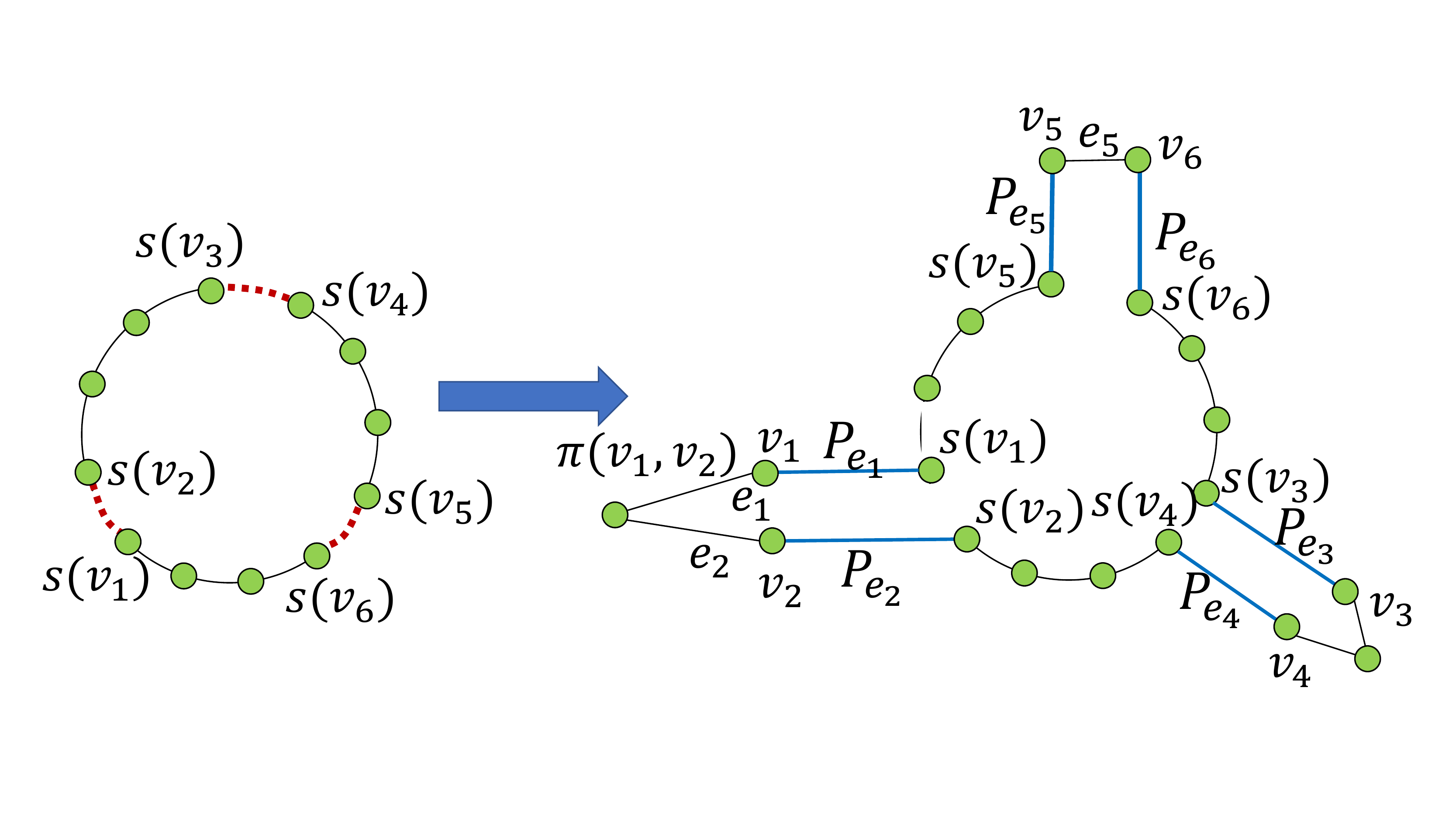}
		\caption{Translating virtual cycles into cycles in $G$. Each cycle contains $O(\log n)$ virtual edges which are are replaced by (almost) edge disjoint paths in $G$. Note that each edge on $e_i \circ P_{e_i}$ appears exactly once since the $P_{e}$ paths are tree-edge disjoint and the pairs $\sigma=\langle u_j,v_j \rangle$ in $\Sigma_{i,j}$ do not interfere with each other.
				\label{fig:treecoverfigmany}}
	\end{center}
\end{figure}

\begin{figure}[!h]
\begin{boxedminipage}{\textwidth}
	\vspace{3mm} \textbf{Algorithm $\TreeCover(T')$}
	\begin{enumerate}
		\item If $|T'|=1$ then output empty collection.
		\item Let $\cC$ be an empty collection.
		\item Partition $T'$ into balanced $T'_1 \cup T'_2$. 
		\item Let $E'$ be an empty set.
		\item For every $(u,v) \in T$ let $(u',v')=\swap((u,v))$ and add a 
		virtual edge $(v,v')$ to $E'$.
		\item For $i=1,...,O(\log n)$:
		\begin{enumerate}
		\item Let $M_{i}$ be all active nodes $v \in V(T'_1)$ s.t. $\swap(v) \in 
		V(T_2)\}$.
		\item Apply $\EdgeDisjointPath(T'_1,M_{i})$ and let $\Sigma=\{\langle v_1,v_2\rangle\}$ be the collection of matched pairs.
		\item Partition $\Sigma$ into $3$ \emph{independent} sets $\Sigma_1,\Sigma_2$ and $\Sigma_3$.
		\item For every $j \in \{1,2,3\}$ compute a cycle cover $\cC_{i,j}$ as follows:
		\begin{enumerate}
		\item For every pair $\langle v_1,v_2 \rangle$ in 
		$\Sigma_j$ add a virtual edge $(s(v_1),s(v_2))$ to $\widehat{E}_{i,j}$.
		\item $\cC''_{i,j} \gets \cC''_{i,j} \cup \NonTreeCover(T_2,\widehat{E}_{i,j})$.
		\item Translate $\cC''_{i,j}$ to cycles $\cC_{i,j}$ in $G$.
		\end{enumerate}
		\item Let $\cC_{i}=\cC_{i,1} \cup \cC_{i,2} \cup \cC_{i,3}$.
		\end{enumerate}
		\item $\cC_1= \bigcup_{i}\cC_{i}$.
		\item Repeat where $T'_1$ and $T'_2$ are switched.
		\item Add $\TreeCover(T'_1)\cup \TreeCover(T'_2)$ to $\cC$.
		\item Output $\mathsf{SimplifyCycles}(\cC \cup \cC_1)$.
	\end{enumerate}
\end{boxedminipage}
\caption{Procedure for covering tree edges.}
\label{fig:main-algorithm-tree-edges}
\end{figure}

We analyze the $\TreeCover$ algorithm and show that it finds short cycles, with 
low congestion and that every edge of $T$ is covered.

\paragraph{Short Cycles.}
By construction, each cycle that we compute using Algorithm $\NonTreeCover$ 
consists of at most $O(\log n)$ non-tree (virtual) edges $\widehat{E}_{i,j}$. The 
algorithm replaces each non-tree edge $\widehat{e}=(v_1,v_2)$ by an $v_1$-$v_2$ 
path in $G$ of length $O(D)$. 
This is done in two steps. First, $\widehat{e}=(v_1,v_2)$ is replaced by a path $P_{\widehat{e}}=(v_1,s(v_1))\circ \pi(v_1,v_2,T) \circ (v_2,s(v_2))$ in $G\cup \widetilde{E}$. Then, each $(v,s(v))$ edge is replaced by the path $P_{(p(v),v)}$ in $G$, which is also of length $O(D)$. Hence, overall the translated path $v_1$-$v_2$ path in $G$ has length $O(D)$. Since there are $O(\log n)$ virtual edges that are replaced on a given cycle, the cycles of $G$ has length $O(D\log n)$.

\paragraph{Cover.}
We start with some auxiliary property used in our algorithm.
\begin{claim}\label{lem:outdegone}
Consider the graph $G_{\Sigma}$ constructed when considering the edges in $E'_{1,2}$. 
The outdegree of each pair $\sigma'=\langle v'_1,v'_2 \rangle\in \Sigma$ in $G_{\Sigma}$ is at most $1$. Therefore, $G_{\Sigma}$ can be colored in $3$ colors. 
\end{claim}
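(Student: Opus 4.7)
The plan is to fix $\sigma'=\langle v'_1,v'_2\rangle\in\Sigma$ with tree path $\pi'=\pi(v'_1,v'_2,T'_1)$ and show directly that at most one $\sigma\in\Sigma$ can receive an arc from $\sigma'$. The $3$-colorability follows at once, since a directed graph with maximum outdegree $1$ has at most $|\Sigma|$ arcs and therefore an underlying undirected pseudoforest, which is $2$-degenerate and hence $3$-colorable by a standard greedy argument.

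The core structural claim I would first establish is: if $\sigma'\to\sigma$ is witnessed by Case~(I) via tree edge $e=(p(v_1),v_1)$, then $\pi'\subseteq T(v_1)$; symmetrically, a Case~(II) witness via $e'=(p(v_2),v_2)$ forces $\pi'\subseteq T(v_2)$. Indeed, $\pi'$ is a path in the tree $T$ and $\swap(e)\notin T$, so any edge in $\pi'\cap P_e$ must be a tree edge of the portion $\pi(v_1,u')\subseteq T(v_1)$. In particular $\pi'$ has an edge inside $T(v_1)$. Since the matched paths produced by $\EdgeDisjointPath$ are pairwise edge-disjoint and $e\in\pi(v_1,v_2,T'_1)$, we have $e\notin\pi'$; but $e$ is the unique boundary edge of $T(v_1)$ in $T$, so $\pi'$ cannot leave $T(v_1)$, and is therefore entirely contained in $T(v_1)$.

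Granted this, suppose for contradiction that two distinct pairs $\sigma^1,\sigma^2\in\Sigma$ both receive arcs from $\sigma'$. For each $i\in\{1,2\}$, let $z^i$ be the ``responsible'' endpoint of $\sigma^i$ under its witnessing case ($z^i=v_1^i$ in Case~(I), $z^i=v_2^i$ in Case~(II)), so that $\pi'\subseteq T(z^i)$. Set $\hat e^i=(p(z^i),z^i)\in I(T)$. Both $z^1,z^2$ are ancestors of $\mathrm{LCA}(v'_1,v'_2)$ in $T$, hence lie on a single root-to-LCA path and are comparable. If $z^1=z^2$ then $\sigma^1,\sigma^2$ share a matched node, forcing $\sigma^1=\sigma^2$ (each node of $M_i$ belongs to a unique pair), a contradiction. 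So without loss of generality $z^1$ is a strict ancestor of $z^2$, and thus $\hat e^1$ is processed strictly before $\hat e^2$ by the greedy construction of $I(T)$.

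The final step invokes the defining rule of $I(T)$: since $\hat e^2\in I(T)$, the edge $\hat e^2$ cannot lie on $P_{e'}$ for any earlier $e'\in I(T)$, for otherwise $\hat e^2$ would already be covered by the cycle $e'\circ P_{e'}$ and excluded. In particular $\hat e^2\notin P_{\hat e^1}$. Since $\hat e^2$ is the unique boundary edge of $T(z^2)$ in $T$, the tree portion of $P_{\hat e^1}$ cannot enter $T(z^2)$. But $\pi'\subseteq T(z^2)$, so $\pi'\cap P_{\hat e^1}=\emptyset$, contradicting the existence of the arc $\sigma'\to\sigma^1$. I expect the main obstacle to be keeping the case analysis uniform across the four combinations of Case~(I)/(II) for $\sigma^1$ and $\sigma^2$; the ``responsible endpoint'' reduction is precisely what collapses all four into a single argument.
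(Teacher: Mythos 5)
Your proof is correct and follows essentially the same argument as the paper: both establish that the responsible tree edge $\hat e$ of any out-neighbor lies above $\mathrm{LCA}(v'_1,v'_2)$ (equivalently, that $\pi'$ is contained in the corresponding subtree), and then derive the contradiction from the greedy construction of $I(T)$ by comparing the two candidate responsible edges along the root-to-LCA path. Your ``responsible endpoint'' abstraction is a clean way to unify Cases (I) and (II), which the paper handles via a WLOG.
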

\begin{proof}
Let $\sigma=\langle v_1,v_2 \rangle$ be such that $\sigma'$ interferes with 
$\sigma$ (i.e., $(\sigma',\sigma) \in A(G_{\Sigma})$). Without loss of 
generality, let $e=(p(v_1),v_1)$ be such that $e \in \pi(v_1,v_2,T'_1)$ and 
$\pi(v'_1,v'_2,T'_1)$ intersects the edges of $P_e$.

We first claim that this implies that $e$ appears above the least common ancestor of $v'_1$ and $v'_2$ in $T'_1$, and hence by the properties of our partitioning, also in $T$. Assume towards contradiction otherwise, since $e \circ P_e$ is a path on $T$ (where $e$ is closer to the root) and since $P_e$ intersects $\pi(v'_1,v'_2,T'_1)$, it implies that $e \in \pi(v'_1,v'_2,T'_1)$. 
Since the vertex $v_1$ is marked, we get a contradiction that $v'_1$ got matched with $v'_2$ as the algorithm would have matched $v_1$ with one of them. In particular, we would get that the paths $\pi(\sigma)$ and $\pi(\sigma')$ are \emph{not} edge disjoint, as both contain $e$. Hence, we prove that $e$ is above the LCA of $v'_1$ and $v'_2$. 

Next, assume towards contradiction that there is another pair $\sigma''=\langle 
v''_1,v''_2 \rangle\in \Sigma$ such that $\sigma'$ interferes with $\sigma''$. 
Without loss of generality, let $v''_1$ be such that $e''=(p(v''_1),v''_1)$ in 
on $\pi(v''_1,v''_2,T'_1)$ and $P_{e''}$ intersects with $\pi(v'_1,v'_2,T'_1)$. 
This implies that $e''$ is also above the LCA of $v'_1$ and $v'_2$ in $T'_1$. Since one of the edges of $P_{e''}$ is on $\pi(v'_1,v'_2,T'_1)$ it must be that either $e''$ on $P_{e}$ or vice verca, in contradiction that $e,e'' \in I(T)$. 
\end{proof}
We now claim that each edge $e \in T$ is covered. By the definition of $I(T)\subseteq E(T)$, it is sufficient to show that:
\begin{claim}
For every edge $e \in I(T)$, there exists a cycle $C \in \cC_3$ such that 
$e\circ P_e \subseteq C$. 
\end{claim}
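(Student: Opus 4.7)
The plan is to prove the claim by induction on the recursion depth of Algorithm $\TreeCover$. The base case (a singleton tree) is vacuous. For the inductive step on an input tree $T'$, partition the edges of $I(T) \cap E(T')$ into the four classes $E'_{x,y}$ for $x,y\in\{1,2\}$ induced by the balanced split $T'=T'_1\cup T'_2$. Edges in $E'_{1,1}$ and $E'_{2,2}$ (together with their $P_e$ paths lying inside the corresponding subtree) are dispatched to the recursive calls and the inductive hypothesis. It remains to handle the ``crossing'' edges $E'_{1,2}$ (the case $E'_{2,1}$ is symmetric).

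Fix $e=(p(v),v)\in E'_{1,2}$. I first argue that $e$ eventually enters $X''_i$ for some phase $i$. The sets $X_1=E'_{1,2}\supseteq X_2\supseteq\cdots$ satisfy $X_{i+1}=X_i\setminus X''_i$. In phase $i$ the algorithm marks exactly the endpoints $v$ with $(p(v),v)\in X_i$, and $\EdgeDisjointPath$ matches them into pairs. For any matched pair $\langle v_1,v_2\rangle$, since $v_1$ and $v_2$ are distinct, at least one of them is not a descendant of the other, so at least one of the edges $(p(v_1),v_1)$, $(p(v_2),v_2)$ lies on $\pi(v_1,v_2,T'_1)$. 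Hence $|X''_i|\ge |M_i|/2=|X_i|/2$, giving $|X_{i+1}|\le |X_i|/2$. After $O(\log n)$ phases the set is exhausted; any odd leftover is covered explicitly by the fundamental cycle of $\swap(e)$, as stated in the algorithm.

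Once $e\in X''_i$ with witnessing pair $\sigma=\langle v,v'\rangle\in\Sigma$ and color class $\Sigma_j$, the virtual edge $\widehat{e}=(s(v),s(v'))$ is added to $\widehat{E}_{i,j}$. By \Cref{lem:non-treecover} applied to $T'_2$ and $\widehat{E}_{i,j}$, there is a cycle $C''\in\cC''_{i,j}$ containing $\widehat{e}$. Translating $C''$ into $G$ replaces $\widehat{e}$ with the path $(s(v),v)\circ\pi(v,v',T'_1)\circ(v',s(v'))$, and then replaces every virtual edge $(w,s(w))\in\widetilde{E}$ present in the resulting walk by $P_{(p(w),w)}$. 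In particular $e$ appears on $\pi(v,v',T'_1)$ (by the definition of $X''_i$) and $P_e$ appears as the replacement of $(v,s(v))$. Thus, before simplification, the resulting cycle $C$ contains every edge of $e\circ P_e$.

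The main obstacle is showing that $\SimplifyCycles$ does not delete $e$ or any edge of $P_e$. By the lemma governing $\SimplifyCycles$, only edges that appear more than once on $C$ can be dropped, so it suffices to prove each edge of $e\circ P_e$ occurs exactly once. This combines three disjointness facts already established. First, the matched paths $\{\pi(\sigma')\}_{\sigma'\in\Sigma}$ are edge-disjoint (guarantee of $\EdgeDisjointPath$), so no edge of $\pi(v,v',T'_1)$ comes from another pair's tree path. Second, the paths $\{P_{e'}\}_{e'\in I(T)}$ are pairwise tree-edge-disjoint, so distinct swap-path replacements cannot collide on a tree edge of $P_e$. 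Third, by the independence of $\Sigma_j$ from \Cref{lem:outdegone}, no other pair $\sigma'\in\Sigma_j$ interferes with $\sigma$, i.e., neither does $\pi(\sigma')$ cross $P_e$ nor does a $P_{e''}$ contributed by $\sigma'$ cross $\pi(\sigma)$. Together these guarantee that every edge of $e\circ P_e$ appears exactly once in $C$ and thus survives simplification, completing the inductive step.
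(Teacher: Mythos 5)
Your proposal is correct and follows essentially the same route as the paper's: first argue (via the halving argument using $\EdgeDisjointPath$, with the odd-leftover handled by a fundamental cycle) that each $e\in E'_{1,2}$ enters $X''_i$ for some phase $i$, then trace the virtual edge $(s(v),s(v'))$ through $\NonTreeCover$ and the two translation steps, and finally verify exact-once appearance of each edge of $e\circ P_e$ using the edge-disjointness of matched paths, the pairwise tree-edge-disjointness of the $P_{e'}$ paths for $e'\in I(T)$, and the independence of the color class $\Sigma_j$ from \Cref{lem:outdegone}. The only small blemish is the justification that at least one of $(p(v_1),v_1),(p(v_2),v_2)$ lies on $\pi(v_1,v_2,T'_1)$: the correct reason is simply that the endpoint of the pair that is lower in the tree always contributes its parent edge to the path (and if neither is an ancestor of the other, both do), rather than the ``not a descendant'' phrasing you give -- but the conclusion and the rest of the argument are sound.
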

Consider a specific tree edge $e=(p(v),v)$. First, note that since $(v,s(v))$ 
is a non-tree edge, there must be some recursive call with the tree $T'$ such 
that $v \in T'_1$ and $s(v) \in T'_2$ where $T'_1$ and $T'_2$ are the balanced 
partitioning of $T'$. At this point, $(v,s(v))$ is an edge in $E'_{1,2}$. 
We show that in the $\ell=O(\log n)$ phases of the algorithm for covering the 
$E'_{1,2}$, there is a phase in which $e$ is covered. 
\begin{claim}\label{cl:coverdisjoint}
(I) For every $e=(p(v),v) \in E'_{1,2}$ except at most one edge $e^*$, there is 
a phase $i_e$ where Algorithm $\EdgeDisjointPath$ matched $v$ with some $v'$ 
such that $e \in \pi(v,v')$. \\
(II) Each edge $e \neq e^*$ is covered by the cycles computed in phase $i_e$. 
\end{claim}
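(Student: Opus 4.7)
The plan is to prove (I) by a halving argument on the uncovered set $X_i$, and (II) by tracing $e$ through the virtual-to-real translation of the cycle produced by $\NonTreeCover$, finishing with the $\SimplifyCycles$ survival claim already established in \Cref{sec:cover-non-tree}.

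For (I), fix a phase $i$ and apply $\EdgeDisjointPath$ to the marked set $M_i$ (which is in bijection with $X_i$, possibly omitting one vertex if $|M_i|$ is odd). The key observation is that for every matched pair $\sigma = \langle v_1, v_2 \rangle$, at least one of the edges $(p(v_1), v_1), (p(v_2), v_2)$ lies on $\pi(v_1, v_2, T'_1)$: letting $a$ be the LCA of $v_1$ and $v_2$ in $T'_1$, if $a = v_1$ the path descends from $v_1$ to the proper descendant $v_2$ and its final edge is $(p(v_2), v_2)$; symmetrically if $a = v_2$; and if $a \notin \{v_1, v_2\}$, the path begins with $(p(v_1), v_1)$ and ends with $(p(v_2), v_2)$. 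Thus at least $\lfloor |M_i|/2 \rfloor$ edges move from $X_i$ into $X''_i$, so $|X_{i+1}| \le \lceil (|X_i|+1)/2 \rceil$. After $\ell = O(\log n)$ phases $|X_\ell| \le 1$; the unique leftover edge (if any) is designated $e^*$ and covered separately via the fundamental cycle of $\swap(e^*)$, as prescribed by the algorithm.

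For (II), fix $e = (p(v), v) \ne e^*$ and let $v'$ be its partner in phase $i = i_e$ with $\langle v, v' \rangle \in \Sigma_j$. The algorithm adds the virtual non-tree edge $\widehat{e} = (s(v), s(v'))$ to $\widehat{E}_{i,j}$, and by \Cref{lem:non-treecover} the collection $\NonTreeCover(T'_2, \widehat{E}_{i,j})$ contains a cycle $C''$ through $\widehat{e}$. Substituting $\widehat{e}$ by $(s(v), v) \circ \pi(v, v', T'_1) \circ (v', s(v'))$ produces a closed walk $C' \subseteq G \cup \widetilde{E}$ containing both $e$ (since $e \in \pi(v, v', T'_1)$) and the virtual edge $(v, s(v)) \in \widetilde{E}$. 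Substituting $(v, s(v))$ by $P_e$ then yields a closed walk $C \subseteq G$ that contains the entire $e \circ P_e$.

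The main obstacle is verifying that $e$ (and, more ambitiously, every edge of $e \circ P_e$) appears exactly once on $C$, so that the $\SimplifyCycles$ preservation claim of \Cref{sec:cover-non-tree} keeps $e$ covered. I will classify the edges of $C$ into three disjoint sources: (a) tree edges of $T'_2$ inherited from $C''$; (b) tree paths $\pi(v_1, v_2, T'_1)$ substituted for the $\widehat{E}_{i,j}$-edges on $C''$; and (c) paths $P_{e''}$ substituted for the $\widetilde{E}$-edges on $C'$. Source (a) cannot contribute $e$ because $T'_1$ and $T'_2$ are edge-disjoint; source (b) supplies $e$ exactly once, since the paths returned by $\EdgeDisjointPath$ are pairwise edge-disjoint in $T'_1$ and only $\pi(v, v', T'_1)$ contains $e$; source (c) supplies $e$ zero times, via a short case analysis based on the construction order of $I(T)$, showing that for any $e'' \in I(T) \setminus \{e\}$ the tree part of $P_{e''}$ lies inside the subtree rooted at the deeper endpoint of $e''$, which cannot contain the tree edge $e$. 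Ruling out duplication of the other edges of $P_e$, which could otherwise clip $e$'s segment short during simplification, is precisely the role of the $3$-coloring of $G_\Sigma$: because $\langle v, v' \rangle$ is $G_\Sigma$-independent of every other pair in $\Sigma_j$, no other pair's $\pi$-path intersects $P_e$, so no $P_e$-edge is duplicated through another substituted tree path. Combining (a)--(c) with this non-interference guarantee establishes that every edge of $e \circ P_e$ appears exactly once on $C$, and the $\SimplifyCycles$ preservation claim then yields a simple cycle in $\cC_3$ containing $e$, completing (II).
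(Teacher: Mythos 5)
Your proposal follows the paper's proof essentially step for step: part~(I) via the halving argument on $X_i$ driven by the observation that every matched pair contributes at least one of its two tree edges to the matching path, and part~(II) by tracing $e \circ P_e$ through the two-stage substitution (virtual $\widehat{E}_{i,j}$ edges first, then $\widetilde{E}$ edges) and verifying, source by source, that each edge of $e\circ P_e$ survives $\SimplifyCycles$ because it appears exactly once. Your explicit LCA case analysis in (I) is a nice fill-in of detail that the paper merely asserts. The decomposition into sources (a)/(b)/(c) and the use of the $G_\Sigma$ $3$-coloring to block $\Pi_{i,j}$-paths from touching $P_e$ matches the paper's argument.

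One small imprecision in your treatment of source (c): you claim that the tree part of $P_{e''}$ lies inside $T(v'')$ \emph{and that this subtree cannot contain $e$}, concluding $e\notin P_{e''}$. The first part is true, but the second fails when $e''$ is strictly closer to the root than $e$ (then $v''$ may be an ancestor of $p(v)$ and $e$ can lie inside $T(v'')$). In that case the exclusion $e\notin P_{e''}$ follows not from a subtree-containment argument but from the construction order of $I(T)$: since $e''$ was considered before $e$, $e$ was admitted to $I(T)$ only because it was \emph{not} already covered by $e''\circ P_{e''}$. You do invoke the construction order as the basis of your ``short case analysis,'' so you clearly have the right underlying idea, but as written the subtree claim is stated too broadly. (The paper's own proof is also terse here --- it only records that $P_e$ and $P_{e''}$ share no tree edge, without explicitly ruling out $e\in P_{e''}$ --- so this is a wrinkle worth cleaning up in either write-up, not a defect unique to yours.)
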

\begin{proof}
Consider phase $i$ where we cover the edges of $X_i$. Recall that the 
algorithm marks the set of nodes $v$ with $(p(v),v) \in X_i$, resulting in the 
set $M_i$. Let $\Sigma$ be the output pairs of Algorithm 
$\EdgeDisjointPath(T'_1,M_{i})$. We first show that at least half of the edges 
in $X_i$ are covered by the paths of $\Sigma$. 

If $M_i$ is odd, we omit one of the marked nodes and then apply Algorithm 
$\EdgeDisjointPath$ to match the pairs in the even-sized set $M_i$. The key 
observation is that for every matched pair $\langle v_1,v_2 \rangle$, it holds 
that either $(p(v_1),v_1)$ or $(p(v_2),v_2)$ is on $\pi(v_1,v_2,T'_1)$ (or 
both). Hence, at least half of the edges of $X_i$ are on the edge disjoint 
paths $\pi(v_1,v_2,T'_1)$.

We therefore get that after $\ell=c\log n$ phases, we are left with $|M_{\ell}|=O(1)$ at that point if $|M_{\ell}|$ is odd, we omit one vertex $v^*$ such that $e^*=(p(v^*),v^*)$. Claim (I) follows.

We now consider (II) , let $e=(p(v),v)$ and consider phase $i=i_e$ in which $e \in \pi(v,v',T'_1)$ where $v'$ is the matched pair of $v$. We show that all the edges of $e \circ P_e$ are covered by the cycles $\cC_i$ computed in that phase.
By definition, $\langle v,v' \rangle$ belongs to $\Sigma$. By \Cref{lem:outdegone}, $G_{\Sigma}$ can be colored by $3$ colors, let $\Sigma_j \subseteq \Sigma$ be the color class that contains $\langle v,v' \rangle$.

We will show that there exists a cycle $C$ in $\cC_{i,j}$ that covers each edge $e'' \in e \circ P_e$ exactly once.
Recall that the algorithm applies Algorithm $\NonTreeCover$ which computes a 
cycle cover $\cC''_{i,j}$ to cover all the virtual edges $\widehat{E}_{i,j}$ in 
$T'_2$. Also, $(s(v),s(v'))\in \widehat{E}_{i,j}$.

Let $C''$ be the (simple) cycle in $\cC''_{i,j}$ that covers the virtual edge $(s(v),s(v'))$. 
In this cycle $C''$ we have two types of edges: edges in $T'_2$ and virtual edges $(s(v_1),s(v_2))$.  
First, we transform $C''$ into a cycle $C'$ in which each virtual edge $\widehat{e}=(s(v_1),s(v_2))$ is replaced by a path $P(\widehat{e})=(s(v_1),v_1) \circ \pi(v_1,v_2,T'_1) \circ (v_2,s(v_2))$. 
Next, we transform $C'$ into $C \subseteq G$ by replacing each edge $(v_1,s(v_1)) \in \widetilde{E}$ in $C''$ by the $v_1$-$s(v_1)$ path $P_{e_1}$ for $e_1=(p(v_1),v_1)$. 

We now claim that the final cycle $C \subseteq G$, contains each of the edges $e \circ P_e$ exactly once, hence even if $C$ is not simple, making it simple still guarantees that $e \circ P_e$ remain covered.
Since $T'_1$ and $T'_2$ are edge disjoint, we need to restrict attention only two types of $T'_1$ paths that got inserted to $C$: 
 (I) the edge disjoint paths $\Pi_{i,j}=\{\pi(v_1,v_2,T'_1) \mid  \langle 
 v_1,v_2 \rangle \in \Sigma_j\}$ and (II) the $v'$-$s(v')$ paths $P_{e'}$ for 
 every edge $e'=(p(v'),v')$ (appears on $C'$).

We first claim that there is exactly one path $\pi(v_1,v_2,T'_1) \in \Pi_{i,j}$ that contains the edge $e=(p(v),v)$. By the selection of phase $i$, $e \in \pi(v,v',T'_1)$ where $v'$ is the pair of $v$. Since all paths $\Pi_{i,j}$ are edge disjoint, no other path contains $e$. Next, we claim that there is \emph{no} path $\pi \in \Pi_{i,j}$ that passes through an edge $e' \in P_e$. Since $e=(p(v),v) \in \pi(v,v',T'_1)$ and all edges on $P_e$ are \emph{below} $e$ on $T$\footnote{Since our partitioning into $T'_1,T'_2$ maintains the layering structure of $T$, it also holds that $P_e$ is below $e$ on $T'_1$.}, the path $\pi(v,v',T'_1)$ does not contain any $e' \in P_e$.
In addition, since all pairs in $\Sigma_j$ are independent in $G_{\Sigma}$, 
there is no path in $\pi(\sigma') \in \Pi_{i,j}$ that intersects $P_e$ (as in such a case, $\sigma$ interferes with $\langle v,v' \rangle$). We get that $e$ appears 
exactly once on $\Pi_{i,j}$ and no edge from $P_e$ appears on $\Pi_{i,j}$. 
Finally, we consider the second type of paths in $T'_1$, namely, the $P_{e'}$ paths. By 
construction, every $e' \in X_i$ is in $I(T)$ and hence that $P_{e'}$ and $P_{e}$ share 
no tree edge. We get that when replacing the edge $(v,s(v))$ with $P_e$ all edges $e' \in P_e$ appears and non of the tree edges on $P_e$ co-appear on some other $P_{e''}$.
All together, each edge on $e \circ P_e$ appears on the cycle $C$ exactly 
once.  This completes the cover property. 
\end{proof}
Since the edge $e^*$ is covered by taking the fundamental cycle of its swap edge, we get that all edges of $E'_{1,2}$ are covered. Since each edge $(v,s(v))$ belongs to one of these $E'_{1,2}$ sets, the cover property is satisfied. 

\paragraph{Congestion.}
A very convenient property of our partitioning of $T'$ into two trees $T'_1$ 
and $T'_2$ is that this partitioning is closed for LCAs. In particular, for $j 
\in \{1,2\}$ then if $u,v \in T'_j$, the LCA of $u,v$ in $T'$ is also in 
$T'_j$. Note that this is in contrast to blocks of \Cref{sec:cover-non-tree} 
that are not closed to LCAs. 
We begin by proving by induction on $i=\{1,\ldots, O(\log n)\}$ that all the trees $T',T''...$ considered in the same recursion level $i$ are edge disjoint. In the first level, the 
claim holds vacuously as there is only the initial tree $T$. 
Assume it holds up to level $i$ and consider level $i+1$. As each tree $T_j$ in 
level $i-1$ is partitioned into two edge disjoint trees in level $i+1$, the claim holds.

Note that each edge $e=(v,s(v))$ is considered exactly once, i.e., in one recursion call on $T'=T'_1\cup T'_2$ where without loss of generality, $v \in T'_1$ and $s(v) \in T'_2\setminus T'_1$. By \Cref{cl:coverdisjoint}, there is at most one edge $e^* \in E'_{1,2}$, which we cover by taking the fundamental cycle of $\swap(e^*)$ in $T$. 

We first show that the congestion in the collection of all the cycles added in this way is bounded by $O(\log n)$.
To see this, we consider one level $i$ of the recursion and show that each edge 
appears on at most $2$ of the fundamental cycles $\mathcal{F}_i$ added in that 
level. 
Consider an edge $e^*$ that is covered in this way in level $i$ of the recursion. That is the fundamental cycle of $\swap(e^*)$ given by $\pi(v^*,s(v^*))\cup P_{e^*}$ was added to $\mathcal{F}_i$. 
Let $T'$ be such that $T'=T'_1 \cup T'_2$ and $e^*=(p(v^*),v^*)$ is such that $v^* \in T'_1$ and $s(v^*)\in T'_2$. Since both $v$ and $s(v)$ are in $T'$, the tree path $\pi(v^*,s(v^*))\subseteq T'$. As all other trees $T''\neq T'$ in level $i$ of the recursion are edge disjoint, they do not have any edge in common with $\pi(v^*,s(v^*))$.
For the tree $T'$, there are at most two fundamental cycles that we add. One for covering an edge in $E'_{1,2}$ and one for covering an edge in $E'_{2,1}$. 
Since $e^* \in I(T)$, and each edge appears on at most two paths $P_{e},P_{e'}$ for $e,e' \in I(T)$, overall each edge appears at most twice on each of the cycles in $\mathcal{F}_i$ (once in each direction of the edge) and over all the $O(\log n)$ of the recursion, the congestion due to these cycles is $O(\log n)$.

It remains to bound the congestion of all cycles obtained by translating the 
cycles computed using Algorithm 
$\TreeCover$. We do that by showing that the cycle collection $\cC_i$ computed in phase $i$ to cover the edges of $E'_{1,2}$ is an $O(D\log n, \log n)$ cover. Since there are $O(\log n)$ phases and $O(\log n)$ levels of recursion, overall it gives an $O(D\log n, \log^3 n)$ cover.

Since all trees considered in a given recursion level are edge disjoint, we consider one of them: $T'$.
We now focus on phase $i$ of Algorithm $\TreeCover(T')$. In particular, we 
consider the output cycles $\cC''_{i,j}$ for $j \in \{1,2,3\}$ computed by 
Algorithm $\NonTreeCover$ for the edges $\widehat{E}_{1,2}$ and $T'_2$. Each 
edge $e \in T'_2$ appears on $O(\log n)$ cycles of $\cC''_{i,j}$. Each virtual 
edge $\widehat{e}=(s(v_1),s(v_2))$ is replaced by an $s(v_1)$-$s(v_2)$ path 
$P(\widehat{e})=(s(v_1),v_1)\circ 
\pi(v_1,v_2,T'_1)\circ (s(v_1),v_1)$ in $G \cup \widetilde{E}$. Let $\cC'_{i,j}$ be the cycles in $G \cup \widetilde{E}$ obtained from $\cC''_{i,j}$ by replacing the edges of $\widehat{e}\in \widehat{E}_{1,2}$ with the paths $P(\widehat{e})$ in $G \cup \widetilde{E}$. 
Note that every two paths $P(\widehat{e})$ and $P(\widehat{e}')$ are edge disjoint for every $\widehat{e},\widehat{e}' \in \widehat{E}_{1,2}$.
The edges $(s(v_1),v_1)$ of $\widetilde{E}$ gets used only in tree $T'$ in that 
recursion level. Hence, each edge $(v_1,s(v_1))$ appears on $O(\log n)$ cycles 
$C'$ in $G \cup \widetilde{E}$.

Since the paths $\pi(v_1,v_2,T'_1)$ are edge disjoint, each edge $e' \in \pi(v_1,v_2,T'_1)$ appears on at most $O(\log n)$ cycles $C'$ in $G \cup \widetilde{E}$ (i.e., on the cycles translated from $C'' \in \cC''_{i,j}$ that contains the edge $\widehat{e}=(s(v_1),s(v_2))$). Up to this point we get that each virtual edge $(v,s(v))\in \widetilde{E}$ appears on $O(\log n)$ cycles of $\cC'_{i,j}$. Finally, when replacing $(v,s(v))$ with the paths $P_{(p(v),v)}$, the congestion in $G$ is increased by factor of at most $2$ as every two path $P_e$ and $P_{e'}$ for $e,e' \in I(T)$, are nearly edge disjoint (each edge $(z,w)$ appears on at most twice of these paths, one time in each direction). We get that the cycle collection $\cC_i$ is an $O(D\log n,\log n)$ cover, as desired.

Finally, we conclude by observing that our cycle cover algorithm $\GraphCover$ does not require $G$ to bridgeless by rather covers by a cycle, every edge that appears on some cycle in $G$.
\begin{observation}\label{obs:nice}
Algorithm $\GraphCover$ covers every edge $e$ that appears on some cycle in $G$, hence it is \emph{nice}.
\end{observation}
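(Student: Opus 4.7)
The plan is to revisit the two sub-procedures of $\GraphCover$ and argue that their correctness does not use global bridgelessness of $G$, but only a per-edge hypothesis that the edge in question lies on some cycle in $G$. Since the output of $\GraphCover$ is $\cC_1 \cup \cC_2$ with $\cC_1 = \NonTreeCover(T,E_1)$ and $\cC_2 = \TreeCover(T,E_2)$, it suffices to separately handle non-tree edges in $E_1$ and tree edges in $E_2$.

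First I would handle the non-tree edges. Every non-tree edge $e = (u,v) \in E_1$ automatically lies on its fundamental cycle $e \circ \pi(u,v,T)$, so a non-tree edge is trivially on a cycle in $G$, regardless of whether $G$ is bridgeless. Inspecting the proof of \Cref{lem:non-treecover}, the analysis of $\NonTreeCover$ uses only the post-order numbering of $T$, the block partitioning with density threshold $\densitythreshold$, and the Moore bound applied to the contracted super-graph; none of these invoke bridgelessness. Hence $\cC_1$ covers every edge of $E_1$.

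Next, for tree edges. A tree edge $e=(p(v),v) \in T$ lies on a cycle in $G$ if and only if $G \setminus \{e\}$ is connected, i.e., if and only if $\swap(e)$ is well-defined. Let $I'(T) \subseteq I(T)$ denote the tree edges that are non-bridges; restricting the construction of $I(T)$ to $I'(T)$, the three properties of $I(T)$ continue to hold, and moreover every non-bridge tree edge $e$ still satisfies $e \in e'' \circ P_{e''}$ for some $e'' \in I'(T)$, because the path $e'' \circ P_{e''}$ that covers $e$ uses only swap edges of ancestors of $e$ in $T$, and if $e$ itself is a non-bridge these ancestors are non-bridges as well (their swap edges exist). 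The subsequent recursive construction in $\TreeCover$, the edge-disjoint matching via $\EdgeDisjointPath$, the conflict graph $G_\Sigma$, and the invocation of $\NonTreeCover$ on $T'_2$ with virtual edges $\widehat E_{i,j}$ all proceed without reference to bridgelessness. Thus $\cC_2$ covers every non-bridge tree edge.

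Combining the two cases, every edge on a cycle in $G$ is either a non-tree edge (always covered by $\cC_1$) or a non-bridge tree edge (covered by $\cC_2$), which is exactly the notion of a nice algorithm. The only step that requires a bit of care is verifying that the tree-cover recursion gracefully ignores bridge tree edges without corrupting the coverage of non-bridge edges; the key point is that the paths $P_{e'}$ used to cover $e \in I'(T)$ stay within the subgraph induced by non-bridge edges, so restricting $I(T)$ to $I'(T)$ changes nothing in the analysis. This is the main (and only) obstacle, and it is essentially syntactic once the swap-edge dependency is isolated.
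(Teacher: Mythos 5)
Your overall decomposition matches the paper's (brief) proof: handle non-tree edges via $\NonTreeCover$ (every non-tree edge lies on its fundamental cycle), and handle non-bridge tree edges via $\TreeCover$ by observing that precisely these tree edges have swap edges. Restricting $I(T)$ to its non-bridge members $I'(T)$ and re-running the greedy construction is the right way to make the argument explicit, and it preserves the three properties of $I(T)$.

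However, the justification you give for why every non-bridge tree edge $e$ is caught by some $e'' \in I'(T)$ contains a false subsidiary claim: ``if $e$ itself is a non-bridge these ancestors are non-bridges as well.'' This is not true. Take $T$ to be the path $r-a-b-c$ with a single non-tree edge $(a,c)$: the tree edge $(b,c)$ is a non-bridge, but its ancestor tree edge $(r,a)$ is a bridge. The correct reason that the witness $e''$ is a non-bridge is much simpler and does not reference $e$ at all: $e''$ was \emph{added to} $I'(T)$, so by construction $\swap(e'')$ exists and $e''$ is a non-bridge. In fact the first property follows immediately from the greedy construction restricted to non-bridge edges --- when the greedy scan reaches a non-bridge $e$, either it is already covered by $P_{e'}$ for some previously-added $e' \in I'(T)$, or it is added to $I'(T)$ and covered by $e \circ P_e$. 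No claim about the bridge status of ancestors is needed. Your proof reaches the right conclusion for the wrong reason at that one step; the rest is sound and in fact spells out what the paper's one-line proof leaves implicit.
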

\begin{proof}
Every non-tree edge is clearly an edge the appears on a cycle, an Alg. $\NonTreeCover$ indeed covers all non-tree edges. In addition, every tree edge that appears on a cycle, has a swap edge. Since Alg. $\TreeCover$ covers all swap edges while guaranteeing that their corresponding tree edges get covered as well, the observation follows.
\end{proof}

\subsection{Universally Optimal Cycle Covers}\label{sec:fulloptcovers}
For each edge $e=(u,v) \in E(G)$, let $C_e$ be the shortest cycle in $G$ that 
contains $e$ and let $\OptCCVal=\max_{e \in G} |C_e|$. Clearly, for every 
$(\dilation,\congestion)$-cycle cover $\mathcal{C}$ of $G$, it must hold that each cycle length is at least $\OptCCVal$, even when there is \emph{no} constraint on the congestion $\congestion$. 
%
%

An algorithm $\cA$ for constructing 
cycle covers is \emph{nice} if it does not require $G$ to be bridgeless, but rather covers by cycles all $G$-edges that lie on a 
cycle in $G$. In particular, Algorithm $\GraphCover$ of \Cref{sec:cyclecoverall} is nice (see \Cref{obs:nice}). 

Consider a nice algorithm $\cA$ that constructs an $(f_{\cA}(D), 
\congestion)$-cycle 
cover. We describe Alg. $\cAOPT$ for constructing an
$(\dilation', \congestion')$-cycle cover with $\dilation'=f_{\cA}(\widetilde{O}(\OptCCVal))$ and $\congestion'=\widetilde{O}(\congestion)$. 
Taking $\cA$ to be the $\GraphCover$ algorithm of \Cref{sec:cyclecoverall} yields the theorem. 
Alg. $\cAOPT$ is based on the notion of neighborhood-covers \cite{awerbuch1996fast} (also noted by \emph{ball carving}). 
\begin{definition}[Neighborhood Cover, \cite{awerbuch1996fast}]\label{def:neighborcover}
A $(k,t,q)$ $t$-\emph{neighborhood cover} of an $n$-vertex graph $G=(V,E)$ is 
a 
collection of subsets of $V$ (denoted as \emph{clusters}), $\mathcal{S}=\{S_1,\ldots, S_r\}$ with the 
following properties:
\begin{itemize}
\item
For every vertex $v$, there exists a cluster $S_i$ such that $\Gamma_t(V) \subseteq S_i$. 
\item 
The diameter of each induced sugraph $G[S_i]$ 
is at most $\widetilde{O}(t)$.
\item 
Each vertex belongs to at most $\widetilde{O}(1)$ clusters.
\end{itemize}
\end{definition}
Alg. $\cAOPT$ first constructs an $\OptCCVal$-neighborhood cover $\mathcal{N}=\{S_1,\ldots, S_r\}$. Thus the diameter $D_i$ of each subgraph $G[S_i]$ is $\widetilde{O}(\OptCCVal)$. Next, Algorithm $\cA$ is applied in each subgraph 
$G[S_i]$ simultaneously, computing a
$(f_{\cA}(D_i), \congestion)$-cycle cover $\mathcal{C}_i$ for every $i \in 
\{1,\ldots, r\}$. The output cover is $\mathcal{C}^*=\bigcup_{i=1}^r 
\mathcal{C}_i$. 

The key observation is that since each edge $e$ lies on a cycle $C_e$ of length 
$O(\OptCCVal)$ in $G$, there exists a subgraph $G[S_i]$ that contains the entire cycle $C_e$. Since Alg. $\cA$ is nice, $e$ gets covered in the cycle collection $\mathcal{C}_i$ computed by Alg. $\cA$ in $G[S_i]$. 
As the diameter of 
$G[S_i]$ is $\widetilde{O}(\OptCCVal)$, the length of all cycles is bounded by
$f_{\cA}(\OptCCVal)$. 
Finally, since each edge appears on $\widetilde{O}(1)$ clusters, we have that each edge appears on $\widetilde{O}(\congestion)$ cycles in the output cycle cover $\mathcal{C}^*$. 

To provide a cycle cover that is almost-optimal with respect each individual edge, we 
repeat the above procedures for $O(\log \OptCCVal)$ many times. In the $i^{th}$ 
application, the algorithm constructs $2^i$-neighborhood cover and applies Alg. 
$\cA$ in each of the resulting clusters. The output cycle cover is the union of 
all cycle covers computed in these applications.

An edge $e$ that lies on a cycle of length $\ell=|C_e|$ in $G$, will be a covered by the cycles computed in the $\lceil \log \ell \rceil$ iteration. Since the cycles computed in that iteration are computed in clusters of $2^{\lceil \log \ell \rceil}$-neighborhood cover, $e$ will be covered by a cycle of length $\widetilde{O}(|C_e|)$. 
Finally, the $O(\log \OptCCVal)$ repetitions increases the congestion by factor of at most $O(\log n)$, the claim follows. 
We now provide a detailed analysis of Algorithm $\cAOPT$ for proving \Cref{thm:cyclcoveropt}.

\paragraph{Edge cover.}
We show that $\mathcal{C}^*$ is a cover. 
Consider an edge $e=(u,v)$. By the definition of the neighborhood cover there 
exist an $i \in [r]$ such that $\Gamma_{\OptCCVal}(u) \subseteq S_i$. 
Since $\cA$ is nice, each edge in $G[S_i]$ that 
belongs to some cycle in $G[S_i]$ is covered by the output cycle cover of Algorithm 
$\cA$. As $e$ belongs to a cycle $C_e$ in $G$ of length at most 
$\OptCCVal$, it 
holds that $C_e \subseteq G[S_i]$, thus $e$ is covered by 
a cycle in $\mathcal{C}_i$.

\paragraph{Cycle length.}
By the construction of the neighborhood cover, 
the strong diameter $D_i$ of each $G[S_i]$ is $\widetilde{O}(\OptCCVal)$. Thus, when we run 
$\cA$ on $G[S_i]$, it returns a cycle cover where each cycle is of 
length at most
$f_{\cA}(D_i)=f_{\cA}(\widetilde{O}(\OptCCVal))$. 

\paragraph{Congestion.}
Since $\mathcal{C}_i$ is an $(f(D_i), \congestion)$ cover, each edge $e \in G[S_i]$, appears on at most $\congestion$ cycles in $\mathcal{C}_i$ for every 
$i \in \{1,\ldots, r\}$. Since each vertex $v$ appears in at most $q=O(\log n)$ 
different clusters $S_j \in \mathcal{S}$, overall, we get that each edge 
appears on $O(\congestion \cdot \log n)$ cycles in $\mathcal{C}^*$. 
In \Cref{fig:algorithm-opt,fig:algorithm-opt-edge}, we describe the pseudocodes when taking $\cA$ to be algorithm $\GraphCover$ of \Cref{sec:cyclecoverall} which constructs $(\widetilde{O}(D),\widetilde{O}(1))$ cycle covers.
\begin{figure}[!h]
\begin{boxedminipage}{\textwidth}
\vspace{3mm} \textbf{Algorithm $\OptimalCycleCover$:}
\begin{enumerate}
	\item Construct a $(k,t,q)$ neighborhood cover $S_1,\ldots, S_r$ for 
	$k=O(\log n)$, $t=\OptCCVal$ and $q=\widetilde{O}(1)$.
	\item For each $i \in [r]$ run $\GraphCover$ on $G[S_i]$ to get 
	$\mathcal{C}_i$. 
	\item Output $\mathcal{C}^*=\bigcup_{i=1}^r 
	\mathcal{C}_i$.
\end{enumerate}
\end{boxedminipage}
\caption{Description of the nearly optimal algorithm.}
\label{fig:algorithm-opt}
\end{figure}

\begin{figure}[!h]
\begin{boxedminipage}{\textwidth}
\vspace{3mm} \textbf{Algorithm $\OptimalEdgeCycleCover$:}
\begin{enumerate}
	\item For $i = 1 \ldots \lceil \log \OptCCVal\rceil$:
	\begin{enumerate}
		\item Construct an $2^i$-neighborhood 
		cover 
		$\mathcal{N}_i=\{S_{i,1},\ldots, S_{i,r_i}\}$.
		\item For each $j \in [r_i]$ run $\GraphCover$ on $G[S_{i,j}]$ to get 
		$\mathcal{C}_{i,j}$. 
	\end{enumerate}
	\item Output $\mathcal{C}^*=\bigcup_{i \in [\lceil \log \OptCCVal\rceil],j 
	\in [r]}
	\mathcal{C}_{i,j}$.
\end{enumerate}
\end{boxedminipage}
\caption{Description of the nearly optimal algorithm with respect to each edge.}
\label{fig:algorithm-opt-edge}
\end{figure}
See  \Cref{sec:distributedconstcovers} for the distributed construction of universally optimal cycle covers in the \congest\ model.

\subsection{Two-Edge Disjoint Cycle Covers}\label{app:threeedgedisjoint}
%
 %
A $(\dilation,\congestion)$-two-edge disjoint cycle cover $\mathcal{C}$ is a collection of cycles such that each edge appears on at least two \emph{edge disjoint} cycles, each cycle is of length at most $\dilation$ and each appears on at most $\congestion$ cycles. Using our cycle cover theorem, in this section we show the following generalization;
\begingroup
\def\thetheorem{\ref{lem:twoedgecyclecover}}
\begin{theorem}[Rephrased]
	Let $G$ be a $3$-edge connected graph, then there exists a construction of $(d,c)$ two-edge disjoint cycle cover $\mathcal{C}$ with $d=\widetilde{O}(D^3)$ and $c=\widetilde{O}(D^2)$.
\end{theorem}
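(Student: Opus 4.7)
The plan is to obtain the two-edge-disjoint cycle cover by composing two applications of Theorem~\ref{thm:cyclecover_upper}. In the first stage, apply Theorem~\ref{thm:cyclecover_upper} to $G$ to obtain a cycle cover $\mathcal{C}_1$ of dilation $\widetilde{O}(D)$ and congestion $\widetilde{O}(1)$. This equips every edge $e$ with a first covering cycle $C^1_e \in \mathcal{C}_1$. In the second stage, for each cycle $C \in \mathcal{C}_1$ and each edge $e=(u,v) \in C$, we construct a second cycle $C^2_e \ni e$ that is edge-disjoint from $C$ apart from $e$. The existence of $C^2_e$ follows from $3$-edge-connectivity: there are three edge-disjoint $u$-$v$ paths in $G$, and $C \setminus \{e\}$ consumes at most one of them, so by Menger's theorem a $u$-$v$ path $P_e$ exists in $G \setminus E(C)$; setting $C^2_e := e \cup P_e$ works.

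To construct all $|C|$ alternate cycles for a given $C$ simultaneously with control on dilation and congestion, the plan is to invoke Theorem~\ref{thm:cyclecover_upper} a second time on an auxiliary graph. For each $C$, form $G_C$ from $G \setminus E(C)$ by inserting a virtual edge between the endpoints of every $e \in E(C)$. A cycle cover of $G_C$ supplies, for each such virtual edge, a covering cycle whose remaining edges lie in $G \setminus E(C)$; replacing the virtual edge by $e$ produces the desired $C^2_e$. The key geometric lemma to establish is that each component of $G \setminus E(C)$ touching $V(C)$ has diameter $\widetilde{O}(|C| \cdot D) = \widetilde{O}(D^2)$, so that Theorem~\ref{thm:cyclecover_upper} applied to $G_C$ outputs cycles of length $\widetilde{O}(D^2)$ and internal congestion $\widetilde{O}(1)$.

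The final bounds then follow by tracking the inflation induced by the translation back to $G$. A single cycle in the cover of $G_C$ may contain $\widetilde{O}(|C|)=\widetilde{O}(D)$ virtual edges, and forcing every translated cycle to use only the one $C$-edge it is meant to cover (so as to preserve edge-disjointness with $C$) costs an additional $\widetilde{O}(D)$ factor in length, yielding $\widetilde{O}(D^2) \cdot \widetilde{O}(D) = \widetilde{O}(D^3)$ dilation. For congestion, each edge $e' \in E(G)$ lies on $\widetilde{O}(1)$ cycles $C \in \mathcal{C}_1$, and for each such $C$ it may appear on up to $\widetilde{O}(D^2)$ of the alternate cycles produced for $C$ ($\widetilde{O}(|C|)$ cycles each using $e'$ with internal congestion $\widetilde{O}(1)$, multiplied by the $\widetilde{O}(D)$ translation blowup), giving the claimed $\widetilde{O}(D^2)$ congestion. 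A final application of $\SimplifyCycles$ yields simple cycles while preserving both coverage and edge-disjointness with $C$.

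The main obstacle is the residual-diameter bound $\widetilde{O}(|C| \cdot D)$ for $G \setminus E(C)$, since deleting $\widetilde{O}(D)$ edges from a $3$-edge-connected graph need not preserve $\poly(D)$ diameter in general; the argument must exploit that the deleted edges form one short cycle, so that every $C$-edge admits an $O(D)$-length replacement by $3$-edge-connectivity, and these replacements can be threaded around $C$ without cross-reusing deleted edges, paying only a multiplicative $|C|$ factor. A secondary difficulty is that substituting virtual edges back may yield non-simple cycles and may inadvertently create cycles sharing several $C$-edges; this is handled by a $3$-coloring argument on a conflict graph over the translated cycles --- analogous in spirit to the coloring of $G_\Sigma$ used in Algorithm~$\TreeCover$ --- ensuring that the retained cycles are simple and intersect $C$ only at the edge they cover.
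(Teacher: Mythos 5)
Your proposal takes a fundamentally different route from the paper, and it has a genuine gap at the very first step that cannot be repaired by the techniques you sketch. The paper's proof is randomized: it runs $O(D^2\log n)$ independent experiments, in experiment $i$ keeping each edge with probability $1-1/(3D)$, applies Algorithm $\OptimalEdgeCycleCover$ to each sampled subgraph $G_i$, unions all the resulting cycle collections into $\mathcal{C}$, and then for each edge $e=(u,v)$ extracts three edge-disjoint $u$--$v$ paths by a max-flow computation inside $G_e$, the union of cycles of $\mathcal{C}$ through $e$. Correctness is a union bound over triples $\langle e,e_1,e_2\rangle$: with high probability some experiment drops $e_1,e_2$ while keeping a short $u$--$v$ replacement path, so the $u$--$v$ cut in $G_e$ is at least $3$ and Menger applies to $G_e$. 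The $\widetilde{O}(D^2)$ congestion comes directly from the number of experiments and the $\widetilde{O}(D^3)$ dilation from $|V(G_e)|$. Crucially, the paper never fixes a single ``first'' cycle $C$ and then tries to avoid it.

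The flaw in your plan is the existence claim for $C^2_e$. You argue that $3$-edge-connectivity gives three edge-disjoint $u$--$v$ paths and that ``$C\setminus\{e\}$ consumes at most one of them,'' concluding a $u$--$v$ path exists in $G\setminus E(C)$. That inference is invalid: Menger guarantees \emph{some} triple of pairwise edge-disjoint $u$--$v$ paths, but it does not let you nominate the particular path $C\setminus\{e\}$ as a member of that triple and deduce that a third path avoids all of $E(C)$. A concrete counterexample is the triangular prism $G$ ($3$-regular, $3$-edge-connected, $6$ vertices, diameter $2$) with $C$ a Hamiltonian $6$-cycle: here $G\setminus E(C)$ is a perfect matching consisting of three isolated edges, and for \emph{every} $e=(u,v)\in C$ the endpoints $u$ and $v$ lie in different components of $G\setminus E(C)$, so no $P_e\subseteq G\setminus E(C)$ exists at all. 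This also collapses the ``residual diameter'' lemma you flag as the main obstacle --- there is nothing to bound, since $u$ and $v$ need not be in the same component --- and it explains why cycles in $G_C$ covering a virtual edge $\widehat{e}$ are \emph{forced} to use other virtual edges, which no conflict-graph coloring can undo (a coloring can prevent a cycle from being \emph{chosen} alongside others, but cannot create a single cycle through $e$ that avoids the rest of $E(C)$ when none exists). To salvage a two-stage deterministic approach you would, at minimum, have to choose the first covering cycles $\mathcal{C}_1$ with hindsight about which second cycles can coexist with them; this is exactly the kind of commitment the paper's sampling approach is designed to sidestep.
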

\addtocounter{theorem}{-1}
\endgroup
\begin{proof}
The construction is based on the sampling approach \cite{dinitz2011fault}. The algorithm consists of $O(D^2 \log n)$ iterations or independent \emph{experiments}. In each experiment $i$, we sample each edge $e \in E(G)$ into $G_i$ with probability $p=(1-1/3D)$. We then apply Algorithm $\OptimalEdgeCycleCover$ (see \Cref{fig:algorithm-opt-edge}) in the graph $G_i$. The output of this algorithm is a cycle collection $\mathcal{C}_i$ that covers each edge $e \in G_i$ by a cycle of length at most $O(\log n \cdot |C_e|)$ where $C_e$ is the shortest cycle in $G_i$ that covers $e$ (if such exists). The final cycle collection is $\mathcal{C}=\bigcup_i \mathcal{C}_i$. 

Next,  for each edge $e=(u,v)$, we define the subgraph $G_e=\{C \in  \mathcal{C} ~\mid~ e \in C\}$ containing all cycles in $\mathcal{C}$ that cover $e$ in all these experiments\footnote{It is in fact sufficient to pick from each $\mathcal{C}_i$, the shortest cycle that covers $e$, for every $i$.}. The $3$-edge disjoint cycles between $u$ and $v$ are obtained by computing max-flow between $u$ and $v$ in $G_e$. 
	
We next prove the correctness of this procedure and begin by showing the w.h.p.\ the $u$-$v$ cut in $G_e$ is at least $3$ for every $e=(u,v)$. This would imply by Menger theorem that the max-flow computation indeed finds $3$ edge disjoint paths between $u$ and $v$. To prove this claim we show that for every pair of two edge $e_1,e_2 \in E$, $G_e \setminus \{e_1,e_2\}$ contains a $u$-$v$ path of length at most $O(D\log n)$  (hence, the min-cut between $u$ and $v$ is at least $3$). 

Fix such a triplet $\langle e,e_1,e_2 \rangle$ and an experiment $i$. 
We will bound the probability of the following event $\mathcal{E}_i$: $G_i$ does not contain $e_1,e_2$ but contains \emph{all} the edges on $P \cup \{e\}$, where $P$ is the $u$-$v$ shortest path in $G\setminus\{e,e_1,e_2\}$. Since all edges are sampled independently into $G_i$ with probability $p$, the probability that $\mathcal{E}_i$ happens  is $p^{|P|+1}\cdot (1-p)^2\leq p^{3D+1} \cdot 1/9D^2$. Hence, w.h.p., there exists an experiment $j$ in which the event $\mathcal{E}_j$ holds. Since $e=(u,v)$ is covered by a path of length $O(D)$ in $G_j$ and $G_j$ does not contain $e_1,e_2$, we get that the cycle $C'_e$ that covers $e=(u,v)$ in $\mathcal{C}_j$ is of length $O(D\log n)$. Overall, the path $C'_e \setminus \{e\}$ is free from $\{e_1,e_2\}$. Since the $u$-$v$ cut in $G_e$ is at least $3$, by Menger theorem we get that $G_e$ contains $3$ edge disjoint $u$-$v$ paths of length at most $|V(G_e)|=\widetilde{O}(D^3)$.

We next turn to consider the congestion. Since the final cover is a union of $O(D^2\log n)$ cycle covers, and  each individual cover $\mathcal{C}_i$ has congestion of $\widetilde{O}(1)$, the total congestion is bounded by $\widetilde{O}(D^2)$. 
%
%
\end{proof}
By the proof of \Cref{lem:twoedgecyclecover}, the construction of two-edge disjoint cycle covers is reduced to $\widetilde{O}(D^2)$ applications of cycle cover constructions. Using the distributed construction of cycle cover of \Cref{sec:distributedconstcovers}, we get an $\widetilde{O}(n\cdot D^2)$ algorithm for constructing  the two-edge disjoint covers.
\section{Resilient Distributed Computation}\label{sec:applications}
Our study of low congestion cycle cover is motivated by applications to distributed computing. In this section, we describe two applications to resilient distributed computation that use the framework of our cycle covers. Both applications provide compilers (or simulation) for distributed algorithms in the standard \congest\ model. In this model, each node can send a message of size $O(\log n)$ to 
each of its neighbors in each rounds (see \Cref{def:com-model} for the full definition). The first compiler transforms any algorithm to be resilient to Byzantine faults. The second one compiles any algorithm to be secure against an eavesdropper.

\subsection{Byzantine Faults}
\paragraph{The Model.}
We consider an adversary that can maliciously modify messages sent over the edges of the graph. The adversary is allowed to do the following.
In each round, he picks a single message $M_e$ passed on the edge $e \in G$ and corrupts it in an arbitrary manner (\ie modifying the sent message, or even completely dropping the message). The recipient of the corrupted message is not notified of the corruption. The adversary is assumed to know the inputs to all the nodes, and the entire history of the communications up to the present. 
It then picks which edge to corrupt adaptively using this information.

The goal is to compile any distributed algorithm $\cA$ into an resilient algorithm $\cA'$. The compiled algorithm $\cA'$ has the exact same output as $\cA$ for all nodes even in the presence of such an adversary. The compiler works round-by-round, and after compiling round $i$ of algorithm $\cA$, all nodes will be able to recover the original messages sent in algorithm $\cA$ in round $i$.

Our compiler assumes a preprocessing phase of the graph, which is \emph{fault-free}, in which the cycle covers are computed and are given in a distributed manner. Alternatively, if the topology of the network $G$ is
known to all nodes as assumed in many previous works, then there is no need for the preprocessing phase. For instance, in  \cite{hoza2016adversarial} it assumed that nodes know the entire graph, which allows the nodes to route messages over a sparser spanning subgraph.

\paragraph{Preprocessing.}
The preprocessing phase computes a $(\dilation_1,\congestion_1)$-cycle covers $\mathcal{C}_1$ and a $(\dilation_2,\congestion_2)$-two-edge disjoint variant $\mathcal{C}_2$, see \Cref{app:threeedgedisjoint}. These covers are known in a distributed manner, where each edge $e=(u,v)$ knows the cycles that cover it, and the other cycles that go through it. 

\paragraph{The Compiler.}
To simplify the presentation, we first describe the compiler under the assumption that the bandwidth on each edge is $\widetilde{O}(\congestion_2)$. We then reduce the bandwidth to $O(\congestion_1+\dilation_1)$, and finally present the final compiler with bandwidth of $O(\log n)$ (i.e., the standard \congest\ model). 
Such transformations are usually straightforward in the fault-free setting (e.g., by simply blowing-up every round by a factor of $\widetilde{O}(\congestion_2)$ rounds). In our setting, it becomes quite tricky. To see this, assume that there are two messages $M_1,M_2$ that are sent in the \emph{same} round in the large bandwidth protocol. In such a case, the adversary can corrupt only one of these messages. When applying a scheduler to reduce the bandwidth, the messages $M_1$ and $M_2$ might get sent in different rounds, allowing the adversary to corrupt them both! 

Throughout, we fix a round $i$ in algorithm $\cA$, and for each edge $e=(u,v)$, let $M_e$ be the message sent on $e$ in round $i$ of algorithm $\cA$. 

\paragraph{Warming up, Compiler (A) with Bandwidth $\widetilde{O}(\congestion_2)$:} The compiler works by exploiting the three edge-disjoints paths between every neighboring pair $u$ and $v$, as provided by the $(\dilation_2,\congestion_2)$ two-edge disjoint cycle cover. 

Specifically, each of the $M_e$ messages, for every $e\in G$, is sent along the three edge-disjoint $u$-$v$ paths, repeatedly for $\ell$ rounds in a pipeline manner, where $\ell=4\dilation_2$. That is, for $\ell$ rounds, the node $u$ repeatedly sends the message $M_e$ to $v$ via the three edge disjoint paths. Each intermediate node on these paths simply forwards the message it has received to its successor on that path.
The endpoint $v$ computes the message $M_e$ by taking the majority of the messages obtained in these $\ell$ rounds. 

We claim that the majority message $M'_e$ recovered by each $v$ is the correct message $M_e$. Let $a_1 \leq a_2$ be the lengths of the two edge-disjoint paths connecting $u$ and $v$ (in addition to the edge $(u,v)$). By definition, $a_1,a_2 \leq \dilation_2$. The endpoint $v$ received in total $3\ell-a_1-a_2$ messages from $u$ during this phase: $\ell$ messages are received from the direct edge $(u,v)$, $\ell-a_1$ messages received on the second $u$-$v$ path, and $\ell-a_2$ messages on the third $u$-$v$ path.

Since the adversary can corrupt at most one message per round, and since the paths are edge-disjoint, in a given round the adversary could corrupt at most one message sent on the three edge-disjoint paths. Hence, in $\ell$ rounds, the adversary could corrupt at most $\ell$ of the received messages in total. Thus, the fraction of uncorrupted messages is at least
$$
\frac{2\ell-a_1-a_2}{3\ell-a_1-a_2} > \frac{6 \dilation_2}{12\dilation_2-3} > 1/2.
$$
Therefore, the strict majority of messages received by $v$ which establishes the correctness of the compiler. Notice that each edge can get at most $\congestion_2$ messages in a given round, since it appears on  $\congestion_2$ many paths. Since the edge bandwidth is $\widetilde{O}(\congestion_2)$, all these messages can go through in a single round. Round $i$ is then complied within $O(\dilation_2)$ rounds.

\paragraph{Intermediate Compiler (B), Bandwidth $O(\dilation_1+\congestion_1)$:}
We will have two phases. In the first phase, all but $O(\dilation_1)$ of the messages will be correctly recovered. The second phase will take care of these remaining messages using the ideas of compiler (A).

%
The first phase contains two subphases, each of  $\dilation_1$ rounds. 
In the first subphase, each node $u$ sends the message $M_e$ along the edge $(u,v)$ in each of these rounds. In addition $u$ sends $M_e$ along the path $C_e\setminus \{e\}$, where $C_e$ is the cycle that covers $e$ in the cycle cover $\mathcal{C}_1$. At the end of these $\dilation_1$ rounds, $v$ should receive $\dilation_1+1$ messages. 
Observe that the adversary cannot modify \emph{all} the messages received by $v$: if he modifies the single message sent on the path $C_e\setminus \{e\}$, then he cannot modify one of the messages sent directly on the edge $(u,v)$. If $v$ received from $u$ a collection of $\dilation_1+1$ identical messages $M'$, then 
it is assured that this is the correct message and $M'=M_e$. In the complementary case, the message $M_e$ is considered to be \emph{suspicious}, and it will be handled in the second phase. 

The key point is that while some messages cannot be recovered, almost all messages will be transmitted successfully. Since there are only $\dilation_1$ rounds, the adversary can corrupt at most $\dilation_1$ messages. Since each edge $e$ appears on at most $\congestion_1$ many cycles, in a given round it can receive $\congestion_1$ many messages, and since the bandwidth of the edge is $O(\dilation_1+\congestion_1)$, all these messages can be sent in a \emph{single} round. 

The second subphase of the first round is devoted for feedback:
the receiving nodes $v$ notify their neighbors $u$ whether they successfully received their message in the first subphase. This is done in a similar manner to the first subphase. I.e., the feedback message from $v$ to $u$ is sent directly on the $e$ edge, in each of the $\dilation_1$ rounds. In addition, it is sent once along the edge-disjoint path $C_e \setminus \{e\}$. 
Only an endpoint $u$ that has received $\dilation_1+1$ positive acknowledgment messages from its neighbor $u$ can be assured that the $M_e$ message has been received successfully. 

Overall, at the end of this phase, we are left with only $2\dilation_1$ 
suspicious messages to be handled.  Importantly, the senders endpoints of these 
suspicious messages are aware of that fact (based on the above feedback 
procedure), and will become active in the second phase that is described next.

The second phase applies compiler (A) but only for a subset of $O(\dilation_1)$ many messages. This allows us to use an improved bandwidth of $O(\dilation_1)$. The number of rounds of the second phase is still bounded by $O(\dilation_2)$.

\paragraph{Final Compiler (C), Bandwidth $O(\log n)$:}
Throughout, we distinguish between two types for sending a message $M_e$: the direct type where $M_e$ is sent along the edge $e=(u,v)$, and the indirect type where $M_e$ is sent along a $u$-$v$ path (which is not $e$). 

We will have two phases as in compiler (B). In the first phase, the messages $M_e$ are sent directly on the $e$ edges in \emph{every} round of the phase. To route the indirect messages along the cycles of the $(\dilation_1,\congestion_1)$ cycle cover, we apply the random delay approach of \cite{leighton1994packet}, which takes $O(\dilation_1 + \congestion_1)$ rounds. 
%
Since the direct messages are sent in each of the rounds of the first phase, no matter how the other messages are sent, the adversary still cannot modify all the $M_e$ messages for a given pair $u,v$. As a result, at the end of the first phase (including the feedback procedure), we are left with $T=O(\dilation_1 + \congestion_1)$ suspicious messages, that would be handled in the second phase.

Implementing the second phase with the random delay approach is too risky.
The reason is that the correctness of the second phase is based on having the 
correct majority for each edge $M_e$. Thus, we have to make sure that each 
message $M_e$ is sent on the \emph{three} edge disjoint paths in the 
\emph{exact} same round. Obtaining this coordination is not so trivial. To see 
why it is crucial, consider a scenario where the scheduler sends the message 
$M_e$ along the path $P_1$ in round $j$, and along the path $P_2$ in round 
$j+1$, where $P_1$ and $P_2$ are the two edge disjoint paths between $u$ and 
$v$ (aside from $e$). In such a case, the adversary can in fact \emph{corrupt} 
both of this messages, and our majority approach will fail. 

To handle this, we use the fact that there are only $T$ 
suspicious 
messages to be sent and we handle them in the second phase one by one. To 
define the order in which these messages are handled, the sender endpoint of 
each suspicious message picks a random ID in $[1,T^2]$. Since $T=\Omega(\log 
n)$, each suspicious message gets a unique ID, with high probability. 
We will now have $T^2$ many subphases, each consists of $\ell=4\dilation_2$ many rounds. 
The $i^{th}$ subphase will take care of the suspicious message whose random ID is $i$. 
In each subphase, we implement compiler (A) with the imporved bandwidth of $O(\log n)$, as each subphase takes care of (at most) one message $M_e$. 

To make sure that intermediate vertices along the edge disjoint paths will know where to route the message, we add the information on the
source and destination $u,v$ to each of the sent messages $M_e$. We note that 
since the adversary might create fake messages by its own, it might be the case 
that vertices receive messages even if they are not the true recipient of the 
suspicious message that is handled in this particular subphase. 
Indeed, the endpoint $v$ does not know the random ID of the message $M_e$ (as 
this was chosen by its neighbor $u$) and thus it does not know when to expect 
the messages from $u$ to arrive. However, by the same majority argument, a 
vertex should receive a majority of messages in at most one subphase -- and 
every vertex that receives a majority of messages (which are identical) can 
indeed deduce that this is the correct message. 

Overall the round complexity of the first phase is $O(\dilation_1+\congestion_1)$ and of the second phase is $O((\dilation_1+\congestion_1)^2\cdot \dilation_2)$. This completes the proof of \Cref{thm:bcomp}.

\subsection{Eavesdropping}
\paragraph{Model.}
In this setting, we consider an adversary that on each round eavesdrops on one of the graph edges chosen in an arbitrary manner. Our goal is to prevent the adversary from \emph{learning} anything, in the information-theoretic sense, on any of the messages sent throughout the protocol.
We show how to use the low congestion cycle-cover to provide a compiler that can take any $r$-round distributed algorithm $\cA$ in the \congest\ model and turn it into another algorithm $\cA'$ that is secure against an eavesdropper, while incurring an overhead of $\widetilde{O}(D)$ in the round complexity. We note that if we settle for computational assumptions then there is a simple solution. One can encrypt the message using a public-key encryption scheme and then send the encrypted message using the public-key of the destination node. Thus, the main goal is to achieve unconditional security.

\paragraph{Preprocessing.}
The preprocessing phase computes a $(\dilation_1,\congestion_1)$-cycle cover. At the end of the phase, each node knows the cycles in participates in.

\paragraph{Our Compiler.}
The compiler works round-by-round. Fix a round $i$ in algorithm $\cA$, such a round is fully specified by the collection of messages sent on the edges at this round. Consider an edge $e=(u,v)$ and let $M=M_e$ be the message sent on $e$ in round this round. The secure algorithm simulates round $i$ within $\dilation_1$ rounds. At the end of these $\dilation_1$ rounds, $v$ will receive the message $M$ while the eavesdropper learns noting on $M$. 

The sender $u$ secret shares the message $M$ to $\dilation_1+1$ random shares $M_1,\ldots,M_{d+1}$ such that $M_1 \oplus \dots \oplus M_{d+1} = M$ (see \Cref{def:secret-sharing}). The first $\dilation_1$ shares of the message, namely $M_1,\ldots, M_{\ell}$, will be sent on the direct $(u,v)$ edge, in each of the rounds of phase $i$, and the $(\dilation_1+1)^{th}$ share is sent via the $u$-$v$ path $C_e \setminus \{e\}$. At the end of these $\dilation_1$ rounds, $v$ receives $d+1$ messages.

We next claim that the adversary did not learn anything (in the information-theoretic sense) about the message $M$, for any edge $(u,v)$. To show this, it suffices to show that the adversary learns at most $\dilation_1$ shares out of the total $\dilation_1+1$ shares of the message $M$. First consider the case that there is a round $j$ (in phase $i$) where the adversary did not eavesdropping on the edge $(u,v)$. In such a case, it doesn't know the $j^{th}$ share of $M$ and hence cannot know $M$. Otherwise, the adversary was eavesdropping the edge $(u,v)$ during the entire phase. This implies that it did not eavesdrop on none of the edges of $C_e \setminus \{e\}$ and hence did not learn the $(\dilation_1+1)^{th}$ share $M_{d+1}$.

The total number of rounds is $O(\dilation_1)$ using a bandwidth of $\congestion_1$. Next, we show how to schedule the messages to get a bandwidth of $O(\log n)$.

\paragraph{Scheduling Messages.}
The scheduling scheme here is similar to the scheduling of the pre-phase described before. We send a direct message on the each $(u,v)$ in all rounds. Then, the additional message send via the cycle is sent using the scheduling scheme of \cite{leighton1994packet}. The adversary still one always miss at least one share, and thus security holds. The total number of rounds as a result of this scheduling scheme is $O(\dilation_1 + \congestion_1)$. This completes the proof of \Cref{thm:ecomp}.
\section{Distributed Construction for Minor-Closed 
Graphs}\label{app:optminor}
The following fact about the sparsity of minor-closed graphs is essential in our algorithm:
\begin{fact}\label{fc:minorsparse}\cite{mader1967homomorphieeigenschaften,thomason1984extremal}
	Every (non-trivial) minor-closed family of graphs has bounded density. In 
	particular, graphs with an $h$-vertex forbidden minor have at most $O(h 
	\sqrt{\log h} 
	\cdot 
	n)$ edges.
\end{fact}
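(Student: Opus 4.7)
The statement has two parts. The first (qualitative) sentence follows from the second (quantitative) via a standard reduction: every non-trivial minor-closed family $\mathcal{F}$ excludes some finite minor $H$ on $h$ vertices, and since $H$ is itself a minor of $K_h$, any $H$-minor-free graph is also $K_h$-minor-free. Hence it is enough to prove: every $n$-vertex graph $G$ with $|E(G)| > c \cdot h \sqrt{\log h} \cdot n$ contains $K_h$ as a minor, for an absolute constant $c$.

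The first step I would take is to reduce to the high minimum-degree case. Any graph of average degree at least $2d$ contains, via the standard peeling procedure (iteratively delete every vertex of current degree below $d$), a non-empty subgraph of minimum degree at least $d$. Since minors are preserved under subgraphs, it suffices to show that any graph $H$ with $\delta(H) \ge c' h \sqrt{\log h}$ contains $K_h$ as a minor. For the minor-construction step, the goal is to exhibit $h$ pairwise vertex-disjoint connected \emph{branch sets} $V_1,\dots,V_h \subseteq V(H)$ with at least one edge between every pair. A natural greedy scheme is to pick a seed vertex for $V_i$ and grow it by BFS until it has absorbed roughly $\sqrt{\log h}$ vertices and acquired adjacency to each previously built $V_1,\dots,V_{i-1}$; the minimum-degree bound guarantees enough ``fresh'' neighbors at each step. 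This rough scheme already reproduces Mader's cruder bound of $O(2^h \cdot n)$ edges, which in fact suffices for the Observation used in \Cref{sec:minorclose} (the only property we actually need is a universal constant $c = c(\mathcal{F})$ with $|E(G)| \le c \cdot n$).

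The hard part, which is the genuine content of Thomason's theorem, is tightening the exponent from Mader's $2^h$ to the optimal $\sqrt{\log h}$. This goes through a probabilistic / double-counting argument: one lower-bounds the number of ``minor-models'' in $H$ by roughly $\exp(\Omega(d^2/h))$ by sampling random rooted connected subsets of prescribed size, and upper-bounds, in a $K_h$-minor-free host, how many such models can avoid producing the clique; balancing the two quantities forces $d = O(h \sqrt{\log h})$. The delicate entropy / counting in this balance is the main technical obstacle, and is the only place the $\sqrt{\log h}$ factor is actually generated. For the distributed application in this paper we use only the black-box consequence that $|E(G)| = O(n)$ for a fixed excluded minor family, so the quantitative refinement may be cited rather than re-derived.
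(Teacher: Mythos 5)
The paper does not prove this fact; it is cited directly from Mader and Thomason with no argument supplied, so there is no in-paper proof to compare against. Your sketch is the standard textbook route (reduce a non-trivial minor-closed family to $K_h$-minor-free, peel to high minimum degree, then grow branch sets greedily for the crude bound and invoke Thomason's probabilistic counting for the sharp $h\sqrt{\log h}$ constant), and you correctly observe that the paper only ever uses the qualitative consequence $|E(G)| \le \density(\mathcal{F})\cdot n$, so the quantitative refinement may be treated as a black box.
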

%
Recall that $\OptCCVal(G)=\max_e |C_e|$ where $C_e$ is the shortest cycle containing 
$e$. We show:
\begingroup
\def\thetheorem{\ref{thm:optimal-minor-closed}}
\begin{theorem}[Rephrased]\label{lem:mccycle}
For every minor-closed graph $G$,  one can construct in 
$\widetilde{O}(\OptCCVal(G))$ rounds, an 
$(\widetilde{O}(\OptCCVal(G)),\widetilde{O}(1))$ cycle cover $\cC$ that covers 
every edge $e \in G$ that lies on some cycle in $G$. If the vertices do not 
know $\OptCCVal(G)$, then the round complexity is $\widetilde{O}(D)$ (which is 
optimal, see \Cref{fig:flower}).
\end{theorem}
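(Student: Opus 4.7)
\medskip
\noindent\textbf{Proof Proposal.}

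The plan is to first construct, distributively in $\widetilde{O}(D)$ rounds, a $(\widetilde{O}(D),\widetilde{O}(1))$-cycle cover for minor-closed $G$ (matching the existential bound of \Cref{thm:cyclecover_upper}), and then lift this to a universally optimal cover via the neighborhood-cover reduction of \Cref{sec:fulloptcovers}. As in \Cref{sec:cyclecoverall}, the construction begins by building a BFS tree $T\subseteq G$ in $O(D)$ rounds and then separately covers the non-tree edges $E_1=E\setminus E(T)$ and the tree edges $E_2=E(T)$. The reduction from tree edges to non-tree edges will mimic Algorithm $\TreeCover$ of \Cref{sec:cover-tree-edges}, and the only truly new distributed ingredient needed is the computation of swap edges, for which we invoke the $\widetilde{O}(D)$-round algorithm of Section~4.1 of~\cite{ghaffari2016near}. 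So the core of the argument is the non-tree cover.

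For the non-tree cover we run $O(\log n)$ phases, each taking $O(D)$ rounds, as outlined in \Cref{sec:minorclose}. In each phase, given the residual uncovered set $E'\subseteq E_1$, we first perform the tree decomposition (S1): starting from the leaves, each node $v$ upcasts to its parent the total ``residual weight'' (sum of $|E'|$-degrees of yet-unassigned vertices in its subtree), and whenever this quantity first exceeds the density threshold $\densitythreshold=16c$ at some ancestor $u$, the ancestor declares a block containing all unassigned descendants and downcasts its block-ID. Because all communication is restricted to tree edges and proceeds bottom-up/top-down, this step finishes in $O(D)$ rounds. Then in step (S2) each vertex tells the root of its block the block-IDs of its $E'$-neighbors; the root identifies (i) $E'$-edges with both endpoints inside its block and covers them by their fundamental cycles in $T$, and (ii) $E'$-edges that are parallel in the contracted super-graph, and for each matched pair $e=(u,v), e'=(u',v')$ between blocks $B_1,B_2$ it outputs the cycle $\pi(u,u',T)\circ e'\circ \pi(v',v,T)\circ e$ of length $O(D)$. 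The non-trivial point is that even though the density of a block can be arbitrarily large, the convergecast from block-leaves to the block-root can be pipelined and thus completed in $O(D)$ rounds per phase.

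The covering/congestion analysis follows the argument in \Cref{sec:minorclose}. Let $E''\subseteq E'$ be the edges left uncovered after a phase. The super-graph obtained by contracting each block to a single vertex has no self-loops and no multi-edges (these are precisely the edges covered in step (S2)); since the density of every block with respect to $E'$ is $\geq \densitythreshold=16c$, the super-graph has at most $n'=|E'|/(8c)$ vertices, and by \Cref{fc:minorsparse} applied to the minor-closed super-graph, $|E''|\leq c\,n' \leq |E'|/8$. After $O(\log n)$ phases all non-tree edges are covered. For congestion, the key observation (stated informally in \Cref{sec:minorclose}) is that for any tree edge $e=(x,y)$ with $B$ the block containing both endpoints, the ``lower'' part $B_y=B\cap T(y)$ satisfies $\deg(B_y,E')\leq \densitythreshold$; this lets us charge the cycles passing through $e$ in a given phase to the $O(1)$ $E'$-edges incident to $B_y$, yielding $\widetilde{O}(1)$ congestion overall. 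Tree edges are then covered by the distributed adaptation of $\TreeCover$, which reduces their covering to a further non-tree cover instance via swap edges.

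Finally, to obtain the universally optimal bound, I would invoke the black-box reduction of \Cref{sec:fulloptcovers}: distributively compute an $\OptCCVal(G)$-neighborhood cover (using Awerbuch--Peleg style constructions in $\widetilde{O}(\OptCCVal(G))$ rounds), run the above algorithm in parallel on each cluster $G[S_i]$ whose strong diameter is $\widetilde{O}(\OptCCVal(G))$, and take the union of the resulting covers; niceness (\Cref{obs:nice}) of the base algorithm guarantees every edge lying on some cycle of $G$ is covered, lengths are $\widetilde{O}(\OptCCVal(G))$, and congestion blows up by only an $O(\log n)$ factor since each vertex lies in $\widetilde{O}(1)$ clusters. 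If $\OptCCVal(G)$ is not known, one runs with geometrically increasing guesses $2^i$ up to $D$, which only affects the round complexity and cover parameters by $\polylog$ factors. The main obstacle I anticipate is handling the fact that our blocks may have arbitrarily large density, both for the $O(D)$-round time bound of step (S2) (which requires careful pipelined convergecast of the $E'$-adjacency list of each block to its root) and for the congestion argument; the key observation about $\deg(B_y,E')$ is what ultimately makes both go through.
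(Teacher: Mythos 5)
Your proposal follows essentially the same approach as the paper's: BFS tree, vertex-disjoint subtree blocks built bottom-up with threshold $\densitythreshold=16\density(\mathcal{F})$, coverage of intra-block and same-block-pair edges, the key observation $\deg(B_y,E')\leq\densitythreshold$ to bound both pipelined convergecast time and congestion, sparsity of the contracted super-graph via \Cref{fc:minorsparse} for the $O(\log n)$-phase progress argument, a distributed adaptation of $\TreeCover$ with swap edges from \cite{ghaffari2016near}, and the neighborhood-cover reduction for universal optimality. One tiny slip: because the paper orients the $E'$ edges and each block root handles only its outgoing edges, the contracted super-graph can have multiplicity up to $2$ rather than being simple, which weakens the per-phase decrease to $|E'_{i+1}|\leq|E'_i|/2$ instead of $|E'_i|/8$; this does not change the $O(\log n)$ phase count.
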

\addtocounter{theorem}{-1}
\endgroup

Due to \Cref{lem:distmodcover}, it is sufficient to present an 
$\widetilde{O}(D)$-round algorithm that constructs an 
$(\widetilde{O}(D),\widetilde{O}(1))$ cycle cover $\cC'$ where $D$ is the  
diameter of $G$. Then by applying this algorithm in each cluster of a 
$\OptCCVal(G)$-neighborhood cover, we get the desired nearly optimal 
covers. 

For a minor-closed graph $G$ in family $\mathcal{F}$, let 
$\density(\mathcal{F})$ be the upper bound density of every $G' \in 
\mathcal{F}$. That is, every $n$-vertex graph $G' \in \mathcal{F}$ has 
$\density(\mathcal{F})\cdot n$ edges, where $\density(\mathcal{F})$ is a 
constant that depends only on the family $\mathcal{F}$ (and not $n$). In the 
description of the algorithm, we assume that $\density(\mathcal{F})$ is known 
to all nodes.\footnote{We observe that algorithm can be easily modified to work 
even without $\density(\mathcal{F})$ being known.}
The algorithm starts by constructing a BFS tree $T \subseteq G$ in $O(D)$ 
rounds. Let $E'$ be the subset of non-tree edges. The main procedure is 
$\NonTreeMinorClosed$ that constructs a cover $\mathcal{C}'$ for all the 
non-tree edges $E'$.
We first describe how this cover can be constructed in $\widetilde{O}(D)$ rounds. 
\subsection{Covering Non-Tree Edges}
The algorithm has $\ell=O(\log n)$ phases, in each phase $i$, it is given a subset $E'_i$ of non-tree edges to be covered and constructs a cycle collection $\mathcal{C}_i$ that covers constant fraction of the edges in $E'_i$. At the end, the collection $\mathcal{C}'=\bigcup_{i}\mathcal{C}_i$ covers all non-tree edges.

We now focus on phase $i$ and explain how to construct $\mathcal{C}_i$. 
Similar to Alg.\ $\NonTreeCover$, our approach is based on partitioning the 
vertices into blocks -- a set of nodes that have few incident edges in $E'_i$ . 
Unlike Alg.\ $\NonTreeCover$, here each block is a subtree of $T$ and every two 
blocks are \emph{vertex-disjoint}. This plays an essential role as it allows us 
to contract each block into a super-node and get a contracted graph 
$G'_i$ that is still minor-closed and hence must be sparse. The fact that the 
contracted graph is sparse would imply that most the of non-tree edges connect 
many vertices between the same block or between two blocks. This allows 
us to cover these edges efficiently. 
We now describe phase $i$ of Alg.\ $\NonTreeMinorClosed$ in details. 

\paragraph{Step (S1): Tree Decomposition into Vertex Disjoint Subtree Blocks.}
A block $B$ is a subset of vertices such that $T(B)$ is a connected subtree of $T$. For a subset of edges $E'$, the density of the block, $\deg(B,E')$, is the number of edges in $E'$ that have an endpoint in the set $B$. 
Define 
\begin{align}\label{eq:density}
\densitythreshold=16 \cdot \density(\mathcal{F})~.
\end{align}
The algorithm works from the bottom of the tree up the root, in $O(\depth(T))$ rounds. 
Let $W\colon V \to V$ be a weight function where $w(v)=\deg(v,E'_i)$. 
In round $i\geq 1$, each vertex $v$ in layer $\depth(T)-i+1$ sends to its 
parent the \emph{residual weight} of its subtree, namely, the total weight of 
all the vertices in its subtree $T(v)$, that are not yet assigned to blocks. 
Every vertex $v$ that receives the residual weight from its children does the 
following:
Let $W'(v)$ be the sum of the total residual weight plus its own weight. If $W'(v)\geq \densitythreshold$, then $v$ declares a block and down-cast the its ID (i.e., that serves as the block-ID), to all relevant descendants in its subtree. Otherwise, it passes $W'(v)$ to its parent. 
Let $\mathcal{B}_i$ the output block decomposition.

\paragraph{Step (S2): Covering Half of the Edges.}
For ease of description, we orient every non-tree edge $e=(u,v)\in E'_i$ from its higher-ID endpoint to the lower-ID endpoint. Every vertex will be responsible for its outgoing edges in $E'_i$. 
At that point, every vertex $v$ knows its block-ID. 

\paragraph{Non-tree edges inside the same block:}
All nodes exchange their block-ID with their neighbors. Nodes that are incident to non-tree edges $e$ with both endpoints at the same block, mark the fundamental cycle of these edges. That is, the cycle that covers each such edge $e$ is given by $C(e)=\pi(u,v,T)\circ e$. All these cycles are added to $\mathcal{C}_i$. 

\paragraph{Non-tree edges between different blocks:}
Each vertex $v$ in block $B$, sends to the root of its block, all its outgoing edges along with the block-IDs of each of its outgoing $E'_i$-neighbors. In the analysis, we will show that despite the fact that the density of the block might be large, this step can be done in $O(\depth(T))$ rounds.

The root of each block $B$ receives all the outgoing edges of its block vertices and the block IDs of the other endpoints. It then partitions the edges into $|\mathcal{B}_i|$ subsets $E_i(B,B')$ for
every block $B'  \in \mathcal{B}_i \setminus \{B\}$.
Fix a pair $B \neq B'$ and assume that $E_i(B,B')$ has even-size, otherwise, omit at most one edge to make it even. To cover the edges in $E_i(B,B')$, the leader arbitrarily matches these edges into pairs $\langle e,e'\rangle$ and  
notifies the matching of all the edges to the vertices in its block. For every matched pair $\langle e,e'\rangle \in E_i(B,B')$, we have $e=(u,v)$, $e'=(u',v')$ such that $u,u' \in B$ and $v,v' \in B'$. Letting $u$ be of higher ID than $u$, the vertex $u$ is responsible for that pair. First, $u$ sends a message to its endpoint $v \in B'$ and notifies it regarding its pairing with the edge $(u',v')$. The other endpoint $v'$ notifies it to the leader of its block $B'$. The cycle $C(e,e')$ covering these edges is defined by:
$$C(e,e')=\pi(u,u',T(B))\circ e' \circ \pi(v',v) \circ e~.$$

The final cycle collection $\mathcal{C}_i$ contains the set of all $C(e,e')$ cycles of each matched pair $\langle e,e'\rangle \in E_i(B,B')$ for every $B,B' \in \mathcal{B}_i$. All the matched edges are removed from $E'_i$. Note that to make the $E_i(B,B')$ sets even, the algorithm omitted at most one edge from each such sets and all these edges are precisely those that remained to be handled in the next phase, namely, the edges $E'_{i+1}$. This completes the description of phase $i$.  See \Cref{fig:minormatching} for an illustration. 
 

\begin{figure}[h!]
\begin{center}
\includegraphics[scale=0.4]{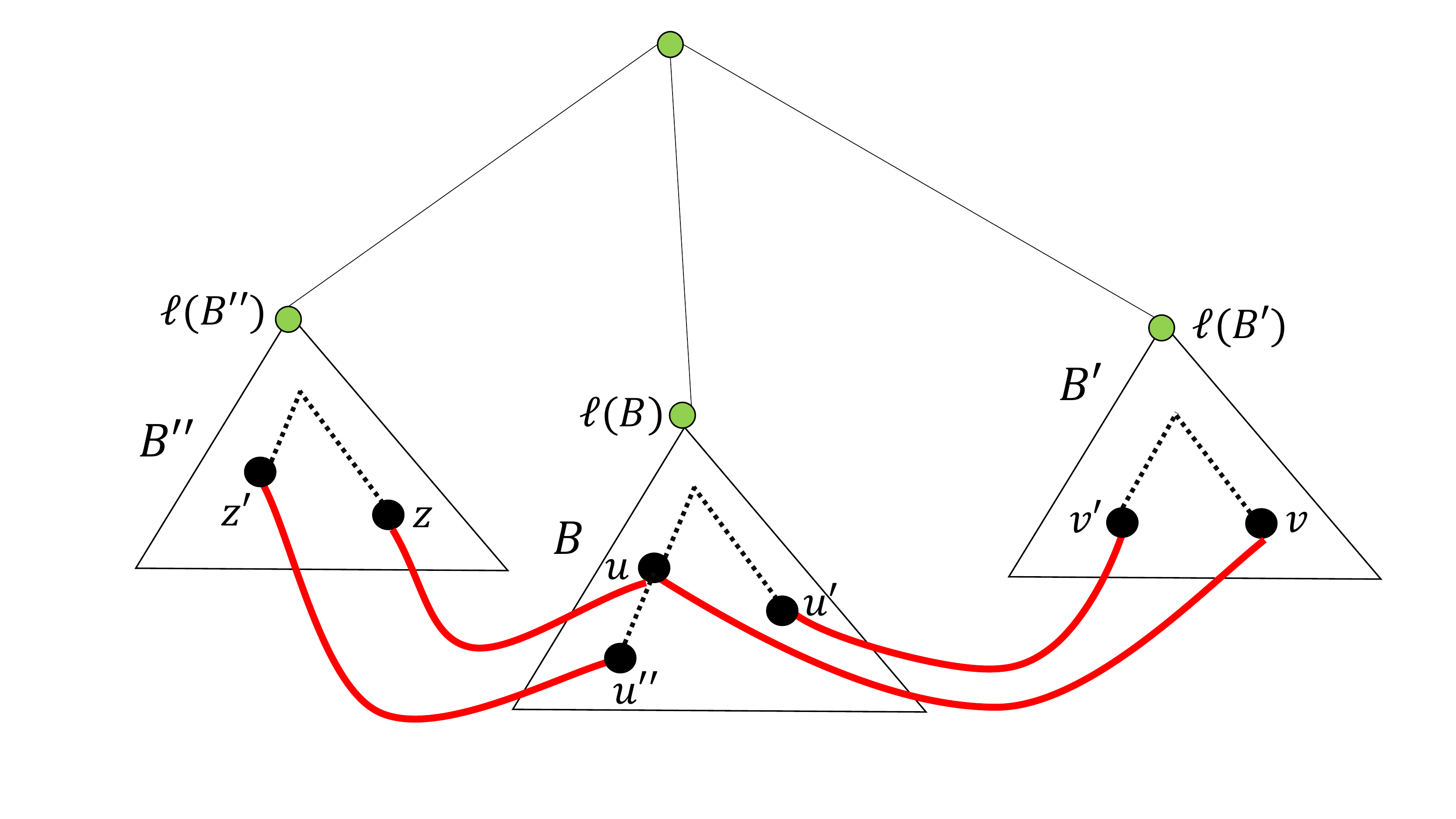}
\caption{Illustration of phase $i$ of Alg. $\NonTreeMinorClosed$. Red thick edges correspond to non-tree edges in $E'_i$. Dashed edges are internal tree paths. The edges $e=(u,v)$ and $e'=(u',v')$ are matched and their corresponding cycle $C(e,e')$ uses the tree paths in each block. Also, the edge $(u,z)$ is matched with the edge $(u'',z')$ where their cycle uses a tree path in the block $B$ as well. Overall, since the density of each block is bounded by a constant, the total congestion is also $O(1)$.
\label{fig:minormatching}}
\end{center}
\end{figure}

We proceed by analyzing Alg. $\NonTreeMinorClosed$.

\paragraph{Round Complexity and Message Complexity.}
Note that unlike the cycle cover algorithm of \cite{parteryogev17}, the blocks of Alg. $\NonTreeMinorClosed$ might have arbitrary large density. The key claim for bounding the rounding complexity and the congestion of the cycles is the following:
\begin{claim}\label{obs:keyminorclosed}
Let $e=(x,y)$ be a tree edge (where $x$ is closer to the root) and let $B$ be the block of $x$ and $y$.
Letting $B_y=B \cap T(y)$, it holds that $\deg(B_y,E'_i)\leq \densitythreshold$. 
\end{claim}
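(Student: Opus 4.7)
My plan is to derive the bound directly from the local stopping criterion used when $y$ was processed. The argument has three ingredients: (i) identifying who the root of $B$ is, (ii) observing that $y$ must have passed its residual weight up rather than declared a block, and (iii) showing that the residual weight $y$ computed equals $\deg(B_y, E'_i)$ exactly.

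First I would argue that $y$ itself did not declare a block. Since $B$ is connected in $T$ and both $x$ and $y$ lie in $B$ with $x = p(y)$, the root of $B$ (the vertex that declared it) must be $x$ or a proper ancestor of $x$ in $T$, never $y$. If $y$ had declared a block during its processing step, that block would be rooted at $y$ and would be disjoint from $B$, contradicting $y \in B$. Therefore $y$ passed a residual weight up to $x$, which by the algorithm's rule means $W'(y) < \densitythreshold$.

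Next I would show by a short induction over $T(y)$ that $W'(y) = \sum_{v \in B_y} W(v) = \deg(B_y, E'_i)$. The inductive claim is: for any descendant $u$ of $y$ that does \emph{not} declare a block, the residual weight $r(u)$ passed from $u$ to its parent equals $\sum_{v \in T(u) \cap B} W(v)$. The base case is a leaf $u$ that doesn't declare, where $r(u) = W(u)$ and $T(u) \cap B = \{u\}$. For the inductive step, split the children of $u$ into those that declared blocks (they contribute $0$ to $r(u)$ and no vertex of their subtree belongs to $B$) and those that did not (they contribute $\sum_{v \in T(c) \cap B} W(v)$ by induction, and the entire intersection $T(u) \cap B$ is partitioned between $\{u\}$ and these contributions). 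Applying this to the root $y$ of $T(y)$ gives $W'(y) = W(y) + \sum_{c} r(c) = \sum_{v \in B_y} W(v)$.

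Combining the two observations, $\deg(B_y, E'_i) = W'(y) < \densitythreshold$, which is exactly the desired inequality. I do not expect any real obstacle here; the only point that deserves care is the bookkeeping in the induction, specifically making sure that a vertex in $T(y) \setminus B$ is precisely one that got absorbed by a strictly earlier (deeper) block declaration and hence is excluded from every residual weight that reaches $y$.
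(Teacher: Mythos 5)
Your proof takes the same approach as the paper's one-line proof (which just asserts that if $\deg(B_y,E'_i)>\densitythreshold$ then $y$ itself would have declared $B_y$ a block); you supply the two missing pieces---that $y$ didn't declare, and that the residual weight $W'(y)$ equals $\deg(B_y,E'_i)$---which is exactly the right thing to fill in.

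One small inaccuracy worth fixing: the inductive claim ``for any descendant $u$ of $y$ that does not declare a block, $r(u)=\sum_{v\in T(u)\cap B}W(v)$'' is too broad as stated. If some vertex strictly between $u$ and $y$ declared a block, then $u\notin B$ and $T(u)\cap B=\emptyset$, but $r(u)$ can still be positive, so the equation fails. The correct inductive claim needs the extra hypothesis that no vertex on the path from $u$ up to $y$ (inclusive) declares a block; equivalently, the claim should be stated only for the vertices actually visited by the top-down recursion through non-declarers starting at $y$. With that restriction every $u$ considered satisfies $u\in B$, and the parenthetical ``the intersection is partitioned between $\{u\}$ and the contributions of the children'' becomes correct. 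Your closing remark shows you already saw this subtlety; just make the claim match. Also note that your argument actually gives the strict inequality $\deg(B_y,E'_i)<\densitythreshold$, which of course implies the stated $\leq$.
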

\begin{proof}
By the construction of the blocks, $\deg(B_y,E'_i)\leq \densitythreshold$ as otherwise $y$ would declare $B_y$ as a block, in contradiction that $x$ and $y$ are in the same block. 
\end{proof}

\begin{claim}\label{cl:minorcloseround}
Algorithm  $\NonTreeMinorClosed$ has round complexity of $O(\depth(T))$. 
\end{claim}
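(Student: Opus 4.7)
The plan is to bound the work done in a single phase by $O(\depth(T))$ and then multiply by the number $\ell = O(\log n)$ of phases. For Step (S1), the block decomposition is a straightforward bottom-up sweep: in round $i$, every vertex in layer $\depth(T)-i+1$ forwards a single $O(\log n)$-bit residual weight to its parent, and whenever a vertex crosses the threshold $\densitythreshold$ it downcasts its block-ID to its descendants. Both the upcast of residual weights and the downcast of block-IDs clearly finish in $O(\depth(T))$ rounds, and exchanging block-IDs with neighbors in order to detect intra-block non-tree edges takes a single round.

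The main obstacle is Step (S2): the leader of each block $B$ must collect from every vertex $v \in B$ all of $v$'s outgoing $E'_i$-edges together with the block-IDs of their remote endpoints, compute a matching, and then downcast a matching decision back to each vertex. A priori, a block may contain arbitrarily many non-tree edges, so a naive upcast along the intra-block tree seems dangerous. I would resolve this by appealing to \Cref{obs:keyminorclosed}: for every internal tree edge $e=(x,y)$ of a block $B$ (with $y$ the deeper endpoint), the sub-block $B_y = B \cap T(y)$ is incident to at most $\densitythreshold = O(1)$ edges of $E'_i$. Hence the total amount of information that ever needs to cross the edge $e$ during the upcast is $O(\densitythreshold \cdot \log n) = O(\log n)$ bits, so by standard pipelining the entire intra-block upcast completes in $O(\depth(T))$ rounds. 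A symmetric argument bounds the downcast of matching decisions from the root of $B$ to its vertices: each intra-block tree edge $e$ only needs to carry decisions concerning the $O(\densitythreshold)$ outgoing edges rooted at $B_y$.

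Once the matching is known locally within $B$, I would handle the cross-block notification as follows. For every matched pair $\langle e,e'\rangle$ with $e=(u,v),\ e'=(u',v')$ and $u,u' \in B,\ v,v' \in B'$, the vertex $u$ tells $v$ the identity of its partner edge in one round; $v$ then relays this to the leader of its own block $B'$ via exactly the same pipelined intra-block upcast, which again costs $O(\depth(T))$ rounds by \Cref{obs:keyminorclosed}. Building the fundamental cycles that cover intra-block non-tree edges is entirely local. Summing all these subtasks yields $O(\depth(T))$ rounds per phase, and since there are $\ell = O(\log n)$ phases the overall round complexity is $\widetilde{O}(\depth(T))$, as claimed.
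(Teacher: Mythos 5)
Your proof is correct and follows essentially the same approach as the paper: both the paper and your argument reduce the per-phase cost to $O(\depth(T))$ by invoking \Cref{obs:keyminorclosed} to bound the number of messages crossing any intra-block tree edge $e=(x,y)$ by $O(\densitythreshold)=O(1)$, and then handle the upcast, downcast, and cross-block relay with that same per-edge congestion bound. The only (minor) difference is that you carefully note the final $\widetilde{O}(\depth(T))$ bound coming from the $O(\log n)$ phases, whereas the paper's proof implicitly establishes the $O(\depth(T))$ bound for a single phase.
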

\begin{proof}
The decomposition clearly takes $O(\depth(T))$ rounds so we consider the second phase.
The algorithm starts by letting each vertex $v$ send to its root the 
information all on the edges $\deg(v,E'_i)$. We now show that this can be done 
in $O(\depth(T))$ rounds by observing that despite the fact that the total 
density of a block might be \emph{large}, the total number of messages that 
passes through a given tree edge is \emph{small}. 
Consider an edge $e=(x,y) \in T$ in block $B$, we will prove that the total number of messages that go through that edge is bounded by $O(\densitythreshold)$. 
Since all the messages that go through the edge $e$ towards the root of $B$ originated from vertices in $B_y=V(T(y))\cap B$, the above claim follows by \Cref{obs:keyminorclosed}.

All non-tree edges $e=(u,v)$ that have both endpoints in $B$ mark their fundamental cycle in $T$. By the definition of the block, this fundamental cycle is in the tree of $B$. The marking is done by sending the ID of the non-tree edge to all the edges on the fundamental cycle. This is done by letting one endpoint $u$ send the edge ID of $e=(u,v)$ to the root and back to $v$. By the same argument as above, each edge $e\in T$ receives $O(\densitythreshold)$ such messages and hence this can be done is $O(\depth(T))$ rounds.

Next, the root of each block $B$ partitions the $E'_i$ edges of its block members into $|\mathcal{B}_i|$ subsets $E_i(B,B')$ and each edge $e=(u,v)$ receives the ID of a matched edge $e'=(u',v')$ such that both $e$ and $e'$ connect vertices in the same pair of blocks. By applying the same argument only in the reverse direction, we again get that only  $O(\densitythreshold)$ messages pass on each edge\footnote{That is, an edge $e''=(x,y)\in T$ only sends information back to vertices in $B \cap V(T(y))$.}. Finally, marking the edges on all cycles $C(e,e')$ is done in $O(\depth(T))$ rounds as well, using same arguments.
\end{proof}

Since the construction of private trees employs Alg.\ $\NonTreeMinorClosed$ on 
many subgraphs of $G$ simultaneously, it is also important to bound the number 
of messages that go through a single edge $e \in G$ throughout the entire 
execution of Alg.\ $\NonTreeMinorClosed$. The next lemma essentially enables us 
to employ Alg.\ $\NonTreeMinorClosed$ on many subgraphs at once, at almost the 
round complexity as that of a single application, e.g., by 
using the random delay approach of \cite{Ghaffari15}.   

\begin{claim}\label{cl:congestionminorclose}
Alg.\ $\NonTreeMinorClosed$ passes $\widetilde{O}(1)$ messages on each edge $e 
\in G$. 
\end{claim}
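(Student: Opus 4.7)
The plan is to bound the message load on each edge $e\in G$ separately for the $O(\log n)$ phases of Alg.\ $\NonTreeMinorClosed$, showing that every phase contributes only $O(\densitythreshold)=O(1)$ messages to $e$; summing across phases yields the $\widetilde{O}(1)$ bound. The argument reuses the key structural property already invoked in \Cref{cl:minorcloseround}, namely \Cref{obs:keyminorclosed}, which asserts that for any tree edge $e=(x,y)\in T$ with $x$ the parent of $y$ and both in a block $B$, one has $\deg(B_y,E'_i)\leq \densitythreshold$ where $B_y=B\cap V(T(y))$.

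First I would fix a tree edge $e=(x,y)\in T$ and a phase $i$, and decompose the traffic crossing $e$ during this phase into three categories mirroring Steps (S1) and (S2) of the algorithm: (a) the block decomposition, in which $y$ sends at most one residual-weight message upward to $x$ and at most one block-ID message travels back down, contributing $O(1)$; (b) the upcast of outgoing $E'_i$-edge information from the block members of $B_y$ to the block root together with the symmetric downcast of matching assignments; and (c) the marking of fundamental cycles for intra-block non-tree edges and of the cycles $C(e,e')$ for matched cross-block pairs. Categories (b) and (c) are precisely the steps whose load is shown in the proof of \Cref{cl:minorcloseround} to be bounded by $\deg(B_y,E'_i)\leq \densitythreshold$ in each direction, so together they contribute only $O(\densitythreshold)=O(1)$ messages on $e$ per phase.

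Next I would handle a non-tree edge $e=(u,v)\in E\setminus T$. In any single phase $i$, such an edge is either marked as the chord of a fundamental cycle when $u,v$ lie in the same block of $\mathcal{B}_i$, or belongs to at most one matched pair $\langle e,e'\rangle$ inside some $E_i(B,B')$; in both sub-cases only a constant number of messages traverse $e$ itself (a notification between the matched endpoints in the cross-block case, plus the constant-size marking information). Hence each phase contributes $O(1)$ messages to any non-tree edge, and after $O(\log n)$ phases the accumulated traffic is $\widetilde{O}(1)$, as claimed.

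The main obstacle is the downward direction in category (b): one must justify that disseminating the matching assignments from each block root back to its block members along $e$ obeys the same $\densitythreshold$ bound as the upward gathering. The plan is to pair each downward message with the specific outgoing-edge report that triggered it, so the downward count on $e$ is pointwise dominated by the upward count, which is already controlled via \Cref{obs:keyminorclosed}. Once this pairing is in place, the per-phase bound follows uniformly for tree and non-tree edges, and summation over the $O(\log n)$ phases completes the proof.
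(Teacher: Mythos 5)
Your proposal is correct and follows essentially the same route as the paper: bound the per-phase traffic by $O(\densitythreshold)=O(1)$ via \Cref{obs:keyminorclosed} and sum over the $O(\log n)$ phases. The paper's proof delegates the per-phase edge-load analysis directly to the proof of \Cref{cl:minorcloseround} (including the downward dissemination handled ``in the reverse direction''), whereas you spell out the tree/non-tree split and the pairing of downcast messages with the upcast reports that triggered them, but the underlying argument is the same.
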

\begin{proof}
Alg.\ $\NonTreeMinorClosed$ consists of $O(\log n)$ phases. We show that in 
each phase $i$, a total of $O(\log n)$ messages pass on each edge $e$. 
The first step of the phase is to decompose the tree $T$ into blocks. By 
working from leafs towards the root, on each tree edge $(u,p(u))$, $u$ sends to 
$p(u)$ the residual weight in its subtree and $p(u)$ sends to $u$ a message 
containing its block ID. Hence, overall, on each tree edge, the algorithm 
passes $O(1)$ messages. Then, each vertex sends to its neighbors its block ID 
and non-tree edges within the same block are covered by taking their 
fundamental cycle (inside the block). Every vertex $v$ sends to its block 
leader the identities of all its edges in $E'_i$ including the block-ID of the 
other endpoint. This information is passed on the subtree of each block. 
By the proof of \Cref{cl:minorcloseround}, one each edge, there are total of $O(\densitythreshold)$ messages (this bounds holds overall the $O(\depth(T))$ rounds of the algorithm). 
\end{proof}

\paragraph{Cover Analysis.} 
The analysis of Algorithm $\NonTreeMinorClosed$ exploits two properties of 
minor-closed graphs: (i) being closed under edge contraction and (ii) being 
sparse (see Fact \ref{fc:minorsparse}). 

Let $E'_{i+1}$ be set all the edges that are not covered in phase $i$. We will show that $|E'_{i+1}|\leq |E'_i|/2$. That is, we will show that at least half of the edges in $E'_i$ are covered by the cycles of $\mathcal{C}_i$. 
Consider the subgraph $G'_i=T \cup E'_{i+1}$, clearly, $G'_i \subseteq G$ is a minor-closed graph as well. We now compute a new graph $\widetilde{G}_i$ by contracting all the tree edges, $E(T)$, in $G'_i$. 
Note that this contraction is only of the sake of the analysis, and it is not part of the algorithm. 
Since the blocks correspond to vertex-disjoint trees, the resulting contracted graph has $|\mathcal{B}_i|$ nodes and all its edges correspond to the non-tree edges $E'_{i+1}$. We slightly abuse notation by denoting the super-node of block $B$ in $\widetilde{G}_i$ by $B$. 
By the explanation above, each edge in $\widetilde{G}_i$ is of multiplicity at most $2$. This is because for every pair $B,B'$, at most one edge in $E_i(B,B')$ is added to $E_{i+1}$ and also at most one edge of $E_i(B',B)$ is added to $E_{i+1}$. 
Let $\widetilde{G}'_i$ be the simple graph analogue of the contracted graph $\widetilde{G}_i$, i.e., removing multiplicities of edges. Since $\widetilde{G}_i$ is minor-close with $|\mathcal{B}_i|$ nodes, it has at most $\density(\mathcal{F}) \cdot |\mathcal{B}_i|$ edges. 
Since the weight of each block is at least $\densitythreshold$ and blocks are vertex disjoint, we have that $|\mathcal{B}_i|\leq 2E'/\densitythreshold$.
We have:
\begin{eqnarray*}
|E'_{i+1}|=|E(\widetilde{G}_i)|\leq 2\cdot |E(\widetilde{G}'_i)|
\leq 2\density(\mathcal{F})\cdot |\mathcal{B}_i| \\ \leq 
8\density(\mathcal{F})\cdot |E'_i|/\densitythreshold \leq |E'_i|/2~,
\end{eqnarray*}
where the last inequality follows by Equation \ref{eq:density}.

\paragraph{Length and Congestion Analysis.}
Clearly, all cycles of the form $C(e,e')$ or $C(e)$ have length $O(\depth(T))$. It is also easy to see, that by definition, each cycle is used to cover at most two non-tree edges. 
We now claim that each edge appears on $O(1)$ cycles of $\mathcal{C}_i$. 

Since each non-tree edge appears on at most two cycles, it is sufficient to bound the congestion on the tree edges. Let $e''=(x,y)$ be a tree edge in $T$ (where $x$ is closer to the root) and let $\mathcal{C}(e'')$ be the collection of all the cycles that go through $e''$.  Let $B$ be the unique block to which $x,y$ belong. By construction, the edge $e''$ appears only on cycles $C(e)$ or $C(e,e')$ where the edge $e$ is incident to a vertex in $B_y=T(y)\cap B$. By \Cref{obs:keyminorclosed}, $\deg(B_y,E'_i)\leq \densitythreshold$, and hence $e''$ appears on $O(\densitythreshold)$ cycles are required. 

\subsection{Covering Tree Edges}
The distributed covering of tree edges is given by Alg.\ $\DistTreeCover$. In the high level, this algorithm 
reduces the problem of 
covering tree edge to the problem of covering \emph{non-tree} edges at the cost 
of $O(\depth(T))$ rounds. Hence, by applying the same reduction to the non-tree 
setting and using Alg.\ $\NonTreeMinorClosed$, we will get an 
$(\widetilde{O}(D),\widetilde{O}(1))$ cycle cover $\cC''$ for the tree edges of 
$T$. Also here, each cycle $C \in \cC''$ might be used to 
cover $O(D)$ tree edges. 

\paragraph{Description of Algorithm $\DistTreeCover$}
Algorithm $\DistTreeCover$ essentially mimics the centralized construction of 
\Cref{sec:cyclecoverall}. Let $p(v)$ be the parent of $v$ in the BFS tree $T$. A non-tree edge $e'=(u',v')$ is a swap edge for the tree edge $e=(p(v),v)$ if $e \in \pi(u',v')$, let $s(v)=v'$ by the endpoint of $e'$ that is not in $T(v)$. 
By using the algorithm of Section 4.1 in \cite{ghaffari2016near}, we can make 
every node $v$ know $s(v)$ in $O(D)$ rounds.

A key part in the algorithm of \Cref{sec:cover-tree-edges} is the definition of the 
path $P_{e}=\pi(v,u')\circ (u',s(v))$ for every tree edge $e=(p(v),v)$. By 
computing swap edges using Section 4.1 in \cite{ghaffari2016near} all the edges 
of each $P_{e}$ get marked. 

\paragraph{Computing the set $I(T) \subseteq E(T)$.}
We next describe how to compute a maximal collection of tree edges $I=\{e_i\}$ 
whose paths $P_{e_i}$ are edge disjoint and in addition for each edge $e_j \in 
E(T)\setminus I$ there exists an edge $e_i \in T'$ such that $e_j \in P_{e_i}$. 
To achieve this, we start working on the root towards the leaf. In every round 
$i\in \{1,\ldots,D\}$, we consider only \emph{active} edges in layer $i$ in 
$T$.  Initially, all edges are active. An edge becomes inactive in a given 
round if it receives an inactivation message in any previous round. Each active 
edge in layer $i$, say $e_j$, initiates an inactivation message on its path 
$P_{e_j}$. An inactivation message of an edge $e_j$ propagates on the path 
$P_{e_j}$ round by round, making all the corresponding edges on it to become 
inactive.

Note that the paths $P_{e_{j}}$ and $P_{e_{j'}}$ for two edges $e_j$ 
and $e_{j'}$ in the same layer of the BFS tree, are edge disjoint and hence 
inactivation messages from different edges on the same layer do not interfere 
each other.  We get that an edge in layer $i$ active in round $i$ only if it 
did not receive any prior inactivation message from any of its BFS ancestors. 
In addition, any edge that receives an inactivation message necessarily appears 
on a path of an active edge. It is easy to see that within $D$ rounds, all 
active edges $I$ on $T$ satisfy the desired properties (\ie their $P_{e_{i}}$ 
paths cover the remaining $T$ edges and these paths are edge disjoint).
 
\paragraph{Distributed Implementation of Algorithm $\TreeCover$.}
First, we mark all the edges on the $P_e$ paths for every $e \in I(T)$. As 
every node $v$ with $e=(p(v),v)$ know its swap edge, it can send information 
along $P_e$ and mark the edges on the path. Since each edge appears on the most 
two $P_e$ paths, this can be done simultaneously for all $e \in I(T)$. 

From this point on we follow the steps of Algorithm $\TreeCover$. The 
partitioning of \Cref{clm:tree-partition} can be done in $O(D)$ rounds as it 
only required nodes to count the number of nodes in their subtree. We define 
the ID of each tree $T'_1,T'_2$ to be the maximum edge ID in the tree (as the 
trees are edge disjoint, this is indeed an identifier for the tree). 
By passing information on the $P_e$ paths, each node $v$ can learn the tree ID of its swap endpoint $s(v)$. This allows to partition the edges of $T'$ into $E'_{x,y}$ for $x,y \in \{1,2\}$. 
Consider now the $i^{th}$ phase in the computation of cycle cover $\cC_{1,2}$ for the edges $E'_{1,2}$.

Applying Algorithm $\EdgeDisjointPath$ can be done in $O(D)$ round. At the end, 
each node $v_j$ knows its matched pair $v'_j$ and the edges on the tree path 
$\pi(v_j,v'_j,T'_1)$ are marked. 
Let $\Sigma$ be the matched pairs. We now the virtual conflict graph 
$G_{\Sigma}$. Each pair 
$\langle v_j,v'_j\rangle \in \Sigma$ is simulated by the node of higher ID, say, $v_j$. We say that $v_j$ is the \emph{leader} of the pair $\langle v_j,v'_j\rangle \in \Sigma$.  
Next, each node $v$ that got matched with $v'$ activates the edges on its path $P_e \cap E(T'_1)$ for $e=(p(v),v)$. Since the $\pi$ edges of the matched pairs are marked as well, every edge $e' \in \pi(v_k,v'_k,T'_1)$ that belongs to an active path $P_e$ sends the ID of the edge $e$ to the leader of the pair $\langle v_k,v'_k\rangle$. By \Cref{lem:outdegone}, every pair $\sigma'$ interferes with at most one other pair and hence there is no congestion and a single message is sent along the edge-disjoint paths $\pi(v_j,v'_j,T'_1)$ for every $\langle v_j,v'_j\rangle\in \Sigma$. Overall, we get the the construction of the virtual graph can be done in $O(D)$ rounds. 

We next claim that all leaders of two neighboring pairs $\sigma,\sigma' \in G_{\Sigma}$ can exchange $O(\log n)$ bits of information using $O(D)$ rounds. Hence, any $r$-round algorithm for the graph $G_{\Sigma}$ can be simulated in $T'_1$ in $O(r \cdot D)$ rounds. 
To see this, consider two neighbors $\sigma=\langle x,y\rangle,\sigma'=\langle x',y'\rangle$ where $\sigma'$ interferes $\sigma$. Without loss of generality, assume that the leader $x'$ of $\sigma'$ wants to send a message to the leader $x$ of $\sigma$. First, $x'$ sends the message on the path $\pi(x',y',T'_1)$. The edge $e' \in \pi(x',y',T'_1) \cap P_e$ for $e=(p(x),x)$ that receives this message sends it to the leader $x$ along the path $P_e$. Since we only send messages along edge disjoint paths, there is no congestion and can be done in $O(D)$ rounds. 

Since the graph $G_{\Sigma}$ has arboricity $O(1)$, it can be colored with 
$O(1)$ colors and $O(\log n)$ rounds using the algorithm of 
\cite{barenboim2010sublogarithmic}. By the above, simulating this algorithm in 
$G$ takes $O(D\log n)$ rounds.
We then consider each color class at a time where at step $j$ we consider 
$\Sigma_{i,j}$. 
For every $\sigma=\langle x,y \rangle$, $x$ sends the ID of $s(y)$ to $s(x)$ along the $P_e$ path for $e=(p(x),x)$. In the same manner, $y$ sends the ID of $s(x)$ to $s(y)$. This allows each node in $T'_2$ know its virtual edge. At that point we run Algorithm $\NonTreeMinorClosed$ to cover the virtual edges. Each virtual edge is later replaced with a true path in $G$ in a straightforward manner. 

\paragraph{Analysis of Algorithm $\DistTreeCover$.}
\begin{claim}\label{cl:disttreeedge}
Algorithm $\DistTreeCover$ computes a $(\widetilde{O}(D),\widetilde{O}(1))$ cycle cover $\cC_2$ for the tree edges $E(T)$ and has round complexity of $\widetilde{O}(D)$.
\end{claim}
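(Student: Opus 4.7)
The plan has two ingredients: correctness (that $\cC_2$ really is a cycle cover of $E(T)$ with dilation $\widetilde{O}(D)$ and congestion $\widetilde{O}(1)$) and round complexity. Correctness is essentially inherited from the centralized analysis of Algorithm $\TreeCover$ in \Cref{sec:cover-tree-edges} (\Cref{lem:treecover}), once one argues that $\DistTreeCover$ is a faithful distributed simulation of it. Concretely, I would go step by step through the algorithm: (i) each node learns its swap endpoint $s(v)$ via the $O(D)$-round procedure of \cite{ghaffari2016near}; (ii) the set $I(T)$ is computed by the downward inactivation propagation described in the excerpt, using the fact that two paths $P_{e_j}, P_{e_{j'}}$ for edges in the same BFS layer are edge-disjoint, so distinct inactivation messages never collide; (iii) the balanced edge-partition $T' = T'_1 \cup T'_2$ of \Cref{clm:tree-partition} is carried out by standard subtree-size counting; (iv) the classification of $I(T)\cap T'$ into the four sets $E'_{x,y}$ is done locally after each $v$ learns the tree ID of $s(v)$ along $P_e$; (v) $\EdgeDisjointPath$ produces the matching $\Sigma$, and each matched leader learns the marked tree path $\pi(v_j, v'_j, T'_1)$.

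The heart of the simulation is the handling of the conflict graph $G_{\Sigma}$ together with the subsequent application of $\NonTreeMinorClosed$ on $T'_2$. I would argue that by \Cref{lem:outdegone} every vertex in $G_{\Sigma}$ has out-degree at most one, so any leader can broadcast its interfering partner's identity along $\pi(v_j,v'_j,T'_1)\cap P_{e}$ without congestion, giving a single communication step in $G_{\Sigma}$ in $O(D)$ rounds of $G$. Hence a $3$-coloring of $G_{\Sigma}$ (using \cite{barenboim2010sublogarithmic}, since $G_{\Sigma}$ has constant arboricity) can be simulated in $O(D\log n)$ rounds. For each color class $\Sigma_j$ we install the virtual edges $(s(v_1),s(v_2))$ in $T'_2$ by having each leader send the identity of the matched swap endpoint along the $P_e$ paths (again using edge-disjointness of these paths to avoid collisions), and then we invoke $\NonTreeMinorClosed$ on $T'_2$ with these virtual non-tree edges. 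By \Cref{cl:minorcloseround} that subroutine runs in $\widetilde{O}(\depth(T'_2)) = \widetilde{O}(D)$ rounds, and the replacement of virtual edges by their true $G$-paths is another $O(D)$-round information push along the already-marked routes.

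For the overall round complexity I would then observe that in each level of the recursion the trees $T'$ considered are edge-disjoint, so all recursive calls at the same level can be executed in parallel without incurring extra congestion beyond the $\widetilde{O}(1)$ bound of \Cref{cl:congestionminorclose}. There are $O(\log n)$ levels of recursion, each of which performs $O(\log n)$ phases, and each phase costs $\widetilde{O}(D)$ rounds by the above. Multiplying gives $\widetilde{O}(D)$ rounds overall. The dilation bound $\widetilde{O}(D)$ and congestion bound $\widetilde{O}(1)$ for $\cC_2$ follow directly from the translation of virtual cycles back into cycles in $G$: each virtual cycle of length $O(\log n)$ in the $\NonTreeMinorClosed$ output is expanded by replacing virtual edges with $P(\widehat e)$ paths of length $O(D)$, exactly as in the centralized analysis, and the near-edge-disjointness of the paths $\{P_e: e\in I(T)\}$ together with the $\widetilde{O}(1)$-congestion guarantee of $\NonTreeMinorClosed$ preserves a $\widetilde{O}(1)$ congestion bound after expansion.

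The main obstacle I expect is the bookkeeping around simulating $G_{\Sigma}$ distributively: one needs the out-degree-one fact of \Cref{lem:outdegone} to ensure that leader-to-leader communication goes through a single tree-path without congestion, and one needs the edge-disjointness of $\{\pi(v_j,v'_j,T'_1)\}$ together with that of $\{P_e: e\in I(T)\}$ to parallelize both the conflict-graph simulation and the installation of virtual edges in $T'_2$. Once these are verified, the remaining arguments reduce to routine applications of \Cref{cl:minorcloseround,cl:congestionminorclose} together with the centralized correctness proof.
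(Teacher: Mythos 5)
Your proof follows essentially the same path as the paper: inherit correctness from the centralized analysis of $\TreeCover$, verify that each step can be simulated in the \congest\ model using the swap-edge computation, the edge-disjointness of the $\{\pi(v_j,v'_j,T'_1)\}$ and $\{P_e : e \in I(T)\}$ families, and the out-degree-one property of $G_\Sigma$ from \Cref{lem:outdegone}, then multiply $O(\log n)$ recursion levels by $O(\log n)$ phases by $\widetilde{O}(D)$ rounds per phase. One minor slip: you describe the $\NonTreeMinorClosed$ output as ``virtual cycles of length $O(\log n)$,'' but that subroutine already produces cycles of length $O(\depth(T'_2))$ containing $\widetilde{O}(1)$ virtual edges (you are conflating it with the general-graph $\NonTreeCover$); the conclusion that dilation remains $\widetilde{O}(D)$ after expanding each virtual edge into an $O(D)$-length path is unaffected.
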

\begin{proof}
The correctness follows the same line of arguments as in the centralized construction (see the Analysis of \Cref{sec:cover-tree-edges}), only the here we use Algorithm $\NonTreeMinorClosed$. 
Each cycle computed by Algorithm $\NonTreeMinorClosed$ has length $\widetilde{O}(D)$ and the cycle covers $\widetilde{O}(1)$ non-tree edges. In our case, each non-tree edge is virtual and replaced by a path of length $O(D)$ hence the final cycle has still length $\widetilde{O}(D)$. With respect to congestion, we have $O(\log n)$ levels of recursion, and in each level when working on the subtree $T'$ we have $O(\log n)$ applications of Algorithm $\NonTreeMinorClosed$ which computes cycles with congestion $\widetilde{O}(1)$. The total congestion is then bounded by $\widetilde{O}(1)$. 

We proceed with round complexity.
The algorithm has $O(\log n)$ levels of recursion. In each level we work on edge disjoint trees simultaneously. Consider a tree $T'$. The partitioning into $T'_1,T'_2$ takes $O(D)$ rounds. We now have $O(\log n)$ phases. We show that each phase takes $\widetilde{O}(D)$ rounds, which is the round complexity of Algorithm $\NonTreeMinorClosed$. In particular, In phase $i$ we have the following procedures. Applying Algorithm $\EdgeDisjointPath$ in $T'_1,T'_2$ takes $O(D)$ rounds. The computation of the conflict graph $G_{\Sigma}$ takes $O(D)$ rounds as well and coloring it using the coloring algorithm for low-arboricity graphs of \cite{barenboim2010sublogarithmic} takes $O(D\log n)$ rounds. Then we apply Algorithm $\NonTreeMinorClosed$ which takes $\widetilde{O}(D)$ rounds. Translating the cycles into cycles in $G$ takes $\widetilde{O}(D)$ rounds. 
\end{proof}

%
%
%
%
Summing over all the $O(\log n)$ phases, each (tree) edge appears on $O(\log n \densitythreshold)=O(\log n)$ cycles of the final cycle collection $\mathcal{C}'=\bigcup_u \mathcal{C}_i$. 
We therefore have:
\begin{lemma}\label{lem:minorclosesumdist}
For every bridgeless minor-closed graph $G$, a tree $T \subseteq G$ of diameter $D$, there exists:
(i) a $O(\depth(T))$ round algorithm that constructs an $(O(\depth(T)),O(\log n))$ cycle collection $\cC$ that covers all non-tree edges. Each cycle in $\cC$ is used to cover at most \emph{two} non-tree edges in $E(G)\setminus E(T)$. In addition, the algorithm passes $\widetilde{O}(1)$ messages on each edge $e$ over the entire execution; (ii) an $\widetilde{O}(D)$ round algorithm that constructs an $(\widetilde{O}(D),\widetilde{O}(1))$ cycle collection $\cC$ that covers all edges in $G$.
\end{lemma}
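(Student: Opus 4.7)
The plan is to assemble the two statements by aggregating the per-phase guarantees of Algorithm $\NonTreeMinorClosed$ and then combining it with Algorithm $\DistTreeCover$ (\Cref{cl:disttreeedge}) for the tree edges.

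For part (i), I would argue as follows. Algorithm $\NonTreeMinorClosed$ proceeds in $\ell=O(\log n)$ phases; the cover analysis of the previous subsection shows $|E'_{i+1}|\leq |E'_i|/2$ via the sparsity of the minor-closed contracted super-graph $\widetilde G_i$, so after $O(\log n)$ phases all non-tree edges are covered. In each phase, \Cref{cl:minorcloseround} bounds the round complexity by $O(\depth(T))$, so the total round complexity of $\NonTreeMinorClosed$ is $O(\depth(T)\cdot \log n)$; however, the cycles inside a single phase all have length $O(\depth(T))$ by construction (each cycle is either a fundamental cycle $C(e)$ or of the form $C(e,e')=\pi(u,u',T(B))\circ e'\circ \pi(v',v,T(B'))\circ e$), giving the claimed dilation bound. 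The per-phase congestion on tree edges is $O(\densitythreshold)=O(1)$ by \Cref{obs:keyminorclosed} applied to the block structure, and on non-tree edges is at most $2$ since each non-tree edge participates in at most one matched pair within a phase; summing over $\ell=O(\log n)$ phases gives total congestion $O(\log n)$. The fact that each cycle is used to cover at most two non-tree edges follows directly from the construction: $C(e)$ covers a single edge $e$ with both endpoints in the same block, while $C(e,e')$ covers exactly the pair $\langle e,e'\rangle$ matched by the root of a block. Finally, the message complexity statement is immediate from \Cref{cl:congestionminorclose}, since each phase sends $\widetilde O(1)$ messages per edge and there are $O(\log n)$ phases.

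For part (ii), I would combine part (i) with \Cref{cl:disttreeedge}. The algorithm first builds a BFS tree $T$ of $G$ in $O(D)$ rounds, so $\depth(T)=O(D)$. Applying part (i) on $T$ yields a cover $\cC_1$ for the non-tree edges in $\widetilde O(D)$ rounds with dilation $\widetilde O(D)$ and congestion $\widetilde O(1)$. Then Algorithm $\DistTreeCover$, whose correctness and complexity are established in \Cref{cl:disttreeedge}, covers all tree edges with an $(\widetilde O(D),\widetilde O(1))$ cycle collection $\cC_2$ within $\widetilde O(D)$ rounds (using $\NonTreeMinorClosed$ as a black box on the virtual non-tree edges created by swap-edge reductions). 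Taking $\cC=\cC_1\cup \cC_2$ gives the desired $(\widetilde O(D),\widetilde O(1))$ cover of all edges of $G$, with total round complexity $\widetilde O(D)$.

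The main obstacle I anticipate is ensuring the congestion bound survives the composition in part (ii). Specifically, $\DistTreeCover$ invokes $\NonTreeMinorClosed$ on virtual edges inside many subtrees across $O(\log n)$ levels of recursion, and one must verify that the edge-disjoint structure of the $P_e$ paths (each tree edge lies on at most two such paths, one per direction) together with the edge-disjointness of subtrees at the same recursion level keeps the per-edge congestion at $\widetilde O(1)$ rather than blowing up by a factor of the recursion depth times the per-phase congestion. This is precisely the argument carried out in the analysis of \Cref{sec:cover-tree-edges} when substituting $\NonTreeMinorClosed$ for $\NonTreeCover$, and it carries over because the $O(\log n)$ factors from phases and recursion levels multiply against the $O(1)$ per-phase congestion of $\NonTreeMinorClosed$ to give $\widetilde O(1)$ overall.
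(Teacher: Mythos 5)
Your proof is correct and follows essentially the same route the paper takes implicitly: the lemma is stated in the paper without a dedicated proof block, as a summary of the preceding analysis in \Cref{app:optminor}, and you have correctly reassembled that analysis from its constituent claims (the halving argument $|E'_{i+1}|\le |E'_i|/2$, \Cref{cl:minorcloseround} for per-phase round complexity, \Cref{obs:keyminorclosed} for per-phase tree-edge congestion, \Cref{cl:congestionminorclose} for message complexity, and \Cref{cl:disttreeedge} for the tree edges in part (ii)). One small observation in your favor: you report the total round complexity of $\NonTreeMinorClosed$ as $O(\depth(T)\log n)$, since there are $O(\log n)$ phases each costing $O(\depth(T))$ rounds, whereas the paper's \Cref{cl:minorcloseround} and the lemma's part (i) state $O(\depth(T))$ without the $\log n$ factor; your accounting is the more careful one, and this factor is absorbed in part (ii)'s $\widetilde O(D)$ in any case.
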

%
%
%
\begin{figure}[h!]
\begin{center}
\includegraphics[scale=0.35]{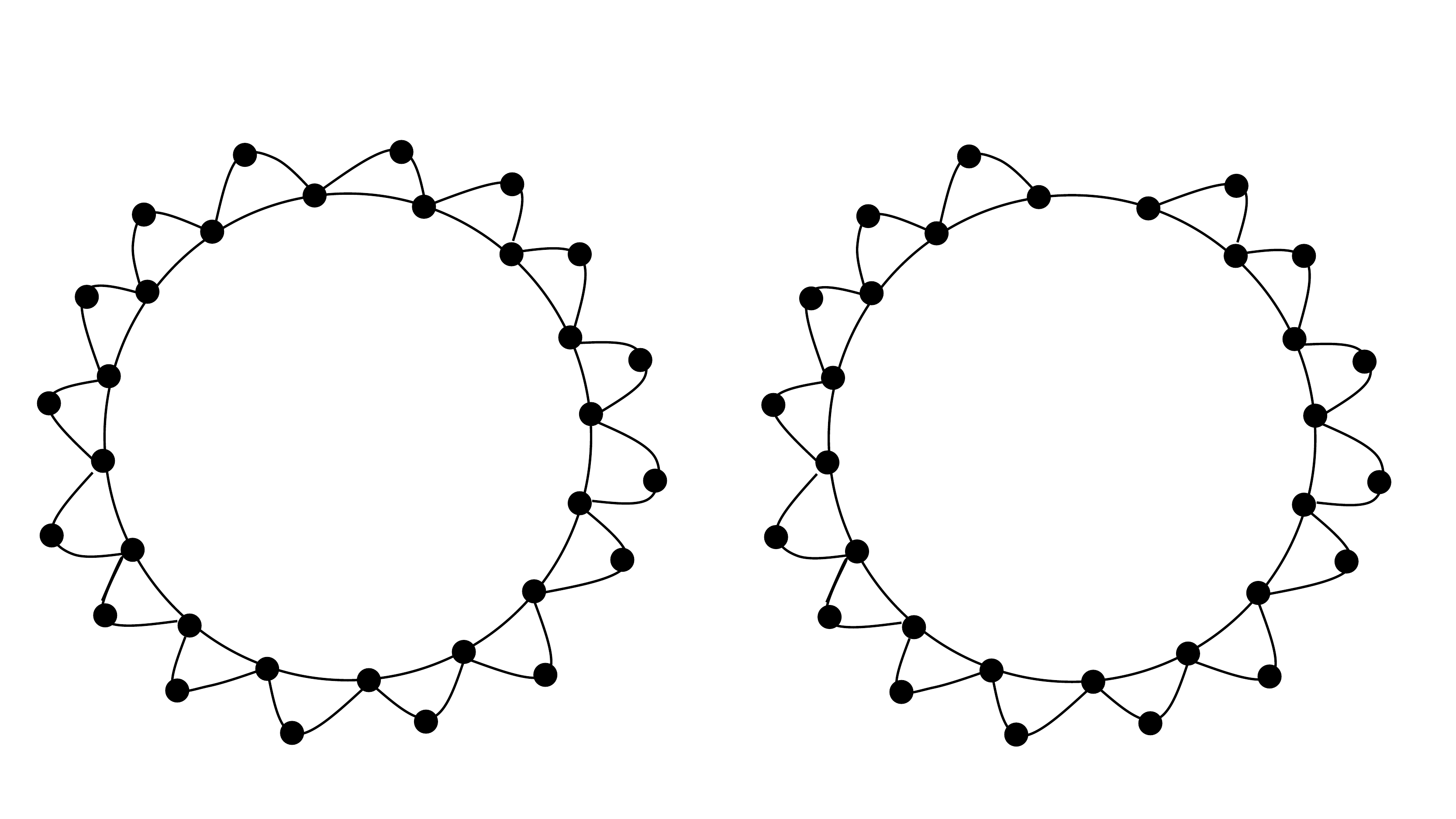}
\caption{Let $G$ be the left graph, then $\OptCCVal(G)=3$, and let $G'$ be the 
right graph then $\OptCCVal(G)=n$. Without knowledge the value of 
$\OptCCVal(G)$ a vertex that is at 
$n/2$ distance from the missing edges on $G'$, cannot distinguish in $n/100$ 
rounds if it is in $G$ or $G'$.}
\label{fig:flower}
\end{center}
\end{figure}

\bibliographystyle{alpha}
\bibliography{crypto}

\newcommand{\etalchar}[1]{$^{#1}$}
\begin{thebibliography}{BOGW88}

\bibitem[ABCP96]{awerbuch1996fast}
Baruch Awerbuch, Bonnie Berger, Lenore Cowen, and David Peleg.
\newblock Fast distributed network decompositions and covers.
\newblock {\em Journal of Parallel and Distributed Computing}, 39(2):105--114,
  1996.

\bibitem[ABCP98]{awerbuch1998near}
Baruch Awerbuch, Bonnie Berger, Lenore Cowen, and David Peleg.
\newblock Near-linear time construction of sparse neighborhood covers.
\newblock {\em SIAM Journal on Computing}, 28(1):263--277, 1998.

\bibitem[Awe85]{awerbuch1985complexity}
Baruch Awerbuch.
\newblock Complexity of network synchronization.
\newblock {\em Journal of the ACM (JACM)}, 32(4):804--823, 1985.

\bibitem[BDP97]{berman1997reliable}
Piotr Berman, Krzysztof Diks, and Andrzej Pelc.
\newblock Reliable broadcasting in logarithmic time with byzantine link
  failures.
\newblock {\em Journal of Algorithms}, 22(2):199--211, 1997.

\bibitem[BE10]{barenboim2010sublogarithmic}
Leonid Barenboim and Michael Elkin.
\newblock Sublogarithmic distributed mis algorithm for sparse graphs using
  nash-williams decomposition.
\newblock {\em Distributed Computing}, 22(5-6):363--379, 2010.

\bibitem[BEH{\etalchar{+}}10]{baier2010length}
Georg Baier, Thomas Erlebach, Alexander Hall, Ekkehard K{\"o}hler, Petr Kolman,
  Ond{\v{r}}ej Pangr{\'a}c, Heiko Schilling, and Martin Skutella.
\newblock Length-bounded cuts and flows.
\newblock {\em ACM Transactions on Algorithms (TALG)}, 7(1):4, 2010.

\bibitem[BH94]{bagchi1994information}
Anindo Bagchi and S.~Louis Hakimi.
\newblock Information dissemination in distributed systems with faulty units.
\newblock {\em IEEE Transactions on Computers}, 43(6):698--710, 1994.

\bibitem[BM05]{blaser2005approximating}
Markus Bl{\"a}ser and Bodo Manthey.
\newblock Approximating maximum weight cycle covers in directed graphs with
  weights zero and one.
\newblock {\em Algorithmica}, 42(2):121--139, 2005.

\bibitem[BOGW88]{ben1988completeness}
Michael Ben-Or, Shafi Goldwasser, and Avi Wigderson.
\newblock Completeness theorems for non-cryptographic fault-tolerant
  distributed computation.
\newblock In {\em Proceedings of the twentieth annual ACM symposium on Theory
  of computing}, pages 1--10. ACM, 1988.

\bibitem[Bol04]{bollobas2004extremal}
B{\'e}la Bollob{\'a}s.
\newblock {\em Extremal graph theory}.
\newblock Courier Corporation, 2004.

\bibitem[BP93]{blough1993optimal}
Douglas~M Blough and Andrzej Pelc.
\newblock Optimal communication in networks with randomly distributed byzantine
  faults.
\newblock {\em Networks}, 23(8):691--701, 1993.

\bibitem[CHGH18]{censor2018making}
Keren Censor-Hillel, Ran Gelles, and Bernhard Haeupler.
\newblock Making asynchronous distributed computations robust to channel noise.
\newblock In {\em LIPIcs-Leibniz International Proceedings in Informatics},
  volume~94. Schloss Dagstuhl-Leibniz-Zentrum fuer Informatik, 2018.

\bibitem[CHT17]{censor2017fast}
Keren Censor-Hillel and Tariq Toukan.
\newblock On fast and robust information spreading in the vertex-congest model.
\newblock {\em Theoretical Computer Science}, 2017.

\bibitem[DK11]{dinitz2011fault}
Michael Dinitz and Robert Krauthgamer.
\newblock Fault-tolerant spanners: better and simpler.
\newblock In {\em Proceedings of the 30th annual ACM SIGACT-SIGOPS symposium on
  Principles of distributed computing}, pages 169--178. ACM, 2011.

\bibitem[DPPU88]{DworkPPU88}
Cynthia Dwork, David Peleg, Nicholas Pippenger, and Eli Upfal.
\newblock Fault tolerance in networks of bounded degree.
\newblock {\em {SIAM} J. Comput.}, 17(5):975--988, 1988.

\bibitem[EJ73]{edmonds1973matching}
Jack Edmonds and Ellis~L Johnson.
\newblock Matching, euler tours and the chinese postman.
\newblock {\em Mathematical programming}, 5(1):88--124, 1973.

\bibitem[EN17]{elkin2017efficient}
Michael Elkin and Ofer Neiman.
\newblock Efficient algorithms for constructing very sparse spanners and
  emulators.
\newblock In {\em Proceedings of the Twenty-Eighth Annual ACM-SIAM Symposium on
  Discrete Algorithms}, pages 652--669. Society for Industrial and Applied
  Mathematics, 2017.

\bibitem[Fan92]{fan1992integer}
Genghua Fan.
\newblock Integer flows and cycle covers.
\newblock {\em Journal of Combinatorial Theory, Series B}, 54(1):113--122,
  1992.

\bibitem[Fis83]{fischer1983consensus}
Michael~J Fischer.
\newblock The consensus problem in unreliable distributed systems (a brief
  survey).
\newblock In {\em International Conference on Fundamentals of Computation
  Theory}, pages 127--140. Springer, 1983.

\bibitem[FLP85]{fischer1985impossibility}
Michael~J Fischer, Nancy~A Lynch, and Michael~S Paterson.
\newblock Impossibility of distributed consensus with one faulty process.
\newblock {\em Journal of the ACM (JACM)}, 32(2):374--382, 1985.

\bibitem[G{\"a}r99]{gartner1999fundamentals}
Felix~C G{\"a}rtner.
\newblock Fundamentals of fault-tolerant distributed computing in asynchronous
  environments.
\newblock {\em ACM Computing Surveys (CSUR)}, 31(1):1--26, 1999.

\bibitem[Gel17]{Gelles17}
Ran Gelles.
\newblock Coding for interactive communication: {A} survey.
\newblock {\em Foundations and Trends in Theoretical Computer Science},
  13(1-2):1--157, 2017.

\bibitem[GH16]{ghaffari2016distributed}
Mohsen Ghaffari and Bernhard Haeupler.
\newblock Distributed algorithms for planar networks ii: Low-congestion
  shortcuts, {MST}, and min-cut.
\newblock In {\em Proceedings of the twenty-seventh annual ACM-SIAM symposium
  on Discrete algorithms}, pages 202--219. SIAM, 2016.

\bibitem[Gha15a]{ghaffari2015distributed}
Mohsen Ghaffari.
\newblock Distributed broadcast revisited: Towards universal optimality.
\newblock In {\em International Colloquium on Automata, Languages, and
  Programming}, pages 638--649. Springer, 2015.

\bibitem[Gha15b]{Ghaffari15}
Mohsen Ghaffari.
\newblock Near-optimal scheduling of distributed algorithms.
\newblock In {\em Proceedings of the 2015 {ACM} Symposium on Principles of
  Distributed Computing, {PODC}}, pages 3--12, 2015.

\bibitem[GP16]{ghaffari2016near}
Mohsen Ghaffari and Merav Parter.
\newblock Near-optimal distributed algorithms for fault-tolerant tree
  structures.
\newblock In {\em Proceedings of the 28th ACM Symposium on Parallelism in
  Algorithms and Architectures}, pages 387--396. ACM, 2016.

\bibitem[GP17]{ghaffari2017near}
Mohsen Ghaffari and Merav Parter.
\newblock Near-optimal distributed dfs in planar graphs.
\newblock In {\em 31st International Symposium on Distributed Computing (DISC
  2017)}, volume~91, page~21. Schloss Dagstuhl-Leibniz-Zentrum fuer Informatik,
  2017.

\bibitem[Gua62]{guan1962graphic}
Meigu Guan.
\newblock Graphic programming using odd and even points.
\newblock {\em Chinese Math.}, 1:237--277, 1962.

\bibitem[HHW18]{haeupler2018round}
Bernhard Haeupler, D~Ellis Hershkowitz, and David Wajc.
\newblock Round-and message-optimal distributed part-wise aggregation.
\newblock {\em arXiv preprint arXiv:1801.05127}, 2018.

\bibitem[HIZ16a]{haeupler2016low}
Bernhard Haeupler, Taisuke Izumi, and Goran Zuzic.
\newblock Low-congestion shortcuts without embedding.
\newblock In {\em Proceedings of the 2016 ACM Symposium on Principles of
  Distributed Computing}, pages 451--460. ACM, 2016.

\bibitem[HIZ16b]{haeupler2016near}
Bernhard Haeupler, Taisuke Izumi, and Goran Zuzic.
\newblock Near-optimal low-congestion shortcuts on bounded parameter graphs.
\newblock In {\em International Symposium on Distributed Computing}, pages
  158--172. Springer, 2016.

\bibitem[HL18]{haeupler2018faster}
Bernhard Haeupler and Jason Li.
\newblock Faster distributed shortest path approximations via shortcuts.
\newblock {\em arXiv preprint arXiv:1802.03671}, 2018.

\bibitem[HLZ18]{haeupler2018minor}
Bernhard Haeupler, Jason Li, and Goran Zuzic.
\newblock Minor excluded network families admit fast distributed algorithms.
\newblock {\em arXiv preprint arXiv:1801.06237}, 2018.

\bibitem[HO01]{hochbaum2001bounded}
Dorit~S Hochbaum and Eli~V Olinick.
\newblock The bounded cycle-cover problem.
\newblock {\em INFORMS Journal on Computing}, 13(2):104--119, 2001.

\bibitem[HS16]{hoza2016adversarial}
William~M Hoza and Leonard~J Schulman.
\newblock The adversarial noise threshold for distributed protocols.
\newblock In {\em Proceedings of the twenty-seventh annual ACM-SIAM symposium
  on Discrete algorithms}, pages 240--258. Society for Industrial and Applied
  Mathematics, 2016.

\bibitem[IMM05]{immorlica2005cycle}
Nicole Immorlica, Mohammad Mahdian, and Vahab~S Mirrokni.
\newblock Cycle cover with short cycles.
\newblock In {\em Annual Symposium on Theoretical Aspects of Computer Science},
  pages 641--653. Springer, 2005.

\bibitem[IR78]{itai1978covering}
Alon Itai and Michael Rodeh.
\newblock Covering a graph by circuits.
\newblock In {\em International Colloquium on Automata, Languages, and
  Programming}, pages 289--299. Springer, 1978.

\bibitem[KKP01]{kranakis2001fault}
Evangelos Kranakis, Danny Krizanc, and Andrzej Pelc.
\newblock Fault-tolerant broadcasting in radio networks.
\newblock {\em Journal of Algorithms}, 39(1):47--67, 2001.

\bibitem[KN16]{khachay2016approximability}
Michael Khachay and Katherine Neznakhina.
\newblock Approximability of the minimum-weight k-size cycle cover problem.
\newblock {\em Journal of Global Optimization}, 66(1):65--82, 2016.

\bibitem[KNY05]{krivelevich2005approximation}
Michael Krivelevich, Zeev Nutov, and Raphael Yuster.
\newblock Approximation algorithms for cycle packing problems.
\newblock In {\em Proceedings of the sixteenth annual ACM-SIAM symposium on
  Discrete algorithms}, pages 556--561. Society for Industrial and Applied
  Mathematics, 2005.

\bibitem[KR95]{klein1995nearly}
Philip Klein and R~Ravi.
\newblock A nearly best-possible approximation algorithm for node-weighted
  steiner trees.
\newblock {\em Journal of Algorithms}, 19(1):104--115, 1995.

\bibitem[KR01]{keidar2001cost}
Idit Keidar and Sergio Rajsbaum.
\newblock On the cost of fault-tolerant consensus when there are no faults:
  preliminary version.
\newblock {\em ACM SIGACT News}, 32(2):45--63, 2001.

\bibitem[Li18]{li2018distributed}
Jason Li.
\newblock Distributed treewidth computation.
\newblock {\em arXiv preprint arXiv:1805.10708}, 2018.

\bibitem[LMR94]{leighton1994packet}
Frank~Thomson Leighton, Bruce~M Maggs, and Satish~B Rao.
\newblock Packet routing and job-shop scheduling ino (congestion+ dilation)
  steps.
\newblock {\em Combinatorica}, 14(2):167--186, 1994.

\bibitem[LMR18]{leviplanar18}
Reut Levi, Moti Medina, and Dana Ron.
\newblock Property testing of planarity in the {CONGEST} model.
\newblock In {\em Proceedings of the 2018 {ACM} Symposium on Principles of
  Distributed Computing, {PODC} 2018, Egham, United Kingdom, July 23-27, 2018},
  pages 347--356, 2018.

\bibitem[LZKS13]{leblanc2013resilient}
Heath~J LeBlanc, Haotian Zhang, Xenofon Koutsoukos, and Shreyas Sundaram.
\newblock Resilient asymptotic consensus in robust networks.
\newblock {\em IEEE Journal on Selected Areas in Communications},
  31(4):766--781, 2013.

\bibitem[Mad67]{mader1967homomorphieeigenschaften}
Wolfgang Mader.
\newblock Homomorphieeigenschaften und mittlere kantendichte von graphen.
\newblock {\em Mathematische Annalen}, 174(4):265--268, 1967.

\bibitem[Man09]{manthey2009minimum}
Bodo Manthey.
\newblock Minimum-weight cycle covers and their approximability.
\newblock {\em Discrete Applied Mathematics}, 157(7):1470--1480, 2009.

\bibitem[Pel96]{pelc1996fault}
Andrzej Pelc.
\newblock Fault-tolerant broadcasting and gossiping in communication networks.
\newblock {\em Networks: An International Journal}, 28(3):143--156, 1996.

\bibitem[Pel00]{Peleg:2000}
David Peleg.
\newblock {\em Distributed Computing: A Locality-sensitive Approach}.
\newblock SIAM, 2000.

\bibitem[PP05]{pelc2005broadcasting}
Andrzej Pelc and David Peleg.
\newblock Broadcasting with locally bounded byzantine faults.
\newblock {\em Information Processing Letters}, 93(3):109--115, 2005.

\bibitem[PS89]{peleg1989time}
David Peleg and Alejandro~A Sch{\"a}ffer.
\newblock Time bounds on fault-tolerant broadcasting.
\newblock {\em Networks}, 19(7):803--822, 1989.

\bibitem[PY17]{parteryogev17}
Merav Parter and Eylon Yogev.
\newblock Distributed computing made secure: {A} graph theoretic approach.
\newblock {\em CoRR}, abs/1712.01139, 2017.
\newblock To Appear in SODA '19.

\bibitem[Sey79]{Seymour79}
P.~D. Seymour.
\newblock Sums of circuits.
\newblock {\em Graph theory and related topics}, pages 341–--355, 1979.

\bibitem[Sze73]{szekeres1973polyhedral}
George Szekeres.
\newblock Polyhedral decompositions of cubic graphs.
\newblock {\em Bulletin of the Australian Mathematical Society}, 8(3):367--387,
  1973.

\bibitem[Tho84]{thomason1984extremal}
Andrew Thomason.
\newblock An extremal function for contractions of graphs.
\newblock In {\em Mathematical Proceedings of the Cambridge Philosophical
  Society}, volume~95, pages 261--265. Cambridge University Press, 1984.

\bibitem[Tho97]{thomassen1997complexity}
Carsten Thomassen.
\newblock On the complexity of finding a minimum cycle cover of a graph.
\newblock {\em SIAM Journal on Computing}, 26(3):675--677, 1997.

\end{thebibliography}
\appendix
\section{Balanced Partitioning of a Tree}\label{clm:tree-partition}
We show that every rooted tree $T$ can be partitioned into two edge-disjoint 
rooted trees $T_1$ and $T_2$ such that (I) $E(T_1)\cup E(T_2)=E(T)$ and (II) 
$V(T_1), V(T_2)\leq 2/3\cdot N$ where $N=|T|$. 
In addition, this partitioning maintains the layering structure of $T$ as will 
be described later.
To compute this partitioning, define the weight $w(v)$ of each vertex $v$ in 
$T$ to be the number of vertices in its subtree $T(v)$. First, consider the 
case, where there is a vertex $v^*$ with weight $w(v^*) \in [1/3 N,2/3 N]$. 
In such a case, define $T_1=T_{v^*}$ and $T_2=T \setminus E(T(v^*))$. 
By definition, both $T_1$ and $T_2$ are trees, all edges of $T$ are covered and 
$|T_1|,|T_2|\in [1/3 N, 2/3 N]$.

Else, if no such balanced vertex exists, there must be a vertex $v^*$ such that 
$w(v^*)\geq 2/3 N$ but for each of its children in $T$, $u_i$, it holds that 
$w(u_i)\leq 1/3 N$. In such a case, we consider the children of $v^*$ from left 
to right $u_1,\ldots, u_k$ and sum up their weights until we get to a value in 
the range $[1/3N,2/3N]$. Formally, let $\ell \in \{1,\ldots,k\}$ be the minimal 
index satisfying that $\sum_{i=1}^\ell w(u_i)\in [1/3 N,2/3N]$. Since each 
$w(u_i)\leq 1/3 N$, such an index $\ell$ exists. 
We then set $T_1=\bigcup_{i=1}^\ell \left(T(u_i) \cup \{(u_i,v^*)\}\right)$ and 
$T_2=T\setminus \bigcup_{i=1}^\ell V(T(u_i))$.
By construction, all edges of $T$ are covered by $T_1$ and $T_2$. In addition, 
by definition, $|T_1|\in [1/3N,2/3 N]$ and hence also $T_2 \in [1/3 N,2/3 N]$.

Finally, we pick the roots $r_1,r_2$ of $T_1,T_2$ (respectively) to be the 
vertices the are close-most to the root $r$ in $T$. We then get for $u,v \in 
T_1$, that if $u$ is closer to the root than $v$ in $T$, then also $u$ is 
closer to the root $r_1$ than $v$ in $T_1$.

\section{Distributed Constructions for General Graphs}
\subsection{Low Congestion Covers}\label{sec:distributedconstcovers}
In this section, we show how the covers of 
\Cref{thm:cyclecover_upper,thm:cyclcoveropt} with existentially optimal bounds can be constructed using 
$\widetilde{O}(n)$ rounds in the distributed setting.
\begin{lemma}\label{lem:preproc-cover}
For every bridgeless  $n$-vertex graph a $(D\log n,\log^3 n)$ cycle 
cover can be computed distributively in $\widetilde{O}(n)$ rounds of 
pre-processing.
\end{lemma}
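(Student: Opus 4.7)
The plan is to essentially simulate the centralized Algorithm $\GraphCover$ of \Cref{fig:main-algorithm} in the distributed setting, exploiting the fact that an $\widetilde{O}(n)$ budget lets us afford whole-tree upcast/downcast operations inside each of the $O(\log n)$ logical iterations. The backbone is a BFS tree $T$ rooted at the vertex with the smallest ID, which can be constructed in $O(D)\le O(n)$ rounds; every subsequent primitive will communicate along edges of $T$ (plus single hops over non-tree edges for information exchange between endpoints).

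The first task is to implement Algorithm $\NonTreeCover$. I would start by computing the post-order numbering $N(\cdot)$ of $T$ by a sequential DFS traversal in $O(n)$ rounds, and letting every vertex learn its $\min_N,\max_N$ values for its subtree (one upcast, $O(n)$ rounds). Each of the $O(\log n)$ iterations then proceeds as follows: every edge $e\in E'$ reports, via its endpoints, to the BFS root, so the root learns the full set $E'$ together with the numbering of its endpoints; this requires convergecasting $\widetilde{O}(|E'|)\le \widetilde{O}(n^2)$ bits --- too expensive. To stay within $\widetilde{O}(n)$, I will instead compute the $(E',\densitythreshold)$-partition locally by the greedy rule from \Cref{fig:main-algorithm-partitioning}: run a prefix-sum over $\deg(\cdot,E')$ along the post-order sequence (an $O(n)$-round pipelined computation along the DFS schedule), so that each vertex learns its block ID. Next, the auxiliary multigraph $\widetilde{G}$ is virtually assembled at the BFS root by having each non-tree edge upcast its pair of block IDs (using pipelining along $T$ in $\widetilde{O}(|E'|)$ rounds, which after $O(\log n)$ iterations still fits into $\widetilde{O}(n)$ once we account for the halving of $|E'|$ in every iteration as proved in the covering analysis). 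The root then greedily extracts girth-$O(\log n)$ cycles, and downcasts the resulting cycle descriptions to the corresponding block roots, which further propagate them along the relevant tree paths $\pi(v_i,u_{i+1})$; each tree edge is touched by at most two blocks, so this propagation finishes in $O(D)$ rounds. Finally, $\SimplifyCycles$ is applied locally on each output cycle by circulating the cycle description once along it.

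For $\TreeCover$, the plan is to mirror the recursive scheme of \Cref{sec:cover-tree-edges}. Swap edges $\swap(e)$ and the corresponding $s(v)$ mappings are computed by the distributed procedure of Section~4.1 of~\cite{ghaffari2016near} in $O(D+\sqrt{n})$ rounds, and the subset $I(T)$ is then extracted in $O(D)$ rounds as described in Section~\ref{app:optminor} (active-edge propagation along $P_e$). Inside each of the $O(\log n)$ levels of recursion, the balanced tree partition of \Cref{clm:tree-partition} is obtained by a single weight-upcast on $T$ in $O(n)$ rounds; the classification of edges into $E'_{x,y}$ is done by having each $v$ learn the subtree index of $s(v)$ along $P_e$. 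Algorithm $\EdgeDisjointPath$ runs in $O(D)$ rounds. To cover $E'_{1,2}$ we need the conflict graph $G_\Sigma$ and its $3$-coloring (whose existence follows from \Cref{lem:outdegone}); both can be simulated by aggregating $\Sigma$ at the root in $O(n)$ rounds and coloring greedily there. Each of the $O(\log n)$ color-phases then calls the non-tree cover routine of the preceding paragraph on $T'_2$ with the virtual non-tree edges $\widehat{E}_{i,j}$, which again fits within $\widetilde{O}(n)$. Because the trees produced at a given recursion level are edge-disjoint, the whole level runs concurrently and costs $\widetilde{O}(n)$ rounds, and summing over the $O(\log n)$ levels preserves the $\widetilde{O}(n)$ bound.

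The main obstacle I expect is controlling the cumulative bandwidth when the root (or block roots) must learn the structure of the contracted multigraphs $\widetilde{G}$ and $\widehat{E}_{i,j}$: naive aggregation costs $\Omega(|E'|)$ rounds per iteration, which is affordable only because each iteration halves $|E'|$ and because the recursion tree of $\TreeCover$ has geometrically decreasing sizes. Making this amortization rigorous---i.e., proving that the telescoping sum over the $O(\log n)$ iterations of $\NonTreeCover$ and over the $O(\log n)$ recursion levels of $\TreeCover$ remains $\widetilde{O}(n)$---is the technical point that needs care; once this is handled, correctness and the $(D\log n,\log^3 n)$ bounds are inherited verbatim from Theorem~\ref{thm:cyclecover_upper}.
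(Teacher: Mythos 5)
Your plan is to simulate the centralized Algorithm $\GraphCover$ directly by having non-tree edges upcast their block IDs to the BFS root in each of the $O(\log n)$ iterations, and you argue this stays inside $\widetilde{O}(n)$ because $|E'|$ halves each iteration. This amortization does not work: the number of non-tree edges in a bridgeless graph can be $\Theta(n^2)$, and halving gives $\sum_i |E'_i| = \Theta(|E'_0|) = \Theta(n^2)$, so the very first iteration already costs $\Theta(n^2)$ rounds of pipelining. The ``geometrically decreasing'' argument never saves the first term, so your approach yields $\widetilde{O}(n^2)$ rounds, not $\widetilde{O}(n)$.

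The paper avoids this by never aggregating all non-tree edges to a single root. Instead, it first uses the Moore bound in a decentralized way: in phase $i$ it randomly partitions $E_i$ into $\ell_i = |E_i|/(c n\log n)$ edge-disjoint subgraphs, each containing $\Theta(n\log n)$ edges w.h.p.\ Because the subgraphs are edge-disjoint, each gets its own BFS tree $T_{i,j}$ and its edges are pipelined to the local root $T_{i,j}$ in only $O(n\log n)$ rounds, \emph{simultaneously} across all subgraphs. Each local root then peels off short ($O(\log n)$-length) edge-disjoint cycles until fewer than $2n$ edges remain in its subgraph, so the uncovered set shrinks by a $\Theta(\log n)$ factor per phase. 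After $O(\log n/\log\log n)$ phases one is left with only $O(n\log n)$ edges, and \emph{only then} does the algorithm pipeline the remaining edges plus the tree to the global root and run the centralized $\GraphCover$ locally. This random-partitioning step is the key idea your proposal is missing; without it, the distributed simulation of $\GraphCover$ cannot meet the $\widetilde{O}(n)$ bound.
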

\begin{proof}
Compute a BFS tree $T$ and consider the set of non-tree edges $E'$. Let $E_0=E'$. As long that number of edges $E_i$ to be covered in $E'$ is at least $O(\log^c n \cdot n)$, we do as follows in phase $i$. 
Let $\Delta_i=|E_i|/n$. We partition the edges of $E_i$ into $\ell_i=\Delta_i/(c\cdot \log n)$ edge-disjoint subgraphs by letting each edge in $E_i$ pick a number in $[1,\ell_i]$ uniformly at random. We have that w.h.p. each subgraph $E_{i,j}$ contains $\Theta(n\log n)$ edges of $E_i$. 

At the point, we work on each subgraph $E_{i,j}$ independently. We compute a 
BFS tree $T_{i,j}$ in each $E_{i,j}$ (using only communication on $E_{i,j}$ 
edges). We then collect all edges of $E_{i,j}$ to the root by pipelining these 
edges on $T_{i,j}$. At that point, each root of $T_{i,j}$ can partition all but 
$2n$ edges of $E_{i,j}$ into edge disjoint cycles of length $O(\log n)$. The 
root also pass these cycle information to the relevant edges using the 
communication on $T_{i,j}$. Note that since the $E_{i,j}$ subgraphs are 
disjoint, this can be done simultaneously for all  subgraphs $E_{i,j}$. At the 
end of that phase, we are left with $2n\cdot \ell_i=O(|E_i|/\log n)$ uncovered 
edges $E_{i+1}$ to be handled in the next phase. Overall, after $O(\log n/\log 
\log n)$ phases, we are left with $O(n\log n)$ uncovered edges. At the point, 
we can pipeline these edges to the root of the BFS tree, along with the $n-1$ 
edges of the BFS tree and let the root compute it locally as explained in 
\Cref{sec:cyclecoverall}. The lemma follows.
\end{proof}

\paragraph{Preprocessing algorithm for universally optimal covers.}
\begin{lemma}\label{lem:distmodcover}
Every distributed \emph{nice} algorithm $\cA$ 
that given a bridgeless graph $G$ with diameter $D$, constructs an $(f_{\cA}(D), \congestion)$ cycle cover $\cC$ within $r_{\A}(D)$ rounds can be transformed 
into an algorithm $\A'$ that constructs an 
$(f_{\cA}(\widetilde{O}(\OptCCVal(G))), \widetilde{O}(\congestion))$ cover $\cC'$ for $G$, within $r_{\A}(\OptCCVal(G))$ rounds. 
\end{lemma}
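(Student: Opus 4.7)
The plan is to mirror the centralized construction of universally optimal cycle covers from \Cref{sec:fulloptcovers} in the distributed setting. The centralized analysis already establishes the cover, dilation, and congestion guarantees once one can (a) build a neighborhood cover of $G$ with the parameters of \Cref{def:neighborcover}, and (b) run $\cA$ independently inside each cluster. The task reduces to showing that both can be carried out distributively within the claimed round budget.

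For (a), I would invoke the distributed neighborhood-cover construction of Awerbuch, Berger, Cowen and Peleg, which given a parameter $t$ produces an $(O(\log n),t,O(\log n))$-neighborhood cover in $\widetilde{O}(t)$ rounds. Setting $t=\OptCCVal(G)$ yields clusters $S_1,\ldots,S_r$ such that every cycle of length at most $\OptCCVal$ is contained in some $G[S_i]$, each $G[S_i]$ has (strong) diameter $\widetilde{O}(\OptCCVal)$, and each vertex lies in $\widetilde{O}(1)$ clusters. After this preprocessing, every vertex knows the IDs of the clusters it belongs to, together with spanning structures (e.g.\ cluster BFS trees) that will let it act inside its clusters.

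For (b), I would run $\cA$ on each $G[S_i]$ simultaneously, restricting the communication of the $i$-th simulation to edges with both endpoints in $S_i$. The key observation is that any edge $e=(u,v)$ belongs to at most $\min(|\{S\ni u\}|,|\{S\ni v\}|)=\widetilde{O}(1)$ simulations, so the congestion of the joint simulation is $\widetilde{O}(1)$ per edge per round. Using a standard random-delay scheduler \cite{leighton1994packet,Ghaffari15}, one virtual round of each of the $r$ simulations can be implemented in $\widetilde{O}(1)$ actual rounds of the \congest\ model. The simulation of $\cA$ on cluster $S_i$ terminates within $r_{\cA}(\widetilde{O}(\OptCCVal))$ virtual rounds, giving a total round complexity of $\widetilde{O}(r_{\cA}(\OptCCVal))$. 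The output $\cC'$ is the union $\bigcup_i \cC_i$ of the covers produced by the individual simulations; niceness of $\cA$ together with the neighborhood-cover property then imply, exactly as in \Cref{sec:fulloptcovers}, that every edge lying on any cycle of length at most $\OptCCVal$ is covered, every output cycle has length at most $f_{\cA}(\widetilde{O}(\OptCCVal))$, and every edge carries at most $\widetilde{O}(1)\cdot \congestion=\widetilde{O}(\congestion)$ cycles.

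The main obstacle is controlling the congestion of the parallel cluster simulations, which hinges on the neighborhood-cover guarantee that each edge lies in only $\widetilde{O}(1)$ clusters; once this is in place the generic scheduler delivers the required $\widetilde{O}(1)$-round overhead. A secondary issue is that the nodes may not know $\OptCCVal(G)$ in advance, which can be handled by the standard doubling schedule: run the above procedure for $t=2^0,2^1,\ldots$ and halt once a distributed sanity check (a single broadcast on a global BFS tree) confirms every edge is covered. Under the mild assumption that $r_{\cA}(\cdot)$ is at least linear, the total work is dominated by the final scale $t=2^{\lceil\log\OptCCVal\rceil}$, and the $O(\log\OptCCVal)$ doubling overhead is absorbed into the $\widetilde{O}(\cdot)$ in both dilation and congestion.
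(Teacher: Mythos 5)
Your proposal follows the paper's proof almost verbatim in its overall architecture: construct an $\OptCCVal$-neighborhood cover, run $\cA$ on each cluster $G[S_i]$ in parallel, take the union of the resulting covers, and argue coverage via niceness of $\cA$ together with the neighborhood-cover guarantee that every short cycle $C_e$ lies wholly inside some cluster. The congestion argument (bounded cluster overlap implies $\widetilde{O}(\congestion)$ total congestion) and the dilation argument (cluster diameter $\widetilde{O}(\OptCCVal)$ implies cycle lengths $f_{\cA}(\widetilde{O}(\OptCCVal))$) are exactly the paper's.

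The one substantive gap is your appeal to the Awerbuch--Berger--Cowen--Peleg neighborhood cover as a ready-made $\widetilde{O}(t)$-round \congest\ routine. The paper explicitly disclaims this at the opening of \Cref{sec:distncover}: ``As far as we know, previous explicit constructions for neighborhood cover (such as \cite{awerbuch1996fast}) are in the \local\ model and use large messages.'' Filling that gap is precisely the content of \Cref{lem:distncimpl}, which the paper derives from the Elkin--Neiman spanner algorithm and which is what the paper's proof of \Cref{lem:distmodcover} actually invokes. So while your reduction to ``(a) build the cover, (b) simulate in parallel'' is correct, step (a) cannot be discharged by the reference you give; you need the \congest\ construction of \Cref{lem:distncimpl} or an equivalent. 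Your doubling schedule for handling unknown $\OptCCVal$ is a sensible addition the paper does not spell out in this lemma (though it acknowledges the issue in the statement of \Cref{lem:mccycle}).
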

\begin{proof}
Algorithm $\cA'$ first employs \Cref{lem:distncimpl} to construct an $t$-neighborhood cover $\cN$ with for $t=\OptCCVal$ within $\widetilde{O}(\OptCCVal)$ rounds. Then, it applies Alg. $\cA$ on each subgraph $G[S_i]$ resulting in a cycle collection $\cC_i$. Since each vertex belongs to $\widetilde{O}(1)$ clusters, Algorithm $\cA$ can be applied on all graphs $G[S_i]$ simultaneously using $\widetilde{O}(r_{\A}(\OptCCVal))$ rounds, in total. The final cycle cover is $\cC=\bigcup_i \cC_i$. Since the diameter of each subgraph $G[S_i]$ is $\widetilde{O}(\OptCCVal)$, $\cC_i$ is an $(f_{\cA}(\widetilde{O}(\OptCCVal)), \congestion)$ cycle cover for the edges of $G[S_i]$ (i.e., covering the edges that lie on some cycle on $G[S_i]$). We have that $\cC$ is an $(f_{\cA}(\widetilde{O}(\OptCCVal)), \widetilde{O}(\congestion))$ cycle cover for $G$. 

To see that each edge $e$ is indeed covered, note that each edge $e$ lies on some cycle $C_e$ in $G$ of length at most $\OptCCVal$. By the properties of the neighborhood cover, w.h.p., there is a cluster $S_{i} \in \cN$ that contains all the vertices of $C_e$ and hence $e$ is an edge that lies on a cycle in the subgraph $G[S_i]$. Since the algorithm $\cA$ is nice, the edge $e$ is covered in the cycles of $\cC_i$. 

\end{proof}
By combining \Cref{lem:preproc-cover} with \Cref{lem:distmodcover}, we have:
\begin{lemma}\label{lem:preproc-cover}
For every bridgeless $n$-vertex graph a $(\widetilde{O}(\OptCCVal(G)),\widetilde{O}(1))$ cycle 
cover can be computed distributively in $\widetilde{O}(n)$ rounds of 
preprocessing.
\end{lemma}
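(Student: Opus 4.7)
The statement is a direct specialization of the reduction framework to the existentially-optimal construction, so the plan is to plug in the right inputs and verify the three quantitative blow-ups (dilation, congestion, rounds). First I would identify the base algorithm $\cA$ as the distributed procedure of the earlier \Cref{lem:preproc-cover}, which outputs a $(D\log n,\log^3 n)$-cycle cover in $\widetilde{O}(n)$ rounds. Before feeding it into \Cref{lem:distmodcover}, I need to verify that $\cA$ is \emph{nice} in the sense of \Cref{sec:fulloptcovers}, i.e., it covers every edge that lies on some cycle, whether or not the input is bridgeless. This follows because the last stage of $\cA$ collects the residual $O(n\log n)$ edges to the BFS root and runs $\GraphCover$ locally, and $\GraphCover$ is nice by \Cref{obs:nice}; the earlier random-bucketing phases only cover additional edges by short edge-disjoint cycles and therefore preserve niceness.

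Next I would invoke \Cref{lem:distmodcover} with this $\cA$, taking $f_{\cA}(D)=D\log n$, $\congestion=\log^3 n$, and $r_{\cA}(D)=\widetilde{O}(n)$. The output is automatically an $\bigl(f_{\cA}(\widetilde{O}(\OptCCVal(G))),\widetilde{O}(\log^3 n)\bigr)=\bigl(\widetilde{O}(\OptCCVal(G)),\widetilde{O}(1)\bigr)$ cycle cover, covering every edge since $\cA$ is nice and each edge $e$ has its cycle $C_e$ entirely contained in some cluster $S_i$ of the $\OptCCVal(G)$-neighborhood cover $\mathcal{N}$.

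For the round complexity, the reduction first builds $\mathcal{N}$ (using \Cref{lem:distncimpl}) in $\widetilde{O}(\OptCCVal(G))=\widetilde{O}(n)$ rounds, and then runs $\cA$ in parallel on each $G[S_i]$. A single invocation on a cluster with $n_i\le n$ vertices costs $\widetilde{O}(n_i)=\widetilde{O}(n)$ rounds. To run all invocations simultaneously over the shared edge set, I would use the random-delay scheduling of Ghaffari~\cite{Ghaffari15}: each edge belongs to $\widetilde{O}(1)$ clusters, so the per-edge load across the simultaneous executions is $\widetilde{O}(n)$, yielding a total of $\widetilde{O}(n)$ rounds for the combined execution.

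The main obstacle I would expect is precisely this simultaneous-scheduling step: the existentially-optimal algorithm is not pure CONGEST pipelining (it involves an internal aggregation to a BFS root and local computation), so one must argue that the $\widetilde{O}(1)$-cluster-membership bound actually translates into a $\widetilde{O}(1)$-factor overhead when executing all cluster instances in parallel. This reduces to the standard observation that each subroutine of $\cA$ (BFS construction in $G[S_i]$, random-bucket BFS tree construction, per-subgraph pipelining to the root) has congestion $\widetilde{O}(1)$ on edges internal to the cluster, and since each edge lies in $\widetilde{O}(1)$ clusters, the scheduled execution still terminates in $\widetilde{O}(n)$ rounds, proving the claim.
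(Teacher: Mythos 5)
Your proposal is correct and takes the same route as the paper, which simply combines the $\widetilde{O}(n)$-round distributed construction of a $(D\log n,\log^3 n)$-cycle cover with the reduction of \Cref{lem:distmodcover} (neighborhood cover at scale $\OptCCVal$ plus simultaneous per-cluster invocations of the base algorithm). The paper treats the niceness of the base algorithm and the simultaneous-scheduling overhead implicitly; you make both explicit, and your accounting (niceness inherited from $\GraphCover$ via \Cref{obs:nice}, $\widetilde{O}(1)$ cluster membership per edge yielding only a polylogarithmic overhead on the $\widetilde{O}(n)$-round base) matches the argument the paper intends.
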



\subsection{Neighborhood Covers}\label{sec:distncover}
In this section we describe how to construct neighborhood cover in the 
\congest\ model. As far as we know, previous explicit constructions for 
neighborhood cover (such as \cite{awerbuch1996fast}) are in the \local\ model 
and use large messages. For the definition of $(k,t,q)$ neighborhood cover, see 
\Cref{def:neighborcover}.
For ease of presentation, we construct a slightly weaker notion where the 
diameter of each cluster is $O(k \cdot t \cdot \log n)$ rather than $O(k \cdot 
t)$ as in \Cref{def:neighborcover} (this weaker notion suffices for our 
construction). This construction is implicit in the recent 
spanner construction of \cite{elkin2017efficient}. 

\begin{lemma}\label{lem:distncimpl}
For every integer $t$, and every $n$-vertex graph $G=(V,E)$, one can construct 
in $O(k \cdot t \cdot \log n)$ rounds, an $(k,t,q)$ neighborhood cover with $k=2\log n$, $q=O(\log 
n)$ and the strong diameter of each cluster is $O(t \cdot k \cdot \log n)$, w.h.p. In addition, there are $\widetilde{O}(1)$ messages that go through each edge $e \in G$ over the entire execution of the algorithm.
\end{lemma}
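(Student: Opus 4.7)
The plan is to produce the neighborhood cover by taking the union of $k = 2\log n$ independent executions of a low-diameter clustering based on exponentially distributed shifts, in the spirit of the Miller--Peng--Xu (MPX) decomposition and its \congest\ implementation due to Elkin--Neiman. In a single iteration every vertex $v$ samples an independent shift $r_v$ drawn from an exponential distribution with rate $\beta = \Theta(1/t)$ (truncated at a value of order $t \log n$), and each vertex $u$ is assigned to the center $v^\star(u)$ that minimizes $\dist(u,v,G) - r_v$ over all $v$; the cluster associated with $v$ in this iteration is the set of all vertices assigned to it.

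The verification of the three properties of a $(k,t,q)$-neighborhood cover splits as follows. First, the standard MPX analysis shows that for any fixed $u$ the entire ball $\Gamma_t(u,G)$ lands in one cluster with at least constant probability; repeating $k = 2\log n$ independent iterations and taking a union bound over all $n$ vertices guarantees coverage w.h.p. Second, since the maximum shift is $O(t \log n)$ w.h.p., every cluster has strong diameter $O(t \log n)$ within the iteration that produced it, which is absorbed by the $O(t\cdot k\cdot \log n)$ bound of the lemma. Third, since each vertex is assigned to exactly one cluster per iteration, it belongs to at most $q = k = O(\log n)$ clusters in total.

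To implement a single iteration distributively, I would run a truncated Bellman--Ford process for $O(t \log n)$ rounds, where each vertex $u$ maintains a candidate pair (center ID, shifted distance $\dist(u,v^\star,G) - r_{v^\star}$) and forwards the pair to its neighbors only when its own candidate strictly improves. The Bellman--Ford tree rooted at $v^\star(u)$ witnesses a $u$-to-$v^\star(u)$ path whose length is bounded by the maximum shift, giving the strong-diameter guarantee; running the $k$ iterations sequentially yields the stated round complexity $O(k \cdot t \cdot \log n)$.

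The main obstacle is bounding the total number of messages passing through each edge by $\widetilde{O}(1)$, since a naive Bellman--Ford may use $\Theta(t \log n)$ messages per edge per iteration. Following the Elkin--Neiman idea, I would argue that in the exponential-shift process the expected number of strict improvements at any vertex is only $O(\log n)$ per iteration: each improvement corresponds to a center whose shifted distance beats all previously seen candidates at that vertex, and the heavy-tailed properties of minima of exponentials control this count. Consequently each edge carries only $O(\log n)$ useful messages per iteration, and $\widetilde{O}(1)$ messages over all $k = O(\log n)$ iterations, matching the claim of the lemma.
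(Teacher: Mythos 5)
Your proposal is substantially correct and rests on the same ingredients as the paper --- exponentially distributed shifts with rate $\beta=\Theta(1/t)$, $O(\log n)$ independent repetitions, a padding/gap argument for the coverage property, and a union bound --- but it diverges from the paper's proof at two concrete points.

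First, the cluster definition. You use the hard MPX assignment: each vertex joins exactly the one center minimizing $\dist(u,v)-r_v$, so every iteration is a partition and the overlap bound $q=O(\log n)$ falls out trivially. The paper instead defines ``soft'' clusters $S_{i,u}=\{w : m_u(w)\ge m(w)-1\}$, so a vertex can lie in several clusters within a single iteration, and the overlap is bounded as $O(\log n\cdot(cn)^{1/(kt)})$ per iteration via a tail bound (this is inherited from the Elkin--Neiman spanner analysis the paper is reusing). Both give the lemma's parameters; your hard partition is, if anything, slightly tighter and simpler for this particular statement.

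Second, and more substantively, the message-count argument. The paper explicitly states that the straightforward Elkin--Neiman implementation can pass $\widetilde{O}(t)$ messages per edge, and it introduces a specific modification --- letting each center $u$ start broadcasting in round $-\lceil r_u\rceil$, so that the first message to arrive at any vertex is already the eventual winner --- which makes each vertex forward exactly once per iteration. You instead keep the naive ``forward on strict improvement'' Bellman--Ford and argue that the number of improvements per vertex is $O(\log n)$. The conclusion is correct, but your stated justification (``the heavy-tailed properties of minima of exponentials control this count'') is not quite the right reason. The clean argument is a deterministic domination: if candidate $u_j$ at distance $d_j$ is a record with bias, i.e.\ $r_{u_j}-d_j>\max_{i<j}(r_{u_i}-d_i)$ with $d_1\le\dots\le d_j$, then in particular $r_{u_j}>r_{u_i}$ for all $i<j$, so $j$ is also an unbiased record. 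The number of unbiased records among $n$ i.i.d.\ continuous values is $H_n=O(\log n)$ in expectation and is concentrated, and this works for any continuous shift distribution, not just the exponential. If you patch your message-count argument with this domination-by-unbiased-records reasoning, your route is fully valid and yields the same $\widetilde{O}(1)$ bound --- it is simply a different mechanism from the paper's ``shifted start time'' trick, which gives a one-message-per-iteration bound for free without any randomness-of-arrival-order argument.
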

We first describe how using the spanner construction of 
\cite{elkin2017efficient}, we get a neighborhood cover that succeeds with 
constant probability. That is, we show that using the algorithm of \cite{elkin2017efficient} 
for constructing a $(k \cdot t)$-spanner, one can get in $O(k t \log n)$ rounds, a collection of 
subsets $\mathcal{S}=\{S_1,\ldots, S_n\}$ such that (I) the diameter of each 
$G[S_i]$ is $O(k \cdot t \cdot \log n)$, (II) w.h.p., each vertex belongs to 
$O(k\cdot n^{1/k})$ sets and (III) for each vertex $v$, there is a {\em 
constant}
probability that there exists $S_i$ that contains its entire $t$-neighborhood. 
Repeating this procedure for $O(\log n)$ many times yields the final cover.

We now describe the phase $i=\{1,\ldots, \Theta(\log n)\}$ where we construct a collection of $n$ sets $S_{i,u_1},\ldots, S_{i,u_n}$ that satisfy (I-III). 
Each vertex $u \in V$ samples a radius $r_u$ from the exponential 
distribution\footnote{Recall the exponential distribution with parameter 
$\beta$ where $f(x)=\beta\cdot e^{-\beta \cdot x}$ for $x\geq 0$ and $0$ 
otherwise.} with parameter $\beta=\ln(c\cdot n)/(3k \cdot t)$. Each vertex $u$ 
starts to broadcast\footnote{Having $u$ start at round $-r_u$ is not part of 
\cite{elkin2017efficient}. We introduced this modification to guarantee that 
the total of messages that are sent on each edge is at most $\widetilde{O}(1)$.}
its messages in round $-\lceil r_u \rceil$.
For a vertex $w$ that received (at least one) message in round $i$ for the \emph{first} time, let 
$m_{u_j}(w)=r_{u_j}-\dist(w,u_j,G)$ for every message originated from $u_j$ and 
received at $w$ in round $i$. Let $m(w)=\max_{u_j}m_{u_j}(w)$ and $u^* \in 
\Gamma(w)$ be such that $m(w)=m_{u^*}(w)$. Then, $w$ does the following: (i) 
store $m(w)$ and the neighbor $p_{u^*}(w)$ and (ii) sends the message $\langle 
w, 
m(w)-1 \rangle$ to all its neighbors $z \in \Gamma(w)\setminus \{p_{u_j}(w)\}$ 
in round $i+1$. 

For every vertex $u$, let $S_{i,u}=\{ w ~\mid~ m_u(w)\geq m(w)-1\}$. 
The final neighborhood cover is given by $\mathcal{S}=\bigcup_{i}\bigcup_{u} S_{i,u}$.

We show that the output collection of sets are indeed neighborhood cover. 
Fix a phase $i$, we claim the following about the output sets 
$S_{i,u_1},\ldots, S_{i,u_n}$.
\begin{claim}\label{lem:ncpropconstant}
\begin{description}
\item{(I)} Each $S_{i,u}$ is connected with diameter $O(k \cdot t \log n)$ with high probability.
\item{(II)} For every $i$, every vertex $w$ appears in $O(\log n \cdot (cn)^{1/(kt)})$ sets $S_{i,u}$ with high probability.
\item{(III)} For every vertex $w$, there exists $S_{i,u}$ such that $\Gamma_t(w)\subseteq S_{i,u}$, with constant probability.
\end{description}
\end{claim}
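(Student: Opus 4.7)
The proof handles the three parts separately for a fixed phase $i$; parts (I) and (II) then extend to ``all phases, all vertices'' by a union bound, while (III) amplifies from constant success probability to whp coverage over the $O(\log n)$ independent phases. For (I), I would first show each $S_{i,u}$ is closed under taking shortest $G$-paths from $u$: if $w\in S_{i,u}$ and $w'$ lies on some shortest $u$-$w$ path, then $w'\in S_{i,u}$. This follows from combining the equality $m_u(w')=m_u(w)+\dist(w,w')$ with the triangle-inequality bound $m_{u'}(w')\le m_{u'}(w)+\dist(w,w')$ and the defining condition $m_u(w)\ge m(w)-1$. Consequently $S_{i,u}$ is connected through $u$ and every $w\in S_{i,u}$ satisfies $\dist(u,w)\le r_u+1$ (using $m_w(w)=r_w\ge 0$ to conclude $m_u(w)\ge -1$). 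The tail bound $\Pr[r_u>3kt\log n]\le n^{-\Omega(\log n)}$ for $\beta=\ln(cn)/(3kt)$, combined with a union bound over the $n$ centers, then forces strong diameter $O(kt\log n)$ w.h.p.

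For (II), I would exploit the memoryless property of the exponential distribution to bound the number of centers within radius $1$ of the maximum. Fix $w$ and put $Y_u:=r_u-\dist(u,w)$; note that $w\in S_{i,u}\iff Y_u\ge M-1$ with $M=\max_v Y_v$. Conditioning on the identity $u^{\star}$ and value of the maximizer and invoking memorylessness, the remaining gaps $Y_{u^{\star}}-Y_{u'}$ for $u'\ne u^{\star}$ are individually stochastically lower-bounded by independent $\operatorname{Exp}(\beta)$ variables (a standard MPX-type calculation); this gives $\Pr[Y_{u'}\ge M-1\mid u^{\star}]\le e^{-\beta}$ for each non-maximal $u'$, so $\mathbb{E}|\{u:Y_u\ge M-1\}|\le 1+n\cdot e^{-\beta}=O((cn)^{1/(3kt)})$. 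A Chernoff bound promotes the expectation to an $O(\log n\cdot (cn)^{1/(kt)})$ high-probability bound, after which a union bound over $w\in V$ and over the $\Theta(\log n)$ phases concludes (II).

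For (III), the plan is to lower-bound by a constant the probability that the top gap $Y_{u^{\star}}-Y_{(2)}$ exceeds $2t+1$, and then to show that this event alone already forces $\Gamma_t(w)\subseteq S_{i,u^{\star}}$. The same memoryless argument underlying (II) gives $\Pr[Y_{u^{\star}}-Y_{(2)}\ge 2t+1]\ge e^{-\beta(2t+1)}$; substituting $\beta=\ln(cn)/(3kt)$ and $k=2\log n$ renders $\beta(2t+1)=O(1)$, i.e.\ a constant. On this event, for every $w'\in\Gamma_t(w)$ the primed priorities $Y'_u:=r_u-\dist(u,w')$ satisfy $|Y'_u-Y_u|\le\dist(w,w')\le t$, so $Y'_{u^{\star}}\ge Y_{u^{\star}}-t\ge Y_{(2)}+t+1\ge Y'_{u'}+1$ for every $u'\ne u^{\star}$. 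Therefore $m_{u^{\star}}(w')=Y'_{u^{\star}}\ge \max_{u'}Y'_{u'}-1=m(w')-1$, i.e.\ $w'\in S_{i,u^{\star}}$.

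The main technical obstacle is the memoryless / order-statistic calculation that underpins (II) and (III): the $Y_u$'s are independent but \emph{not} identically distributed (they differ by the shifts $-\dist(u,w)$), so one must verify that the standard ``gap to the max'' computation still goes through. The saving grace is that memorylessness of $\operatorname{Exp}(\beta)$ is preserved pointwise above any threshold, which lets one carry out the comparison between the maximizer and each competitor one-by-one, insensitive to the shift values; once this is pinned down, the polylogarithmic overlap in (II) and the constant success probability in (III) follow from routine calculations.
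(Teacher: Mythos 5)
Parts (I) and (III) are sound. Your (I) is the same argument as the paper (which attributes it to Claims 3 and 5 of \cite{elkin2017efficient}). Your (III), via the event that the top gap $Y_{u^\star}-Y_{(2)}$ exceeds $2t+1$ (which happens with probability at least $e^{-\beta(2t+1)}=\Omega(1)$ for the given $\beta$ and $k=2\log n$, and deterministically forces $\Gamma_t(w)\subseteq S_{i,u^\star}$), is actually a cleaner and more self-contained version of the paper's somewhat terse argument, which reasons about the interval $m(w)\in[m(y)-t-1,m(y)+t+1]$ without spelling out the union over $y$.

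Part (II), however, has a genuine error. You assert that stochastic domination $Y_{u^\star}-Y_{u'}\succeq\operatorname{Exp}(\beta)$ yields $\Pr[Y_{u'}\ge M-1\mid u^\star]\le e^{-\beta}$. Even granting the domination, it gives the \emph{opposite} quantity: $\Pr[Y_{u^\star}-Y_{u'}\le 1]\le 1-e^{-\beta}$, not $e^{-\beta}$. Plugging in $\beta=\ln(cn)/(3kt)$ with $k=2\log n$ makes $\beta=\Theta(1/t)$, so $1-e^{-\beta}\approx\beta$, and your expectation bound becomes $\mathbb{E}|\{u:Y_u\ge M-1\}|\lesssim 1+n\beta=\Theta(n/t)$, which is useless. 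The per-vertex probability $\Pr[Y_{u'}\ge M-1]$ is in fact typically of order $1/n$ (not $e^{-\beta}$), but that comes from $M$ being at height roughly $(\ln n)/\beta$, not from the gap-to-the-max being exponential; your intermediate lemma does not see this. The paper instead invokes Lemma~2 of \cite{elkin2017efficient}, which is a geometric tail bound on the \emph{count itself}, $\Pr[Q_w\ge z]\le(1-e^{-\beta})^{z-1}$, obtained by chaining the gaps between consecutive order statistics, not by summing per-vertex probabilities. Plugging $z-1=\Theta(\log n\cdot e^{\beta})$ into that bound directly gives $n^{-\Omega(1)}$; no Chernoff is needed. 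Your proposed Chernoff step also silently needs conditional independence of the indicators $\{Y_u\ge M-1\}$ given $M$ and $u^\star$, which you would have to argue explicitly. To fix (II) you should either prove the order-statistics tail bound $(1-e^{-\beta})^{z-1}$ or cite it, rather than attempting a per-vertex expectation argument.
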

\begin{proof}
For ease of notation let $S_{i,u}=S_u$. 
For every $u$ and $w$, let $p_u(w)$ be the neighbor of $w$ that lies on the shortest path from $w$ to $u$, from which $w$ received the message about $u$ (breaking ties based on IDs). 

To show that each set $S_u$ is connected, it is sufficient to show that if $w \in S_u$ then also $p_u(w) \in S_u$. 
The proof is as Claim 5 in \cite{elkin2017efficient}. In particular, since $w$ 
and $w'=p_u(w)$ are neighbors, it holds that $m(w) \geq m(w')-1$ and hence $w' 
\in S_u$. In addition, by Claim 3 in \cite{elkin2017efficient} (and plugging 
our value of $\beta$) it holds that for every $u$ w.h.p.\ $r_u \leq O(k \cdot t 
\log n)$. 

We proceed with Claim (II). 
For each $w$ and $u$, let $X_{w,u}\in \{0,1\}$ be the random variable indicating that $w \in S_u$.
Let $Q_w=\sum_{u}X_{w,u}$ be the random variable of the number of sets to which 
$w$ belongs.
In Lemma 2 of \cite{elkin2017efficient} they show that for any $1 \le z \le n$ 
it holds that
\begin{align*}
\Pr[Q_w \ge z] \le (1-e^{-\beta})^{z-1}.
\end{align*}
Plugging in $z=c'\log n \cdot (cn)^{1/(kt)}+1 = c'\log n \cdot e^{\beta} + 1$ 
we get
\begin{align*}
\Pr[Q_w \ge z] \le (1-e^{-\beta})^{z-1} \le (1-e^{-\beta})^{c'\log n \cdot 
e^{\beta}} \le 1/n^{c'}.
\end{align*}
Taking a union on all nodes $w$ in the graph claim (II) follows.
%

Finally, consider claim (III), and consider the more strict event in which the 
entire $t$-neighborhood of $w$ belongs to $S_{u^*}$ where $u^*$ is the vertex 
that attains $m(w)=m_{u^*}(w)$ (breaking ties based on IDs). Let $Y_w$ be an 
indicator variable for this event.
We show that $Y_w=0$ with probability of at most constant. 
Consider $u^*$ as above, we bound the probability that there is a vertex $y \in 
\Gamma_t(w)$ that does not belong to $S_{u^*}$. We therefore have:
\begin{equation}\label{eq:nconeside}
m(y)> m_{u^*}(y)-1=r_{u^*}-\dist(u^*,y,G)-1\geq r_{u^*}-\dist(u^*,w,G)-t-1=m(w)-t-1~,
\end{equation}
and in the same manner, $m(w)\geq m(y)-t-1$. Therefore, $m(w) \in 
[m(y)-t-1,m(y)+t+1]$. That is, given that $m(w) \ge m(y)-t-1$ the probability 
that also $m(w) \le m(y)+t+1$ is at most $1-e^{-3\beta \cdot t}=1-\ln (cn)/k 
\leq c'$ for $k=2\log n$. 
The claim follows.
\end{proof}

We are now ready to complete the proof of \Cref{lem:distncimpl}.
\begin{proof}
It is easy to see that each phase can be implemented in $O(k \cdot t \log n)$ rounds.
In the distributed implementation of \cite{elkin2017efficient} (Sec.\ 2.1.1), 
the algorithm might pass $\widetilde{O}(t)$ messages on a given edge. For our 
purposes (e.g., distributed construction of private tree) it is important that 
on each edge the algorithm sends a total of $\widetilde{O}(1)$ messages. By 
letting each node $u$ start at round $-r(u)$, we make sure that the message 
from the node $w$ that maximizes $r_w-\dist(w,u,G)$ arrives first to $u$ and 
hence there is no need to send any other messages from other centers on that 
edge. 

By \Cref{lem:ncpropconstant}, w.h.p., all subsets have small diameter and 
bounded overlap. In addition, for every vertex $w$, with constant probability, 
the entire $t$-neighborhood of $w$ is covered by some of the $S_{i,u}$ sets.
Since we repeat this process for $O(\log n)$ times, w.h.p., there exists a set 
that covers $\Gamma_t(w)$. By applying the union bounded overall sets, we get 
that w.h.p. all vertices are covered, the Lemma follows.
\end{proof}


\end{document}